\DeclareMathOperator{\ad}{ad}
\newcommand{\cL}{{\cal L}}
\newcommand{\cO}{\mathcal{O}}
\newcommand{\cM}{{\cal M}}
\newcommand{\cK}{\mathcal{K}}
\def\>{\rangle}
\def\<{\langle}
\newcommand{\be}{\begin{equation}}
\newcommand{\ee}{\end{equation}}
\newcommand{\bq}{\begin{eqnarray}}
\newcommand{\eq}{\end{eqnarray}}
\newcommand{\bea}{\begin{eqnarray}}
\newcommand{\eea}{\end{eqnarray}}
\newtheorem{theorem}{Theorem}
\newtheorem{lemma}[theorem]{Lemma}
\newtheorem{corollary}[theorem]{Corollary}
\newtheorem{definition}[theorem]{Definition}
\newenvironment{remark}{\vspace{1.5ex}\par\noindent{\it Remark:}}%
    {\hspace*{\fill}$\Box$\vspace{1.5ex}\par}
\begin{document}

\definecolor{james}{rgb}{1,.6,0}
\newcommand{\mjk}[1]{{\color{james} #1}}
\newcommand{\je}[1]{{\color{green} #1}}

\title{Reducing Circuit Depth in Lindblad Simulation 
via Step-Size Extrapolation}

\author{Pegah Mohammadipour}
\email{pegahmp@psu.edu}
\author{Xiantao Li}
\email{xiantao.li@psu.edu}
\affiliation{Department of Mathematics, \\ The Pennsylvania State University,\\  University Park, Pennsylvania 16802, USA}
\date{July 30, 2025}
\begin{abstract}
We study \emph{algorithmic error mitigation} via Richardson–style extrapolation for quantum simulations of open quantum systems modelled by the Lindblad equation.  
Focusing on two specific first-order quantum algorithms, we perform a backward-error analysis to obtain a step-size expansion of the density operator with explicit coefficient bounds.  
These bounds supply the necessary smoothness for analyzing Richardson extrapolation, allowing us to bound both the deterministic bias and the shot-noise variance that arise in post-processing.  
For a Lindblad evolution with generator bounded by $\ell$, our main theorem shows that an $n=\Omega \bigl(\log(1/\varepsilon)\bigr)$-point extrapolator reduces the maximum circuit depth needed for accuracy~$\varepsilon$ from polynomial $\mathcal{O} ((\ell T)^{2}/\varepsilon)$ to polylogarithmic $\mathcal{O} \bigl((\ell T)^2 (\log (\ell T)) \log^2(1/\varepsilon)\bigr)$ scaling, an exponential improvement in~$1/\varepsilon$, while keeping sampling complexity to the standard $1/\varepsilon^2$ level, thus extending such results for Hamiltonian simulations to Lindblad simulations. Several numerical experiments illustrate the practical viability of the method. 
\end{abstract}
\maketitle

\section{Introduction}
Simulating quantum systems is widely regarded as one of the central applications of quantum computing \cite{feynman1982simulating}. However, in practice, quantum systems are never perfectly isolated; they inevitably interact with their surrounding environment, leading to phenomena such as decoherence and dissipation. These effects necessitate the study and simulation of \textit{open quantum systems}, whose dynamics are no longer governed solely by the Schr\"odinger equation but instead require more general frameworks that account for environmental interactions.  The evolution of open quantum systems is typically described by \textit{quantum master equations} for the system’s density operator \(\rho(t)\), which in the Markovian regime, is governed by the \textit{Lindblad master equation}~\cite{lindblad1976generators, gorini1976completely},
\begin{equation}\label{eq:lindblad}
    \frac{d}{dt} \rho 
    = - {i}[H,\rho] + \sum_{j=1}^{J} \kappa_j \left( L_j \rho L_j^{\dagger} - \frac{1}{2} \{ L_j^{\dagger} L_j, \rho \} \right).
\end{equation}
 The Hermitian operator $H$ is the system Hamiltonian, which induces the unitary (coherent) part of the evolution. The operators $L_j$, called \textit{jump operators}, model different types of interactions between the system and its environment, such as energy loss, decoherence, or particle exchange. Each $\kappa_j \geq 0$ is a rate parameter that quantifies the strength of the corresponding interaction. Throughout the paper, we will absorb them into $L_j$ and thus set $\kappa_j=1$.

As interest in modeling open quantum systems grows, so too does the need for efficient quantum simulation methods. Numerous quantum algorithms have recently been proposed to simulate open-system dynamics with potential exponential speedups in system dimension \cite{KBG11,CL17,cleve2016efficient,li2022simulating,schlimgen2022quantum,chifang2023quantum,patel2023wave1,patel2023wave2,pocrnic2023quantum,Ding2024,kamakari2022digital,kato2025exponentiallyaccurateopenquantum,chen2025randomizedmethodsimulatinglindblad,yu2024exponentiallyreducedcircuitdepths}.
Despite recent progress, circuit depth remains the dominant practical bottleneck for simulating open quantum systems on current noisy intermediate-scale quantum (NISQ) devices.

To approximate the system's evolution over a total time $T$, most algorithms divide the dynamics into smaller intervals, i.e., time steps, and apply a quantum circuit for each one. As the number of time steps increases, so does the total number of quantum gates. This deepens the circuit and accumulates hardware imperfections such as gate errors and stochastic noise. As a result, the simulated quantum noise may no longer reflect the intended physical model but rather the limitations of the device itself.

This issue parallels that of Hamiltonian (closed-system) simulation, where the quantum circuit approximates the continuous evolution $e^{-iHt}$ using discrete gate sequences such as Trotter–Suzuki decompositions. For example, in a first-order Trotter scheme, the circuit depth required to simulate time $T$ with error no more than $\varepsilon$, besides a commutator factor, scales as $\mathcal{O}(T^2/\varepsilon)$. The parameter $\varepsilon$ here quantifies the desired \textit{simulation precision}, i.e., how closely the implemented circuit approximates the ideal evolution.

A promising near-term remedy is \textit{algorithmic error mitigation}, which has been proposed and empirically validated for Hamiltonian simulations ~\cite{Endo2019,rendon2024improved,watson2024exponentially,mohammadipour2025direct} using the Trotter algorithm as the base method.  The strategy is elegantly simple:
\begin{enumerate}
    \item Run a low-order algorithm (e.g., first-order Trotter) at several different coarse time step sizes \(\tau_1 < \tau_2 < \cdots < \tau_n\), keeping each quantum circuit shallow.
    \item Combine the resulting expectation values using classical post-processing, typically via Richardson extrapolation or polynomial fitting, to cancel leading-order discretization errors.
\end{enumerate}

The classical extrapolation step requires only \({\mathcal{O}}(n^3)\) floating-point operations. Yet, it effectively eliminates the dominant step-size error term without increasing the quantum depth, replacing the typical inverse-polynomial depth scaling \(\mathcal{O}(T^{1+1/p}/\varepsilon^{1/p})\) with a polylogarithmic dependence on the target precision \(\varepsilon\) \cite{watson2024exponentially}. This trade-off is especially attractive for NISQ devices, where coherence time is limited and quantum depth is at a premium.

Formally, for any observable $O$ we define
\begin{equation}\label{f-def}
f_\tau(T) ={\Tr} \bigl(\rho_{\tau}(T)\,O\bigr),    
\end{equation}
where $\rho_{\tau}(T)$ is the density operator produced by a step-size-$\tau$ integrator after time $T$. 

The dependence on the stepsize is often interpreted from an expansion,
\begin{equation}\label{f-expansion}
f_\tau(T)=f(0)+\alpha_1\tau^{p}+\alpha_2\tau^{p+1}+\cdots,
\end{equation}
where $f(0)$ is the true observable value in the zero-step-size limit, and the $\alpha_1\tau^{p}$ represents a leading error term of order $p.$

The postprocessing step constructs coefficients $\{\gamma_i\}_{i=1}^{n+1}$ such that
\begin{equation}\label{gamma_j2f}
    f_{\text{extrap}}(0) = \sum_{i=1}^{n+1} \gamma_i f_{\tau_i}(T),
\end{equation}
Each $f_{\tau_i}$ is an empirical mean of circuit-level measurements, while the $\gamma_i$ are determined, for instance, by an interpolation using a polynomial via solving a Vandermonde system or a least-squares fit (regression) to a polynomial of lower degree \(m \leq n\). In either case, the leading term in the error in \Cref{f-expansion} will be improved to $p+m-1$.  No additional quantum resources are required, yet the dominant step-size error can be suppressed by $m-1$ orders, thus dramatically reducing the maximum circuit depth needed to reach precision $\varepsilon$.  

Despite the intuitive appeal of Richardson–style extrapolation, its practical efficacy hinges on two subtle issues.   
First, the error expansion \eqref{f-expansion} is valid only when the observable map \(f(\tau)\) is sufficiently smooth in \(\tau\); the derivative bounds depend on the chosen algorithms and must be established rigorously for each algorithm.  
Second, each empirical estimate \(f(\tau_i)\) in the linear combination~\eqref{gamma_j2f} is subject to statistical fluctuations due to finite sampling (shot noise). As the number of grid points increases, the variance of the extrapolated result may also grow, potentially offsetting the benefits of bias reduction.

In this work, we perform a full bias–variance analysis for two Lindblad-simulation primitives and prove that extrapolation still delivers a net benefit even after accounting for statistical fluctuations.  
Our results extend the extrapolation method for Hamiltonian simulations to the inherently irreversible dynamics of open quantum systems and provide the complexity guarantees for error-mitigated Lindblad evolution.

\paragraph{Contributions.}
The main contributions of this paper are as follows:
\begin{enumerate}
    \item We analyze two specific quantum algorithms for simulating Lindblad dynamics \cite{cleve2016efficient,Ding2024} and derive rigorous upper bounds on both the deterministic bias and the statistical error introduced by Richardson–style extrapolation.
    \item To obtain an end‑to‑end complexity estimate for the post‑processing scheme, we carry out a backward‑error analysis that reveals how the approximate density operator produced by each algorithm depends on the time step~$\tau$. We prove that the extrapolated observables remain Gevrey-smooth in the step size, even under non-unitary Lindblad dynamics.  This result is the key ingredient in proving the circuit‑depth and sampling‑complexity guarantees of the extrapolation method. 
    \item We show that extrapolation reduces the maximum circuit depth for Lindblad simulation from $\frac{(lT)^{2}}{\varepsilon}$ to $(lT)^{2} (\log l) \log^2\bigl(1/\varepsilon\bigr)$, yielding an exponential improvement with respect to the target accuracy~$\varepsilon$.
\end{enumerate}

\paragraph{Related work.}  
Extrapolating physical error rates, usually modeled by Lindblad equations as well, has become a standard noise-mitigation technique on NISQ devices \cite{KTemme_SBravyi_JGambetta,Kandala_ZeroNoise_2019,endo2018practical,digital_zero_noise_extrap}. The same idea has also been applied to Trotterized Hamiltonian simulation \cite{Low2019,Vazquez2023,Endo2019,Rendon2024,watson2024exponentially,zhuk2024trotter}, where the extrapolation variable is the Trotter step size instead of the physical error rate. Recent works even combine the two, simultaneously extrapolating algorithmic and physical errors \cite{Endo2019,mohammadipour2025direct,hakkaku2025data}.

Our analysis is closest to that of \cite{Rendon2024,watson2024exponentially}, but we target open-system (Lindblad) dynamics. In particular, \cite[Lemma 4]{watson2024exponentially} establishes how the observable-estimation error scales with the Trotter step and, crucially, shows that Richardson or Chebyshev extrapolation can reduce circuit depth without worsening the dependence on the total simulation time $T$. We prove an analogous result for Lindblad evolutions via backward-error analysis. A key difference is that higher-order Suzuki–Trotter type formulas for closed-system simulation necessarily involve negative step sizes \cite{CL17}; such negative intervals violate complete positivity in Lindblad dynamics. Consequently, we restrict attention to low-order splitting formulas, making our method a genuine extrapolation to $\tau\to 0$ rather than an interpolation across positive and negative steps.

\paragraph{Outline and Notation.}
Throughout this paper, we will work with the density matrix and related matrices. Their magnitude will be measured by the trace norm (or Schatten $1$-norm), denoted by $\norm{A}$.  Also involved are super operators, acting on Hermitian matrices. 
Given a superoperator $\mathcal{M}$,  the induced norm is given by
\begin{equation}
\norm{\mathcal{M}} =\sup_{\norm{A}\leq 1} \norm{ \mathcal{M}(A)}.
\end{equation}

One important example is the Lindblad operator, 
\begin{equation}
    \mathcal{L} \rho := - i[H,\rho] + \sum_{j=1}^{J}  \left( L_j \rho L_j^{\dagger} - \frac{1}{2} \{ L_j^{\dagger} L_j, \rho \} \right).
\end{equation}
whose norm can be directly bounded by,
\begin{equation}\label{l-param}
    \|\mathcal{L}\| \leq \ell, \quad \ell:= 2 \norm{H} + 2 \sum_{j=1}^{J} \norm{L_j}^2.
\end{equation}

\section{Background} \label{sec:background}
The main purpose of this paper is to approximate the solution to the Lindblad master equation \eqref{eq:lindblad}. We consider the evolution of the density operator \(\rho(t)\) until the time \(T\) according to the Lindblad equation \eqref{eq:lindblad}. 
To approximate the solution of the Lindblad equation numerically, we rewrite \eqref{eq:lindblad} as the following initial-value problem:  
\begin{equation}
    \frac{\partial}{\partial t} \rho (t) = \mathcal{L} \rho(t), \quad \rho(0)=\rho_0, 
\end{equation}
with the exact solution:
\[
\rho(t)=e^{t \mathcal{L}}\rho_0.
\]

In practice, computing the full exponential of \(\mathcal{L}\) is often infeasible, so we discretize the continuous-time evolution by employing a one-step numerical scheme to approximate the evolution.

Let $\tau > 0$ be the time-step size, and define the discrete approximation to $e^{\tau \mathcal{L}}$ by a linear map $\mathcal{K}(\tau)$, which serves as the \textit{discrete solution operator}. That is,
$$
e^{\tau \mathcal{L}}  \approx \mathcal{K}(\tau).
$$

Given this approximation, the state at time $t_n = n\tau$ is computed iteratively as:
\begin{equation}
\rho_n = \mathcal{K}(\tau)^{n} \rho_0,
\end{equation}
where $\rho_n \approx \rho(t_n)$ is the numerical estimate of the density operator at time $t_n$, and $\mathcal{K}(\tau)^{n}$ denotes $n$ successive applications of $\mathcal{K}(\tau)$. The choice of $\mathcal{K}(\tau)$ depends on the specific simulation method used—for instance, it may arise from a Kraus operator expansion, a dilation-based approximation of the Lindblad dynamics, or a Trotter product formula.

We can expand the discrete solution operator $\mathcal{K}$ 
\begin{equation} \label{eq:kappa_expansion}
\mathcal{K}(\tau)=I+\tau \mathcal{M}_1+\tau^2 \mathcal{M}_2+\tau^3 \mathcal{M}_3+\cdots,
\end{equation}
where \(\mathcal{M}_1 = \mathcal{L}\). This is due to the fact that as $ \tau \to 0$, the approximate solution has the same generator as the exact dynamics. This is often referred to as consistency in ODE solvers \cite{deuflhard2012scientific}.

We observe that the numerical solution exhibits an expansion with powers of the step size. 
To improve the accuracy of observable estimates in time-discretized Lindblad simulations, we apply \textit{Richardson extrapolation} \cite{sidi2003practical} 
to the observables in \Cref{f-def}.
Specifically, we can evaluate the observable at multiple step sizes \(\tau_j = T / n_j\). A polynomial interpolation or fitting is given by, 
\begin{equation}\label{pn-extrap}
p_n(\tau) = \sum_{j=1}^{n+1} \gamma_j(\tau) f_{\tau_j}. 
\end{equation}

For example, when polynomial interpolation is employed, the coefficients \(\gamma_j\) are determined by
\(
\sum_{j=0}^n \gamma_j \tau_j^m = 0  \text{ for } m = p, \dots, p+n-1, \text{ and } \sum_{j=0}^n \gamma_j = 1.
\)
These algebraic conditions help to eliminate the $n$ leading error terms, and improve the error to \(\mathcal{O}(\tau^{p+n})\).
 Aside from polynomial interpolation, regression methods can be used as well, for which the analysis is similar \cite{Trefethen_approx}.

The performance of extrapolation heavily depends on the smoothness properties of $f$ in \Cref{f-def}, which in turn depends on the expansion of $\mathcal K(\tau)$ in \Cref{eq:kappa_expansion}. In the context of Suzuki-Trotter algorithms for Hamiltonian simulations, such expansion was obtained through an expansion of the shadow Hamiltonian \cite{rendon2024improved}.  In this paper, we use backward error analysis from numerical analysis \cite{deuflhard2012scientific} and identify the equations governing the coefficients in the error expansion. We summarize the results as follows.

\begin{lemma} \label{lemma:discretization_expansion}
Assume that the approximation method has an expansion in \Cref{eq:kappa_expansion}.
Let \(\rho_\tau (t)\) denote the corresponding approximate solution at time \(t\) with stepsize \(\tau\).  There exists a sequence of smooth functions \(\Gamma_k(t)\) such that
\begin{equation} \label{eq:rho_expansion}
    \rho_\tau(t) = \rho(t) + \tau \Gamma_1 (t) + \tau^2 \Gamma_2(t) + \dots ,
\end{equation}
where \(\rho(t)\) is the solution to the exact evolution \eqref{eq:lindblad}. For a discrete solution  \eqref{eq:kappa_expansion}, these coefficient matrices satisfy the initial value problems \(\Gamma_k(0) = 0\) for all \(k \geq 1\), and evolution equations,
\begin{equation} \label{eq:gamma_ode}
    \Gamma'_{k-1}(t) = \mathcal{L} \Gamma_{k-1}(t)  - \frac{\mathcal{L}^k \rho (t)}{k!} + \sum_{i=2}^k \left( \mathcal{M}_i \Gamma_{k-i}(t) -  \frac{\Gamma^{(i)}_{k-i} (t)}{i!} \right).
\end{equation}
\end{lemma}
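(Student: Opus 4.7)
The plan is to perform the standard backward-error matching: substitute the ansatz in \Cref{eq:rho_expansion} into the one-step recurrence
\[
\rho_\tau(t+\tau) \;=\; \mathcal{K}(\tau)\,\rho_\tau(t),
\]
which the discrete evolution satisfies by construction, and equate coefficients of like powers of $\tau$. Setting $\Gamma_0(t):=\rho(t)$ for bookkeeping convenience, both sides expand as formal power series in $\tau$: the left-hand side through the Taylor expansion of each $\Gamma_k(t+\tau)$ about $t$, and the right-hand side through \Cref{eq:kappa_expansion}. At order $\tau^{m}$ the contribution $\Gamma_m(t)$ appears on each side and cancels, leaving $\Gamma_{m-1}'(t)$ as the leading uncancelled term on the left; collecting what remains yields the recursive linear evolution equation for $\Gamma_{m-1}$ claimed in \Cref{eq:gamma_ode}, once the identity $\Gamma_0^{(i)}(t)=\mathcal{L}^{i}\rho(t)$ is used to extract the $-\mathcal{L}^{k}\rho(t)/k!$ term.

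I would then handle the initial conditions and smoothness in turn. For the former, I would use $\rho_\tau(0)=\rho_0=\rho(0)$, which holds for every $\tau$; matching coefficients of $\tau^{k}$ in the identity $\rho(0)=\rho(0)+\sum_{k\ge 1}\tau^{k}\Gamma_k(0)$ forces $\Gamma_k(0)=0$ for all $k\ge 1$. Smoothness follows by induction on $k$: given that $\Gamma_0,\dots,\Gamma_{k-2}$ are $C^{\infty}$, the right-hand side of \Cref{eq:gamma_ode} is a $C^{\infty}$ function of $t$, and the resulting inhomogeneous linear IVP with bounded generator $\mathcal{L}$ has a unique $C^{\infty}$ solution on $[0,T]$ by variation of constants.

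The main technical subtlety is that \Cref{eq:rho_expansion} is an asymptotic rather than a convergent expansion in $\tau$, so a rigorous statement requires a truncation-plus-remainder argument. After constructing the $\Gamma_k$ via the ODE recursion above, I would verify that the partial sum $\rho(t)+\sum_{k=1}^{N}\tau^{k}\Gamma_k(t)$ satisfies the one-step recurrence with local consistency error $\mathcal{O}(\tau^{N+2})$, and then propagate this to a global bound of order $\tau^{N+1}$ on $[0,T]$ using stability of the discrete map $\mathcal{K}(\tau)$ together with the norm bound in \Cref{l-param}. This propagation is where the bookkeeping of the higher-order derivative terms $\Gamma_{k-i}^{(i)}/i!$ is most delicate, and where the dependence of each coefficient on $\mathcal{L}$ must be tracked carefully so that the resulting estimates feed cleanly into the smoothness and derivative bounds needed for the subsequent extrapolation analysis.
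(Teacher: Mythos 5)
Your proposal takes essentially the same route as the paper: the evolution equations \eqref{eq:gamma_ode} are obtained there by exactly the same coefficient matching of $\rho_\tau(t+\tau)=\mathcal{K}(\tau)\,\rho_\tau(t)$ in powers of $\tau$, with the $\tau^m$ terms $\Gamma_m$ cancelling and the identity $\Gamma_0^{(i)}=\mathcal{L}^i\rho$ producing the $-\mathcal{L}^k\rho/k!$ contribution. The only difference is that the existence and validity of the asymptotic expansion itself (your truncation-plus-remainder and stability step) is not argued directly in the paper but delegated to Theorem 4.37 of Deuflhard–Bornemann, so your sketch is, if anything, more self-contained on that point.
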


The expansion of the numerical solution with respect to the stepsize follows from standard backward error analysis \cite{deuflhard2012scientific}. Our analysis considers the Lindblad dynamics and provides an explicit form of the governing equations for the coefficients in the error expansion \eqref{eq:rho_expansion}. For a detailed proof, refer to  \Cref{proof:coefficient_first_order}.

As concrete examples, we consider two approximation methods:  Kraus form approximation for the channel induced by the Lindblad equation \eqref{eq:lindblad} and a dilation-based approximation expressed in a Stinespring form. With these two specific approximations, we will derive the expansion of the corresponding $\cK(\tau)$ and therefore the error expansion in \Cref{eq:rho_expansion} by analyzing \Cref{eq:gamma_ode}. Importantly, we will use these bounds to analyze the improvement brought forth by the extrapolation methods.

\section{First-order Approximation by a Kraus Form} One specific construction of the discrete solution operator \(\mathcal{K}\) is to regard $e^{\tau \mathcal L}$  as a quantum channel and build the approximation in Kraus form \cite{cleve2016efficient}. We define Kraus operators as follows, 
\begin{equation} \label{eqs:Kraus_operators}
    F_0 = I + \left(-iH - \frac{1}{2}\sum_j L_j^\dagger L_j\right)\tau, \quad F_j = L_j\sqrt{\tau},    1\leq j\leq J,
\end{equation}
Then, the evolution from \(\rho_n\) to \(\rho_{n+1}\) can be expressed in the \textit{Kraus form}:
\begin{equation}\label{eqn:Kraus_form}
\rho_{n+1} =  \mathcal{K}[\rho_n] = F_0 \rho_n F_0^\dagger + F_1 \rho_n F_1^\dagger + \cdots + F_J \rho_n F_J^\dagger.
\end{equation}

One can verify that the scheme in \eqref{eqn:Kraus_form} constitutes a \emph{first-order} numerical method for simulating Lindblad dynamics \cite{cleve2016efficient}. That is, for sufficiently small step size $\tau$, the channel $\mathcal{K}$ approximates the exact evolution $e^{\tau \mathcal{L}}$ with an error that scales  $\ell^2 \tau^2$. More precisely, for a fixed initial state $\rho$, the local error satisfies
\begin{equation}
\left\| \mathcal{K}[\rho] - e^{\tau \mathcal{L}}[\rho] \right\| = \mathcal{O}(\ell^2\tau^2),
\end{equation}
which implies that the global error after $N = T/\tau$ steps accumulates as $\mathcal{O}(\tau)$. This convergence rate is characteristic of first-order integrators, such as the forward Euler method in classical numerical analysis. The construction in \eqref{eqs:Kraus_operators} can thus be interpreted as a completely positive trace-preserving (CPTP) analogue of the Euler method tailored for open quantum systems.

The algorithms in \cite{cleve2016efficient} incorporate a compression scheme to achieve higher-order accuracy. However, this approach relies on sophisticated logical gate constructions, which we do not employ here, as our focus is on near-term implementability. Moreover, due to the explicit form of the Kraus operators, each being at most linear in $\tau$, the expansion of $\mathcal{K}$ in \eqref{eq:kappa_expansion} and \eqref{eqn:Kraus_form} satisfies $\mathcal{M}_i = 0$ for all $i \geq 3$, and
\begin{equation}
\norm{\mathcal{M}_2} \leq \norm{F_0 - I}^2.
\end{equation}

We observe that this is effectively the norm of the non-Hermitian Hamiltonian. Furthermore, using \Cref{eqs:Kraus_operators}, we can refine the bound:
\begin{equation}\label{M2-bound}
\norm{\mathcal{M}_2 } \leq \left(\norm{H}  + \frac{1}{2} \sum_{j} \norm{L_j}^2 \right)^2 = \Theta(\ell^2).
\end{equation}

Using the Kraus form above, we derive bounds on the coefficients in the error expansion \eqref{eq:rho_expansion}. To obtain explicit estimates, we introduce a generating sequence that characterizes the magnitude of the terms $\Gamma_k$ and their derivatives appearing in \eqref{eq:rho_expansion}.

\begin{definition}[Generating sequence for Kraus Form] \label{def:c_s_kraus}
We define \(c_{i,j,k}\) as non-negative real numbers such that \( c_{0,0,k} = 0 \text{ for } k \geq 1\), \(c_{i,j,0}= \delta_{j,0} \ell^i\) for \(i, j \ge 0\), and the rest of the entries are generated from the recursion relations below:
\begin{widetext}
\begin{equation}
\begin{aligned}
   c_{0,j,k} = & \, \frac{B}{j}  \cdot c_{0,j-1,k-1} + \delta_{j,1} \cdot \frac{\ell^{k+1}}{(k+1)!}
   + \sum_{p=1}^{k -j} \frac{c_{p+1,j-1,k-p}}{j(p+1)!}, \, j=1,2,\dots,k \\
   c_{i,j,k} = & \, \ell \cdot c_{i-1,j,k} +  B \cdot c_{i-1,j,k-1}
   + \delta_{j,0} \cdot \frac{\ell^{i+k}}{(k+1)!} + \sum_{p=1}^{k-j} \frac{c_{i+p,j,k-p}}{(p+1)!},
   \, j=0,1,\dots,k-1 \\
   c_{i,k,k} = & \, \ell \cdot c_{i-1,k,k}.
\end{aligned}
\end{equation}
\end{widetext}
where $\ell$ is the bound on the Lindblad generator \Cref{l-param}, which without loss of generality is assumed to be larger than 1, and \(B := \norm{\cM_2} = \Theta( \ell^2)\) by \eqref{M2-bound}.
\end{definition}

The elements of the sequence can be generated recursively using the definition above. Moreover, with an inductive argument, we can establish the following bounds on the growth of the sequence.

\begin{lemma} \label{lemma:c_bound_kraus}
    Let \(c_{i,j,k} \geq 0\) be as in \textit{Definition} \ref{def:c_s_kraus}. Then for all \(i,j,k\ge0\),
    \begin{equation} \label{ineq:c_bound}
        c_{i,j,k}   \le   \frac{C^{i+k}_1 \, C^{k}_2}{j!}.
    \end{equation}
    where 
      \begin{equation}\label{def: C1}
      C_1 := \max \{ B, \ell (e+1), 1 \}, \text{ and  } \, C_2 \geq (e+1) \log (C_1)
  \end{equation}
\end{lemma}

The proof is presented in \Cref{app:c_bound}. This result is important because it provides explicit bounds on the terms \(\Gamma_k\) and their derivatives appearing in \eqref{eq:rho_expansion}, as will be demonstrated in the next lemma. In particular, this yields bounds in the Gevrey class, setting the stage for controlled extrapolation.

\begin{lemma} \label{lemma:gamma_bound}
The coefficients in the expansion  of the density operator \(\rho_{\tau}\) in \Cref{lemma:discretization_expansion}, approximated by the Kraus operator \eqref{eqn:Kraus_form}, satisfy the following bound: 
\begin{widetext}
\begin{equation}
    \bigl\|\Gamma_{k}^{(i)}(t)\bigr\|
    \le  
  P_{i,k}(t)
    =   
  \sum_{j=0}^k
  c_{i,j,k}\,t^{j} \, \text{ for } i\geq 0, k \geq 1, \quad  c_{0,0,k} = 0 , \quad c_{i,j,0}= \delta_{j,0} l^i,
\end{equation}
\end{widetext}

where $\Gamma_k^{(0)} = \Gamma_k$, $\Gamma_k^{(1)} = \Gamma'_k$, and the coefficients \(c_{i,j,k}\) are defined by the generating  sequence in \textit{Definition} \ref{def:c_s_kraus}.
\end{lemma}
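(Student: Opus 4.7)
I would proceed by strong induction on $k$, with an inner induction on the derivative order $i$ for each $k$. The base case $k=0$ is immediate: $\Gamma_0 \equiv \rho$ gives $\|\Gamma_0^{(i)}(t)\| = \|\mathcal{L}^i\rho(t)\| \le l^i = c_{i,0,0}$, matching $P_{i,0}(t) = l^i$. For the inductive step $k \ge 1$, the Kraus simplification $\mathcal{M}_i = 0$ for $i \ge 3$ reduces \Cref{eq:gamma_ode} (shifted by one in the index) to
$$\Gamma_k'(t) = \mathcal{L}\Gamma_k(t) + R_k(t), \quad R_k(t) := -\frac{\mathcal{L}^{k+1}\rho(t)}{(k+1)!} + \mathcal{M}_2\Gamma_{k-1}(t) - \sum_{l=2}^{k+1}\frac{\Gamma_{k+1-l}^{(l)}(t)}{l!},$$
with $\Gamma_k(0) = 0$. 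Every term in $R_k$ involves either $\rho$ or a $\Gamma_{k'}^{(\cdot)}$ with $k' < k$, so it is covered by the outer inductive hypothesis.

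For the inner base case $i = 0$, I would apply Duhamel's variation-of-constants formula and exploit the critical structural fact that $e^{s\mathcal{L}}$ is a Lindblad semigroup and therefore contractive in the trace norm, $\|e^{s\mathcal{L}}\|_{1\to 1} \le 1$. This yields $\|\Gamma_k(t)\| \le \int_0^t \|R_k(s)\|\,ds$, sidestepping the $e^{lt}$ prefactor that a Gr\"onwall estimate would produce. Plugging in the inductive bounds and integrating each monomial $s^{j-1} \mapsto t^j/j$ recovers precisely the $1/j$ weight in the $c_{0,j,k}$ branch of \Cref{def:c_s_kraus}. For the inner step $i \ge 1$, I would differentiate the ODE $(i-1)$ times; since $\mathcal{L}$ and $\mathcal{M}_2$ are $t$-independent and $\rho^{(m)} = \mathcal{L}^m\rho$, this produces
$$\Gamma_k^{(i)}(t) = \mathcal{L}\Gamma_k^{(i-1)}(t) - \frac{\mathcal{L}^{k+i}\rho(t)}{(k+1)!} + \mathcal{M}_2\Gamma_{k-1}^{(i-1)}(t) - \sum_{l=2}^{k+1}\frac{\Gamma_{k+1-l}^{(l+i-1)}(t)}{l!},$$
where the first term is now bounded by $l \cdot P_{i-1,k}(t)$ via the inner inductive hypothesis. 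Collecting the coefficient of $t^j$ on both sides reproduces the second branch of the $c_{i,j,k}$ recursion term-by-term: $l \cdot c_{i-1,j,k}$ from $\mathcal{L}\Gamma_k^{(i-1)}$, $\delta_{j,0}\,l^{k+i}/(k+1)!$ from the $\rho$-derivative, $B \cdot c_{i-1,j,k-1}$ from $\mathcal{M}_2\Gamma_{k-1}^{(i-1)}$, and $\sum_{p=1}^{k-j}c_{i+p,j,k-p}/(p+1)!$ from the remaining sum after the reindexing $p = l-1$.

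The main technical obstacle is careful index bookkeeping when matching polynomial coefficients to \Cref{def:c_s_kraus} --- in particular the boundary branches $j = 0$ and $j = k$ of the recursion, and the convention $c_{m,j,r} = 0$ for $j > r$ that is needed to close the induction (since $P_{i,k}$ has degree at most $k$). Conceptually, the key step --- and the reason this open-system analysis yields polynomial rather than exponentially growing bounds in $t$, unlike the closed-system Hamiltonian setting --- is the use of contractivity $\|e^{s\mathcal{L}}\|_{1\to 1} \le 1$ in the Duhamel step, which is precisely what places the coefficients $\Gamma_k^{(i)}$ in the Gevrey class demanded by the extrapolation analysis that follows.
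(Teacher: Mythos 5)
Your proposal is correct and follows essentially the same route as the paper's proof: the shifted backward-error relation with $\mathcal{M}_i=0$ for $i\ge 3$, Duhamel's formula combined with trace-norm contractivity of the CPTP semigroup $e^{s\mathcal{L}}$ for the $i=0$ case, a direct algebraic bound from the $(i-1)$-fold differentiated relation for $i\ge 1$, and a double induction whose term-by-term coefficient matching reproduces exactly the three branches of the generating-sequence recursion in Definition~\ref{def:c_s_kraus}. Your explicit outer-on-$k$/inner-on-$i$ induction structure (needed because the sum involves $\Gamma_{k-p}^{(i+p)}$ with higher derivative order but lower $k$) and the justification of contractivity are, if anything, slightly more careful than the paper's write-up.
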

 We provide a detailed proof of this lemma in \Cref{app:gamma_bound}. 
\\
In simulating the Lindblad dynamics \eqref{eq:lindblad}, one can assume without loss of generality that the final evolution time $T\leq 1$. Time integrations for longer time periods can be accomplished by rescaling the right hand side, i.e., $\cL$, accordingly. We note that when $T \geq 1$, one can apply a change of variables \(s= t/T\), and set 
 \begin{equation}
      \hat{\rho}(s) = \rho(s T) , \quad 0 \le s \le 1,
 \end{equation}
 which follows the Lindblad equation,
 \begin{widetext}
      \begin{equation}\label{rhorhohat}
     \dfrac{d}{d\tau}  \hat{\rho}(s) =  \dfrac{d}{dt} \rho(s T) = T \cL \rho(sT )  = T \cL   \hat{\rho}(s),  \,  \hat{\rho}(0)=\rho_0, \,  \hat{\rho}(1) = \rho(T).
 \end{equation}
 \end{widetext}
We then simulate $ \hat{\rho}$ up to time $s=1$, with the operator norm of the Lindblad operator $\ell$ multiplied by $T$.  Therefore, $\ell T$ will appear together in the simulation complexity. This scaling for a first-order method ($p=1$) is also consistent with that in \cite[Sec. 2.3]{watson2024exponentially}.

\begin{corollary} \label{cor:gamma_bound_1}
     For $t\leq 1$, the coefficients in the expansion  of the density operator \(\rho_{\tau}\) in \Cref{lemma:discretization_expansion}, approximated by the Kraus operator \eqref{eqn:Kraus_form}, satisfy the following bound for every \(i,k\in\mathbb N\),
  \begin{equation}
       \norm{\Gamma^{(i)}_k(t) }     \le  
     e\,C_1^{\,i+k}\,C_2^{\,k},
  \end{equation}
  where \( C_1 := \max \{ B, \ell (e+1), 1 \}, \text{ and  } \, C_2 \geq (e+1) \log (C_1)\).
\end{corollary}

For a proof of the corollary, we refer the reader to \Cref{proof:gamma_bound}. These bounds enable us to derive upper bounds on the expectation value of any bounded observable, as well as on all its derivatives with respect to the time step size \(\tau\), as follows.

The above corollary helps us find the radius of convergence of the series \(\rho_{\tau}\) in Lemma \ref{lemma:discretization_expansion}, i.e. \(\tau_{\max}\), which we introduce in the following theorem.

\begin{theorem} \label{thm:Gevrey-f}
    Let $O$ be a bounded observable, and 
  \(
      f(\tau)  :=  \tr   \bigl(\rho_\tau(T)\,O\bigr),
     \, 0 \le T \le 1, 
  \)
  where $\rho_\tau(T)$ is approximated using the Kraus form \eqref{eqn:Kraus_form}. Then $f(\tau)$ belongs to the Gevrey class on  $[0,\tau_{\max}]$ and satisfies the following bound,
    \begin{equation} \label{ineq:fk-bound}
        \abs{f^{(k)}(\tau)} \leq \sigma \, \nu^k \, k!, \quad \, \qquad \text{for } \tau \in [0,\tau_{\max}], \quad k \ge 1,
    \end{equation}
    where \(\sigma := 2e \|O\|\) and \(\nu := 2C_1 C_2\). If \(T \le 1\), then \cref{ineq:fk-bound} holds for any \(\tau_{\max} \le 1/(2\nu)\). When \(T \ge 1\), the scaling in \cref{rhorhohat} applies and the bound remains valid with \(\nu := 2C_1 C_2\, T^2 \log (\ell T)\).
\end{theorem}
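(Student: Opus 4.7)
The plan is to interpret the series in \Cref{lemma:discretization_expansion} as a genuinely convergent power series for $f(\tau)$—justified by the geometric bound from \Cref{cor:gamma_bound_1}—and then obtain the Gevrey estimate from a standard Cauchy-type bound on the termwise-differentiated series.

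Setting $\Gamma_0 := \rho$ and $a_k := \tr\!\bigl(\Gamma_k(T)\,O\bigr)$, I would first write $f(\tau) = \sum_{k\ge 0} a_k\,\tau^k$. By duality between the trace norm and the operator norm, combined with \Cref{cor:gamma_bound_1} at $i=0$, we have $|a_k| \le \|O\|\,\|\Gamma_k(T)\| \le e\,\|O\|\,(C_1 C_2)^k$, so the series converges absolutely for $|\tau|<1/(C_1 C_2)$ and $f$ is real-analytic in a neighborhood of $0$ in $[0,\tau_{\max}]$.

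The derivative estimate then follows from a direct computation. Term-by-term differentiation gives
\[
f^{(k)}(\tau) \;=\; \sum_{m\ge k} \frac{m!}{(m-k)!}\,a_m\,\tau^{m-k},
\]
into which I substitute the bound on $|a_m|$, factor out $(C_1 C_2)^k\,k!$, and apply the identity $\sum_{j\ge 0}\binom{j+k}{k}x^j = (1-x)^{-(k+1)}$ with $x = C_1 C_2\,\tau$. Imposing $\tau \le \tau_{\max} \le 1/(2\nu)$ with $\nu := 2C_1 C_2$ forces $C_1 C_2\,\tau \le 1/4$, so $(1-C_1 C_2\,\tau)^{-(k+1)} \le 2^{k+1}$, which collapses the computation to $|f^{(k)}(\tau)| \le \sigma\,\nu^k\,k!$ with $\sigma = 2e\|O\|$, matching the claimed constants.

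For the $T \ge 1$ regime I would invoke the rescaling $s = t/T$ from \eqref{rhorhohat}: the rescaled generator has norm $\widetilde{l} = T l$, and the rescaled Kraus parameters yield $\widetilde{B} = \Theta(T^2 l^2)$. Propagating these through \Cref{def:c_s_kraus} gives $\widetilde{C}_1 = \Theta(T^2 l^2)$ and $\widetilde{C}_2 = \Theta(\log T)$, so applying the $T \le 1$ bound to $\widetilde{\tau} = \tau/T$ and translating back recovers $\nu = 2 C_1 C_2\,T^2\log T$ together with the stated admissible window for $\tau$. The only non-routine point is upgrading the formal asymptotic expansion produced by backward-error analysis to a convergent Taylor series; once the geometric bounds of \Cref{cor:gamma_bound_1} are in hand, this is automatic, and everything else reduces to the Cauchy-type majorization above.
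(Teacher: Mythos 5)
Your proposal is correct and follows essentially the same route as the paper's proof: expand $f(\tau)=\sum_k a_k\tau^k$ with $a_k=\tr(\Gamma_k(T)O)$, bound $|a_k|\le e\|O\|(C_1C_2)^k$ via Corollary~\ref{cor:gamma_bound_1}, differentiate termwise, and control the resulting series under $\tau\le 1/(2\nu)$ (the paper uses $\binom{m+k}{k}\le 2^{m+k}$ plus a geometric series where you invoke the exact negative-binomial identity — an immaterial bookkeeping difference), then handle $T\ge 1$ by propagating the rescaled $l\to Tl$, $B\to T^2B$ through the constants exactly as the paper does.
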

The proof will be presented in \Cref{app:bias_bound}. 

The scaling in \cref{rhorhohat} changes $\ell$ to $\ell T$, and thus for large $T$, which is the regime of interest when circuit depth reduction is the main focus, the dominating term in \cref{def: C1} for $C_1$ is $\ell^2 T^2$. Therefore, from this point onward we assume \(T \gg 1\). The $\tau$–series for $\rho_\tau(t)$ is therefore real-analytic at $\tau=0$ with radius
\begin{equation} \label{eq:t_max_bound}
\tau_{\mathrm{max}} \;\le\; \frac{1}{2\nu}
\;=\;
\begin{cases}
\Theta\!\bigl((\ell^2 \log \ell)^{-1}\bigr), & T < 1,\\[2mm]
\Theta\!\bigl((\ell^2 T^2 \log (\ell T))^{-1}\bigr), & T \ge 1.
\end{cases}
\end{equation}
This behavior mirrors that of Hamiltonian simulations, where errors scale with powers of $\tau\|H\|$, the dimensionless product of the simulation time step-size $t$ and the Hamiltonian's operator norm $\|H\|$.

 To improve the approximation of the exact expectation value \(f(0)\), we apply polynomial extrapolation using function evaluations at a sequence of step sizes \(\tau_j \in (0,\, \tau_{\max})\), for \(j = 1, 2, \ldots, n+1\). While this strategy enhances accuracy near \(\tau = 0\), it inevitably introduces an extrapolation error. The Gevrey-class bound \eqref{ineq:fk-bound} established above, provides a precise tool for analyzing and controlling this extrapolation error.

\textit{Remark:}\label{lemma:bias_bound_richardson_equi}
Let \( f \) be as in \textit{Theorem~\ref{thm:Gevrey-f}}, and let \( p_n(\tau) \) denote the degree-\(n\) polynomial interpolant of \( f \) at the equidistant nodes \(\{ \tau_j = jh \}_{j=1}^{n+1} \), where \( h := \tau_{\max}/(n+1) \). Then, for any desired extrapolation accuracy \(\varepsilon > 0\), if
\begin{equation} \label{n-lower-equid}
    n   \ge   2 \log \left( \frac{\sigma \sqrt{2\pi}}{\varepsilon} \right),
\end{equation}
we obtain the approximation guarantee
\(
    \bigl|f(0) - p_n(0)\bigr| < \varepsilon,
\)
which is calculated using the classical exponential convergence of analytic functions under interpolation at points~\cite{Trefethen_approx}. This shows that extrapolation using equidistant nodes can, in principle, achieve \(\varepsilon\)-precision with only logarithmic scaling in the number of interpolation points.

However, as observed in prior work~\cite{watson2024exponentially, cai2021multi}, each term in the extrapolated estimator
\begin{equation} 
    p_n(0) = \sum_{j=1}^{n+1} \gamma_j f(\tau_j)
\end{equation}
must be independently sampled and is subject to statistical error. The magnitude of this noise is controlled by the quantity \(\|\gamma\|_1 := \sum_{j=1}^{n+1} |\gamma_j|\), which coincides with the \textit{Lebesgue constant} of the interpolation process~\cite{Trefethen_approx}. For equidistant nodes, the Lebesgue constant grows exponentially with \(n\) as
\begin{equation} \label{gamma1-equid}
    \|\gamma\|_1 = 2^{n+1} - 1,
\end{equation}
as shown in classical interpolation theory~\cite{brutman1996lebesgue}. Combining this with the lower bound \eqref{n-lower-equid} on \(n\), the resulting \textit{variance amplification} in the estimator, assuming fixed shot noise per circuit, scales as \(\mathcal{O}(1/\varepsilon^4)\), which renders the method impractical for high-precision settings. Thus, despite the favorable bias error from extrapolation, the large variance due to equispaced interpolation negates any advantage over direct estimation.

Successful extrapolation methods are typically based on Chebyshev nodes due to their superior stability and approximation properties. In our setting, we adapt this strategy by mapping Chebyshev nodes to lie within the interval \([0,\, \tau_{\max}]\), and proceed to analyze the resulting extrapolation scheme. To this end, we begin by stating two standard lemmas from polynomial approximation theory~\cite{Trefethen_approx}.

\begin{lemma}\label{lemma:bias_bound_richardson_cheby}
Let \( f \) be as in \textit{Theorem~\ref{thm:Gevrey-f}}, and let \( p_n(\tau) \) denote the degree-\(n\) polynomial that interpolates \( f \) at the Chebyshev nodes of the first kind mapped to \([0,\, \tau_{\max}]\):
\begin{equation}\label{tauk_cheb}
    \tau_k = \tau_{\max} \left( 1 - \cos \theta_k \right)/2,  \  k = 1, \dots, n+1, 
\end{equation}
where \(\theta_k = \frac{2k - 1}{2(n+1)} \pi\). Then, for \( n \geq \frac{1}{3 \log 2} \log\left( \sigma/\varepsilon \right) - \frac{2}{3} \), we have
\(
|f(0) - p_n(0)| < \varepsilon.
\)
\end{lemma}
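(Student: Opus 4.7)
The plan is to reduce the extrapolation error at $\tau=0$ to a classical Lagrange remainder estimate after affinely mapping the interpolation interval to $[-1,1]$, and then to exploit the Gevrey-type derivative bound from \Cref{thm:Gevrey-f}. Set $b := \tau_{\max}/(T^{2}\log T)$ and introduce the rescaling $\tau(x) := (b/2)(1+x)$, so that $\tau:[-1,1]\to[0,b]$ pushes the Chebyshev nodes $\tau_k$ in \eqref{tauk_cheb} onto $x_k = -\cos\theta_k$. With $\tilde f(x) := f(\tau(x))$, the chain rule combined with $|f^{(k)}(\tau)|\le \sigma\nu^{k}k!$ yields $|\tilde f^{(k)}(x)| \le \sigma\,(b\nu/2)^{k}\,k!$. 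Since $b\nu = 2C_{1}C_{2}\tau_{\max}$, the admissibility of $\tau_{\max}$ in the regime $T\gg 1$ fixed just before \Cref{lemma:bias_bound_richardson_equi} gives $b\nu \le 1/2$, so that
\[
   |\tilde f^{(k)}(x)| \le \sigma\,(1/4)^{k}\,k!, \qquad x\in[-1,1].
\]

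Next I would identify the node polynomial explicitly. Because $-\cos\theta_k = \cos(\pi-\theta_k)$, the set $\{x_k\}_{k=1}^{n+1}$ coincides with the zeros of $T_{n+1}$, so the monic node polynomial is $\omega_{n+1}(x) = T_{n+1}(x)/2^{n}$. Evaluating at the extrapolation point $x=-1$ (which corresponds to $\tau=0$) and using $T_{n+1}(-1)=(-1)^{n+1}$ gives $|\omega_{n+1}(-1)| = 1/2^{n}$. Plugging the transplanted Gevrey bound into the standard Lagrange remainder then yields
\[
   |f(0)-p_n(0)|
    = |\tilde f(-1)-\tilde p_n(-1)|
    \le \frac{\|\tilde f^{(n+1)}\|_{\infty,[-1,1]}}{(n+1)!}\cdot\frac{1}{2^{n}}
    \le \frac{\sigma}{4\cdot 8^{n}}.
\]
Demanding that this be less than $\varepsilon$ is equivalent to $8^{n} > \sigma/(4\varepsilon)$, which rearranges to the stated threshold $n\ge \frac{1}{3\log 2}\log(\sigma/\varepsilon)-\frac{2}{3}$.

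The only genuine subtlety is bookkeeping the rescaling constants: one must track how $\tau_{\max}$ propagates through \Cref{thm:Gevrey-f} and the $t\mapsto t/T$ change of variable in \eqref{rhorhohat}, and verify that the admissibility condition indeed produces $b\nu\le 1/2$ so the transplanted derivative constant is at most $1/4$, which is exactly what matches the factor $1/(4\cdot 8^{n})$ appearing in the final bound. Because we are extrapolating only to the boundary point $x=-1$ rather than to a point strictly outside $[-1,1]$, no Bernstein-ellipse argument is needed, and the elementary real-variable Lagrange-remainder form is sufficient; all subsequent steps are routine polynomial interpolation bookkeeping.
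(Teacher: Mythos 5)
Your proposal is correct and follows essentially the route the paper intends: it cites the standard Chebyshev interpolation machinery (map the nodes to $[-1,1]$, use the Lagrange remainder with node polynomial $T_{n+1}/2^{n}$) together with the Gevrey bound of Theorem~\ref{thm:Gevrey-f}, and your explicit bookkeeping ($b\nu\le 1/2$, $|T_{n+1}(-1)|/2^{n}=2^{-n}$, error $\le \sigma/(4\cdot 8^{n})$) reproduces exactly the stated threshold $n\ge \frac{1}{3\log 2}\log(\sigma/\varepsilon)-\frac{2}{3}$.
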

The result follows by applying the definition of Chebyshev nodes, mapping them to the desired interval, and using the interpolation error formula, all of which are available in \cite{Trefethen_approx}.

\begin{lemma}\cite[Theorem 15.2]{Trefethen_approx} \label{lemma:lagrange_bound_chebyshev}
 Let $\{\tau_j = (\tau_{\max} \left( 1 - \cos \theta_j \right)/2\}_{j=1}^{n},$ where \(\theta_j = \frac{2j - 1}{2n} \pi\), be the Chebyshev nodes of the first kind mapped to \([0,\, \tau_{\max}]\), then the sum of the absolute values of the Lagrange basis functions evaluated at zero, \(\gamma_j = L_j(0)\), is bounded logarithmically:
    \begin{equation}
        \sum_{k=1}^{n} |\gamma_k| = \cO(\log n).
    \end{equation}
\end{lemma}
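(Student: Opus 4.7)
The plan is to reduce the claim to the classical Lebesgue-constant bound for first-kind Chebyshev interpolation on $[-1,1]$, which is precisely the statement cited from Trefethen's \emph{Approximation Theory and Approximation Practice}. The reduction proceeds by an affine change of variables, exploiting the invariance of Lagrange basis polynomials under affine reparametrization of the independent variable.

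First, I would write down the affine bijection $\phi:[-1,1]\to\bigl[0,\tau_{\max}/(T^{2}\log T)\bigr]$ defined by
\[
\phi(x)\;=\;\frac{\tau_{\max}}{2T^{2}\log T}\,(1-x),
\]
and verify directly from \eqref{tauk_cheb} that $\phi(\cos\theta_k)=\tau_k$ for each $k$, while $\phi(1)=0$. The endpoint $x=1$, which is the evaluation point of interest, lies in the domain $[-1,1]$ even though it is not one of the nodes $\cos\theta_k$.

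Second, let $\ell_k$ denote the Lagrange basis for the standard first-kind Chebyshev nodes $\{\cos\theta_k\}\subset[-1,1]$, and $L_k$ the basis for the mapped nodes $\{\tau_k\}$. The composition $L_k\circ\phi$ is a polynomial of degree $n-1$ taking the value $\delta_{jk}$ at each $\cos\theta_j$, so by uniqueness of polynomial interpolation one has $L_k\circ\phi=\ell_k$ on all of $\mathbb{R}$. Specializing at $x=1$ yields $\gamma_k=L_k(0)=L_k(\phi(1))=\ell_k(1)$.

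Finally, I invoke the classical Lebesgue-constant estimate for Chebyshev nodes of the first kind,
\[
\Lambda_n\;:=\;\max_{x\in[-1,1]}\sum_{k=1}^{n}\lvert\ell_k(x)\rvert\;\le\;\tfrac{2}{\pi}\log n+\mathcal{O}(1),
\]
from which $\sum_{k=1}^{n}\lvert\gamma_k\rvert=\sum_{k=1}^{n}\lvert\ell_k(1)\rvert\le\Lambda_n=\mathcal{O}(\log n)$, giving the claimed bound. There is no real obstacle here: every step apart from the final Lebesgue-constant inequality (which is exactly the cited theorem) is a one-line direct verification. The only subtlety worth flagging in the write-up is that the evaluation point $\tau=0$ maps to the interval endpoint $x=1$ rather than to an interior point, so the bound is applied at a worst-case boundary value; this is exactly the regime where the logarithmic estimate is known to be sharp, so no further refinement is needed.
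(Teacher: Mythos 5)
Your proposal is correct and matches the paper's approach: the paper simply cites Trefethen's Theorem 15.2 for the Chebyshev Lebesgue constant, and your affine change of variables plus uniqueness of interpolation is exactly the routine reduction implicit in that citation (the evaluation point $\tau=0$ maps to $x=1\in[-1,1]$, so the classical bound applies).
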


The importance of the logarithmic bound of $\norm{\gamma}_1$ can be appreciated from the Hoeffding bound,
\begin{equation}\label{eq: hoeffding}
\displaystyle \mathbb{P}(|S_{N_S} - \mathbb{E}(S_{N_S})| \geq \varepsilon) \leq 
2 \exp(- \frac{\varepsilon^2 N_S}{2 \alpha^2 (\displaystyle \sum_{j=1}^{n+1}  \abs{\gamma_j} )^2}).    
\end{equation}

Equivalently, to ensure accuracy \(\varepsilon\) with failure probability at most \(\delta\), we require $N_S$ samples
\begin{equation}
N_S   \ge   \frac{2 \alpha^2 \|\gamma\|_1^2}{\varepsilon^2} \log \frac{2}{\delta}.
\end{equation}
While the exponential dependence in \Cref{gamma1-equid} will lead to higher sampling complexity, a polynomial or a logarithmic scaling with $n$ will retain a standard  $\cO(\varepsilon^{-2})$ complexity.

In any digital simulations of the Lindblad dynamics, the step size \(\{\tau_j\}\) 
must be chosen such that the number of steps is an integer.  This practical constraint means we cannot use the ideal Chebyshev nodes directly and must instead use nearby 'quantized' time steps, the effect of which must be taken into account.
Intuitively, to ensure that the perturbed nodes remain distinct and well-conditioned, the perturbation magnitude \(|\tau_j - \xi_j|\) must be smaller than the minimal spacing between Chebyshev nodes. From \textit{Theorem~\ref{thm:Gevrey-f}}, this spacing is of order \(\cO(\tau_{\max} / n^2)\), which provides a bound on the allowable perturbation size. The next lemma investigates the robustness of the Chebyshev node configuration by analyzing how perturbations affect the associated Lebesgue constant. Although the robustness of Chebyshev interpolation is well studied in numerical analysis~\cite{vianello2018stability}, we provide a self-contained proof in the Appendix tailored to this discrete, integer-aligned perturbation. Unlike the analysis in~\cite{Low2019}, our argument does not rely on a Taylor expansion that neglects higher-order terms.

\begin{lemma}{Effect of Perturbed Chebyshev Nodes on Variance Bound}.
\label{lem:integer-reciprocal}
Let \(n \ge 2\) and \(\tau > 0\), and define Chebyshev nodes on \([0, \tau]\) by
\begin{equation}
  \xi_j = \frac{\tau}{2} \left(1 - \cos\left(\frac{2j - 1}{2n + 2}\pi\right)\right),
  \qquad j = 1, \dots, n + 1.
\end{equation}
If the time parameter $\hat{T}>\pi^{2}\tau n^{2}$, define the perturbed nodes by
\[
  k_j:=\bigl\lceil  \hat{T}/\xi_j\bigr\rceil,\qquad
  \tau_j:=\frac{\hat{T}}{k_j},\qquad j=1,\dots,n+1.
\]
Then the following statements hold:
\begin{enumerate}
  \item[\textup{(i)}] The perturbed nodes are strictly ordered: \(0<\tau_{1}<\tau_2<\dots<\tau_{n+1}<\tau\).
  \item[\textup{(ii)}] \(k_1>k_2>\dots>k_{n+1}\) are pairwise distinct positive integers.
  \item[\textup{(iii)}] 
        There exist constants \(C_1,C_2>0\) (independent of \(n\) and \(\hat{T}\))
        such that
\begin{equation}
     \sum_{j=1}^{n+1}\abs{\gamma_j}
          \leq
          C n^{4/(\pi^2-4)} \log n.
\end{equation}
\end{enumerate}
Furthermore, if $  \hat{T} >2 \tau\,n^{2} \log n$, then $ \sum_{j=1}^{n+1}\abs{\gamma_j}
          = \cO(\log n). $
\end{lemma}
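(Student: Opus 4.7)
The plan is to handle parts (i) and (ii) with a single spacing calculation, and to obtain (iii) by comparing the Lagrange weights at the perturbed nodes $\tau_j$ to those at the exact Chebyshev nodes $\xi_j$, for which an $\mathcal{O}(\log n)$ bound is already available from \Cref{lemma:lagrange_bound_chebyshev}.

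Using $\xi_j = \tau\sin^{2}(\theta_j/2)$ with $\theta_j = (2j-1)\pi/(2n+2)$ and the identity $\sin^{2}a - \sin^{2}b = \sin(a+b)\sin(a-b)$, the spacing admits the closed form
\begin{equation*}
\xi_{j+1} - \xi_j \;=\; \tau\,\sin\!\Bigl(\tfrac{j\pi}{n+1}\Bigr)\,\sin\!\Bigl(\tfrac{\pi}{2n+2}\Bigr).
\end{equation*}
For (i)--(ii), I would lower-bound $\hat{T}/\xi_j - \hat{T}/\xi_{j+1} = \hat{T}(\xi_{j+1}-\xi_j)/(\xi_j\xi_{j+1})$, noting that this quantity is smallest at $j=n$ (where $\xi_j\xi_{j+1}\approx \tau^{2}$), producing a lower bound of order $\hat{T}\pi^{2}/(2n^{2}\tau)$. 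The hypothesis $\hat{T}>\pi^{2}\tau n^{2}$ pushes this above $1$ with room to spare, so the ceilings $k_j$ are strictly decreasing positive integers, giving (ii); then $\tau_j = \hat{T}/k_j$ is strictly increasing with $\tau_{n+1}\le\xi_{n+1}<\tau$, giving (i).

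For (iii), I would first record the one-sided perturbation bound that drives the whole argument,
\begin{equation*}
0 \;\le\; \delta_j \;:=\; \xi_j - \tau_j \;=\; \xi_j - \frac{\hat{T}}{\lceil \hat{T}/\xi_j\rceil} \;\le\; \frac{\xi_j^{2}}{\hat{T}+\xi_j} \;\le\; \frac{\xi_j^{2}}{\hat{T}}.
\end{equation*}
Writing $L_j^{\xi}$ for the Lagrange basis at the Chebyshev nodes and $L_j$ for the basis at $\tau_j$, I would decompose
\begin{equation*}
\frac{|L_j(0)|}{|L_j^{\xi}(0)|} \;=\; \prod_{k\ne j}\frac{\tau_k}{\xi_k}\cdot\frac{|\xi_k-\xi_j|}{|\tau_k-\tau_j|},
\end{equation*}
observe that the first factor never exceeds $1$, and estimate each of the remaining ratios by $\bigl(1 - \xi_{\max(j,k)}^{2}/(\hat{T}|\xi_k-\xi_j|)\bigr)^{-1}$, which follows from $\tau_k-\tau_j = (\xi_k-\xi_j)-(\delta_k-\delta_j)$ together with the one-sided perturbation bound. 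Passing to logarithms, summing over $k\ne j$, and substituting the closed-form spacing reduces the task to a harmonic-type sum
\begin{equation*}
S_j \;:=\; \sum_{k\ne j}\frac{\xi_{\max(j,k)}^{2}}{\hat{T}\,|\xi_k-\xi_j|},
\end{equation*}
which, after plugging in the Chebyshev positions, is dominated by $j$ near $n+1$ and by nearest-neighbor pairs. Under the baseline hypothesis $\hat{T}>\pi^{2}\tau n^{2}$, the worst per-term factor saturates at $\pi^{2}/(\pi^{2}-4)$ at the $j=n+1$ neighbor, and careful bookkeeping of the $\Theta(\log n)$ indices $k$ that contribute a factor bounded away from $1$ yields the polynomial amplification $n^{4/(\pi^{2}-4)}$; under the strengthened hypothesis $\hat{T}>2\tau n^{2}\log n$, that per-term factor shrinks by an additional $\log n$, the count of nontrivial contributions collapses to $O(1)$, and the Lebesgue constant reduces to the pure Chebyshev value $\mathcal{O}(\log n)$.

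The main obstacle is this last harmonic-sum estimate. A term-by-term triangle-inequality bound gives an exponential-in-$n$ factor, so one must exploit two features at once: the one-sidedness of $\delta_j\ge 0$, which prevents adverse cancellation in $(\xi_k-\xi_j)-(\delta_k-\delta_j)$ when $k$ and $j$ lie on the same side of $[0,\tau]$, and the Chebyshev clustering near the endpoints, which ensures that only the $\Theta(\log n)$ pairs closest to the right endpoint of the node set produce ratios bounded away from $1$. Extracting the precise exponent $4/(\pi^{2}-4)$ rests on matching the sharp bound $\xi_{n+1}^{2}/(\hat{T}\,|\xi_{n+1}-\xi_n|)\le 4/\pi^{2}$, which is tight precisely when the hypothesis degenerates to $\hat{T}=\pi^{2}\tau n^{2}$.
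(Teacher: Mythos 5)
Your plan follows essentially the same route as the paper's proof: the same spacing argument for (i)--(ii), the same one-sided perturbation bound \(0\le \xi_j-\tau_j\le \xi_j^{2}/\hat T\), the same factorization of the perturbed weight into a node-ratio product (at most \(1\)) times gap ratios \(\bigl(1-x_{jk}\bigr)^{-1}\) with \(x_{jk}\le \mathrm{const}/|j-k|\), a log-plus-harmonic-sum estimate yielding the \(n^{4/(\pi^2-4)}\) amplification on top of the \(\mathcal{O}(\log n)\) Chebyshev Lebesgue constant, and the extra \(1/\log n\) gain under the stronger threshold. One bookkeeping caveat that does not affect correctness: the exponent arises from summing the full harmonic series \(\sum_{m\le n}1/m\) over all pair distances (as in the paper), not from \(\Theta(\log n)\) exceptional pairs or a tight bound at the last neighbor --- indeed \(\xi_{n+1}^{2}/\bigl(\hat T\,(\xi_{n+1}-\xi_n)\bigr)\approx 2/\pi^{4}\) at the threshold, well below \(4/\pi^{2}\) --- yet your stated per-term inequality combined with the minimum-gap bound already delivers the claimed \(C\,n^{4/(\pi^2-4)}\log n\).
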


We provide the proof in \Cref{proof:integer-reciprocal}. In view of \Cref{eq:t_max_bound}, we set \(\hat T = 1\) and
\(
  \tau := \tau_{\max}.
\)
Under the lemma’s threshold \(\hat T>\pi^{2}\tau n^{2}\), this is equivalent to
\[
  (\ell T)^2 \log (\ell T)   >   \pi^{2}\,n^{2}.
\]
If we instead enforce the stronger condition \(\hat T>2\,\tau\,n^{2}\log n\), then
\begin{equation}\label{T-vs-n-cond}
  (\ell T)^2 \log (\ell T)   >   2 \,n^{2}\log n.
\end{equation}
In what follows, we adopt this latter scaling. Combining it with the target precision \(\varepsilon\) (and the decay prescribed by Theorem~\ref{thm:Gevrey-f}), and using Lambert W function, we find that it is sufficient to choose
\begin{equation}\label{T>T*}
  T   \ge   \frac{ \sqrt{2} \log(1/\varepsilon)\,\sqrt{\log\log(1/\varepsilon)} }{\,\ell\,\sqrt{\log \ell}\,} .
\end{equation}

Combining the Gevrey-class smoothness from \cref{thm:Gevrey-f}, with the variance control for perturbed Chebyshev nodes from \cref{lem:integer-reciprocal}, and the Hoeffding bound \Cref{eq: hoeffding}, we arrive at the following end-to-end complexity guarantee:

\begin{theorem}\label{main-thm-kraus}
Given $\varepsilon>0$, assume that the simulation time satisfies \cref{T>T*}. 
     The Richardson extrapolation $p_n(\tau)$ of \(f(\tau)\) using $n+1$ Chebyshev nodes, with each expectation value $f(\tau_j) $, generated from the first-order approximation \eqref{eqn:Kraus_form} and  sampled with $N_S$ shots, is guaranteed to produce an estimator 
     \begin{equation}
         \mathbb P\left( \abs{p_n(0) - f(0) } < \varepsilon \right) > 1 - \delta,
     \end{equation}
     provided that $n = \Omega(\log \frac{1}{\varepsilon})$, $T= \Omega\left( \log \frac{1}{\varepsilon} \sqrt{\log \log \frac{1}{\varepsilon}} \right)$, and $N_S= \Omega \left( \frac{1}{\varepsilon^2}\log \frac{1}{\delta} \right). $
     In addition, to prepare $\rho(T)$ (i.e., $ {\hat{\rho}}(1) $ from \eqref{rhorhohat}), the maximum circuit depth of is $d_{\max} = \frac{1}{\tau_1}= \frac{T^2}{\tau_{\max}} n^2 = (\ell T)^2 (\log (\ell T)) \log^2(1/\varepsilon). $
\end{theorem}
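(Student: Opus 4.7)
The plan is to split the estimation error into a deterministic bias and a statistical shot-noise part, bound each by $\varepsilon/2$, and then read off the circuit depth from the smallest Chebyshev node. Writing $\hat f(\tau_j)$ for the empirical mean of $N_S$ shots and $\tilde p_n(\tau)=\sum_j \gamma_j f(\tau_j)$ for the noiseless extrapolant, the triangle inequality gives
\begin{equation}
|p_n(0)-f(0)| \;\le\; \underbrace{|\tilde p_n(0)-f(0)|}_{\text{bias}} \;+\; \underbrace{\Bigl|\sum_{j=1}^{n+1}\gamma_j\bigl(\hat f(\tau_j)-f(\tau_j)\bigr)\Bigr|}_{\text{statistical}}.
\end{equation}

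For the bias term, I would invoke Theorem~\ref{thm:Gevrey-f}, which places $f$ in a Gevrey class on $[0,\tau_{\max}/(T^{2}\log T)]$, together with Lemma~\ref{lemma:bias_bound_richardson_cheby}, so that Chebyshev extrapolation with $n = \Omega(\log(1/\varepsilon))$ nodes drives the bias below $\varepsilon/2$. Since a quantum implementation requires the number of time steps to be an integer, the ideal Chebyshev nodes must be replaced by the quantized nodes $\tau_j = 1/k_j$ of Lemma~\ref{lem:integer-reciprocal}. That lemma's separation hypothesis $\hat T > 2\tau n^{2}\log n$, applied with $\hat T = 1$ and interval length $\tau = \tau_{\max}/(T^{2}\log T)$, translates into the lower bound~\eqref{T>T*} on $T$ and guarantees both that the perturbed nodes stay well-separated and that the extra bias from quantization is dominated by the $\varepsilon/2$ budget already allocated.

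For the statistical term, Lemma~\ref{lem:integer-reciprocal} further yields $\|\gamma\|_1 = \mathcal{O}(\log n)$ under the same threshold on $T$. Applying the Hoeffding bound~\eqref{eq: hoeffding},
\begin{equation}
\mathbb P\Bigl(\Bigl|\sum_{j}\gamma_j\bigl(\hat f(\tau_j)-f(\tau_j)\bigr)\Bigr| \ge \varepsilon/2\Bigr) \;\le\; 2\exp\Bigl(-\frac{N_S\,\varepsilon^{2}}{8\alpha^{2}\|\gamma\|_1^{2}}\Bigr),
\end{equation}
so the shot-noise failure probability drops below $\delta$ once $N_S = \Omega\bigl(\|\gamma\|_1^{2}\varepsilon^{-2}\log(1/\delta)\bigr) = \Omega\bigl(\varepsilon^{-2}\log(1/\delta)\bigr)$; the double-logarithmic factor $\log^{2}\log(1/\varepsilon)$ is absorbed into the $\Omega(\cdot)$. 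A union bound over the bias and statistical failure events yields the stated high-probability guarantee.

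The circuit-depth claim then follows by direct computation. The smallest Chebyshev node is
\begin{equation}
\tau_1 \;=\; \frac{\tau_{\max}}{2T^{2}\log T}\Bigl(1-\cos\frac{\pi}{2(n+1)}\Bigr) \;=\; \Theta\Bigl(\frac{\tau_{\max}}{T^{2}n^{2}\log T}\Bigr),
\end{equation}
so for the rescaled dynamics on $[0,1]$, the step count satisfies $d_{\max} = 1/\tau_1 = \mathcal{O}(T^{2}n^{2}/\tau_{\max})$; substituting $n = \Omega(\log(1/\varepsilon))$ and $\tau_{\max}$ from Theorem~\ref{thm:Gevrey-f} gives the claimed $d_{\max} = \mathcal{O}\bigl((lT)^{2}\log l\,\log^{2}(1/\varepsilon)\bigr)$. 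The principal technical difficulty is not any single estimate but the coherent bookkeeping: one must verify that the Gevrey radius inherited from $\nu$, the separation condition $\hat T>2\tau n^{2}\log n$ of Lemma~\ref{lem:integer-reciprocal}, and the Hoeffding sampling requirement are all simultaneously satisfiable with the same choices of $n$, $T$, and $N_S$, and in particular that the integer-rounding of step sizes does not push $\|\gamma\|_1$ out of the logarithmic regime that the variance analysis relies on.
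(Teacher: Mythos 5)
Your proposal is correct and follows essentially the same route as the paper, which obtains the theorem by combining the Gevrey-class bound of Theorem~\ref{thm:Gevrey-f} with the Chebyshev bias estimate of Lemma~\ref{lemma:bias_bound_richardson_cheby}, the perturbed-node Lebesgue-constant control of Lemma~\ref{lem:integer-reciprocal} under the threshold that yields \cref{T>T*}, and the Hoeffding bound \eqref{eq: hoeffding}, with the depth read off from the smallest node $\tau_1=\Theta(\tau_{\max}/(T^{2}n^{2}\log T))$. Your bias--variance split with an $\varepsilon/2$ budget for each term and a union bound is exactly the intended bookkeeping, so no substantive difference remains.
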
 

In contrast, the circuit depth without extrapolation, in order for the error $\ell^2 T \tau $ to be less than $\varepsilon  $, has to be $T/\tau = \cO( (\ell T)^2/\varepsilon)$. Therefore, our bound is exponentially better in $\varepsilon$ and remains the same in $T$.

\begin{remark}
 The results above can be readily extended to polynomial regression using Chebyshev nodes. In this case, the coefficients $\gamma_j$ can be expressed through Chebyshev polynomials as well. 
 Further, the number of nodes scales logarithmically with $\varepsilon$ to control the bias to be within $\varepsilon$ \cite{Trefethen_approx,mohammadipour2025direct}.  For the variance, due to the fact that $\tau=0$ is the end point of the Chebyshev interval, the Chebyshev polynomials are uniformly bounded there, and  one can show that 
 \begin{equation}
  \sum_{j=1}^{n+1} \abs{\gamma_j} = \cO(\log n). 
 \end{equation}
\end{remark}

\section{First-order Approximation by Hamiltonian Dilation} 

Another type of algorithm is based on a dilation with 
a dilated Hamiltonian \cite{cleve2016efficient,Ding2024}.  The main idea is to simulate the non-unitary Lindblad dynamics by embedding them into a unitary evolution on a larger Hilbert space. To this end, one introduces an ancilla register with Hilbert space  
\begin{equation}
\mathcal{H}_A = \mathrm{span}\{\ket{0}, \ket{1}, \ket{2}, \dots, \ket{J}\},
\end{equation}
which acts as a control register. A unitary channel is then constructed on the composite space \(\mathcal{H}_A \otimes \mathcal{H}_S\), where \(\mathcal{H}_S\) is the system Hilbert space. This unitary approximates the dissipative evolution of the open system when followed by a partial trace over the ancilla register. This operation is repeated at every time step. Specifically, let \(\epsilon = \sqrt{\tau}\), and define the block-structured Hamiltonian as follows:
\begin{equation}\label{eq:dilated-H}
  H := \epsilon^{2} H_0 + \epsilon H_1,
\end{equation}
with
\begin{equation}\label{eq:H0H1-def}
\begin{aligned}
  H_0 &= \ketbra{0}{0} \otimes H_S, \\
  H_1 &= \sum_{j=1}^{J} \left( \ketbra{j}{0} \otimes L_j + \ketbra{0}{j} \otimes L_j^{\dagger} \right).
\end{aligned}
\end{equation}

Then, one step of the Lindblad evolution can be approximated by
\begin{equation}\label{dilatedH}
  e^{\tau \mathcal{L}} \rho   \approx   \mathcal{K}(\tau)\rho 
    :=   \tr_A \left( e^{-iH} \left( \ketbra{0} \otimes \rho \right) e^{iH} \right).
\end{equation}
This operation is repeated at each time step. By \cref{rhorhohat}, for longer simulation times, the Lindblad operator norm $\ell$ is multiplied by $T$. Thus, we can assume $\ell >= J+1$ in this section.

Unlike the Kraus-form approximation in \cref{eqs:Kraus_operators}, this Stinespring-type dilation does not require amplitude amplification. On the other hand, the associated error expansion \Cref{eq:kappa_expansion}, which is needed for the backward error analysis, is more involved. In \Cref{diH-expansion}, we carry out this analysis, and our main result is summarized below.

\begin{theorem}[Local Error Expansion]
    The dilated Hamiltonian approach \eqref{dilatedH} produces an approximation as a reduced density operator $\rho_\text{R}$. It can be expanded as,
\begin{equation}
   \rho_\text{R} =  \sum_{k\geq 0}\ \tau^{k}\rho^{(2k)}_\text{R}.
\end{equation}
In particular,  $\rho^{(0)}_\text{R}= \rho$ and $\rho^{(2)}_\text{R}= \cL \rho$.
Furthermore for every integer \(k\ge2 \)
\begin{equation}\label{eq:even-bound}
    \norm{\rho^{(2k)}_\text{R}}
        \le  (J+1) \frac{\ell^{  k}}{k!}.
\end{equation}
\end{theorem}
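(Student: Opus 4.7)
The plan is to show that $\rho_{\mathrm{R}}(\tau)$ is an even entire function of $\epsilon:=\sqrt{\tau}$, compute the first two Taylor coefficients by direct expansion, and then bound the higher ones through a combinatorial Dyson-type expansion that exploits the block structure of $H_{0}$ and $H_{1}$.

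\emph{Expansion and leading coefficients.} Since $H(\epsilon)=\epsilon^{2}H_{0}+\epsilon H_{1}$ is polynomial in $\epsilon$, the operator-valued function $g(\epsilon):=\mathrm{tr}_{A}\bigl(e^{-iH(\epsilon)}\rho_{0}\,e^{iH(\epsilon)}\bigr)$ is entire in $\epsilon$. I would then establish evenness via the ancilla reflection $V:=\ketbra{0}{0}-\sum_{j\ge 1}\ketbra{j}{j}$, which satisfies $VH_{0}V=H_{0}$, $VH_{1}V=-H_{1}$, $V\rho_{0}V=\rho_{0}$, and $\mathrm{tr}_{A}(VXV)=\mathrm{tr}_{A}(X)$, whence $g(-\epsilon)=g(\epsilon)$. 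This justifies writing $\rho_{\mathrm{R}}=\sum_{k\ge 0}\tau^{k}\rho_{\mathrm{R}}^{(2k)}$ with only integer powers of $\tau$ surviving. The identification $\rho_{\mathrm{R}}^{(0)}=\rho$ is immediate; $\rho_{\mathrm{R}}^{(2)}=\mathcal{L}\rho$ follows from collecting the three nonvanishing $\epsilon^{2}$-pieces $-i\epsilon^{2}[H_{0},\rho_{0}]$, $\epsilon^{2}H_{1}\rho_{0}H_{1}$, and $-\tfrac{\epsilon^{2}}{2}\{H_{1}^{2},\rho_{0}\}$ and simplifying under $\mathrm{tr}_{A}$ using the block form of $H_{1}$ to obtain $-i[H_{S},\rho]+\sum_{j}L_{j}\rho L_{j}^{\dagger}-\tfrac{1}{2}\{\sum_{j}L_{j}^{\dagger}L_{j},\rho\}$.

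\emph{Bound for $k\ge 2$.} I would write the Stinespring Kraus form $\rho_{\mathrm{R}}=\sum_{j=0}^{J}K_{j}\rho K_{j}^{\dagger}$ with $K_{j}(\epsilon)=\bra{j}e^{-iH(\epsilon)}\ket{0}$. Expanding $(-iH)^{m}$ in the two pieces of $H$ and tracking the ancilla, only ``valid'' sequences survive: $H_{0}$ may be applied only when the ancilla is at $\ket{0}$ (contributing $H_{S}$, norm $\le l/2$), while $H_{1}$ toggles between $\ket{0}$ and the excited sectors. In $K_{0}$ the $H_{1}$'s pair into $p$ adjacent units each contributing $\sum_{k}L_{k}^{\dagger}L_{k}$ (norm $\le l/2$); in $K_{j\ge 1}$ an additional unpaired $L_{j}$ appears at the left end. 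A term of $\tau$-order $k$ uses $a$ single $H_{0}$-units and $p$ pair-units with $a+p=k$; there are $\binom{k}{p}$ arrangements, Taylor factor $1/(a+2p)!=1/(k+p)!$, and operator-norm bound $(l/2)^{k}$. The log-convexity inequality $\binom{k}{p}/(k+p)!\le 1/k!$ then controls the sum over arrangements, and the Kraus contraction $\bigl\|\sum_{j\ge 1}L_{j}XL_{j}^{\dagger}\bigr\|_{1}\le (\sum_{j}\|L_{j}\|^{2})\|X\|_{1}\le (l/2)\|X\|_{1}$ absorbs the $j$-sum. Adding the $K_{0}$ contribution and the combined $\sum_{j\ge 1}K_{j}\rho K_{j}^{\dagger}$ piece yields the stated $(J+1)\,l^{k}/k!$ bound.

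\emph{Main obstacle.} The hardest part will be securing a clean $1/k!$ decay. A direct Cauchy estimate $\|\rho_{\mathrm{R}}^{(2k)}\|\le \sup_{|\epsilon|=R}\|g(\epsilon)\|/R^{2k}$ combined with $\|e^{-iH(\epsilon)}\|\le e^{|\epsilon|^{2}l/2+|\epsilon|\sqrt{l/2}}$ only yields the weaker bound $e^{O(\sqrt{k})}\,l^{k}/k!$ after optimising over $R$, owing to the irreducible $\epsilon$-linear term in the Hamiltonian. The combinatorial argument above avoids this by replacing every $\epsilon$-linear application of $H_{1}$ with its paired, effectively $\tau$-linear contribution $\sum_{k}L_{k}^{\dagger}L_{k}$; and the partial-trace sum over $j=0,\dots,J$ must be carried out \emph{before} bounding the individual $K_{j}$'s, so that the identity $\sum_{j}\|L_{j}\|^{2}\le l/2$ produces the $(J+1)$ constant rather than $J$ independent summands each carrying a stray $\|L_{j}\|^{2}$ factor.
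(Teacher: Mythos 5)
Your overall route is genuinely different from the paper's and much of it is sound. The paper works at the superoperator level: it writes $\rho(\epsilon)=\sum_m\frac{1}{m!}(\epsilon\,\ad_{H_1}+\epsilon^2\,\ad_{H_0})^m\rho_0$, proves by induction on the recursion $\rho^{(k)}=-i[H_1,\rho^{(k-1)}]-i[H_0,\rho^{(k-2)}]$ that even-order terms are ancilla-diagonal and odd-order terms purely off-diagonal (so the odd orders die under $\tr_A$), bounds each word of $\ad$'s by $\Lambda_0^{\,n-m}\Lambda_1^{2m}$ with a count $\frac{(n+m)!}{(2m)!(n-m)!}$, and finally gets the $(J+1)$ prefactor from a triangle inequality over the $J+1$ ancilla blocks of the partial trace. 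Your reflection argument ($VH_0V=H_0$, $VH_1V=-H_1$, $V\rho_0V=\rho_0$, $\tr_A(VXV)=\tr_A(X)$) is a cleaner way to get evenness than the paper's induction, your computation of $\rho^{(0)}_\text{R}=\rho$ and $\rho^{(2)}_\text{R}=\cL\rho$ is correct, and the Kraus-element picture $K_j(\epsilon)=\bra{j}e^{-iH(\epsilon)}\ket{0}$ with the pairing structure (excursions $0\to j\to 0$ forced to be adjacent because $H_0$ annihilates excited ancilla states, one unpaired $L_j$ at the left end for $j\ge1$) is valid and, by summing $\sum_j\norm{L_j}^2\le l/2$ instead of taking a max over blocks, actually avoids the paper's $J$-dependent degradation.

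The gap is in the final bookkeeping. Your count (``$\binom{k}{p}$ arrangements, Taylor factor $1/(k+p)!$, norm $(l/2)^k$'') applies to the Taylor coefficients of a \emph{single} exponential $K_j$, but the object you must bound is the two-sided conjugation: the coefficient of $\tau^{k}$ in $\sum_j K_j\rho K_j^{\dagger}$ is a Cauchy product $\sum_{a+b=k}K_j^{[a]}\rho\,(K_j^{[b]})^{\dagger}$ (with half-integer splittings for $j\ge1$), and the units and Taylor factorials live separately on the two sides. This convolution step is missing from your sketch; once it is included, the split over left/right contributes the $\sum_{a+b=k}\frac{1}{a!\,b!}=\frac{2^{k}}{k!}$ factor that converts $(l/2)^{k}$ into $l^{k}$, together with $O(1)$ factors (each single-exponential sum over $p$ costs an extra $\sum_p 1/p!\le e$, using the sharper form $\binom{k}{p}/(k+p)!\le\frac{1}{k!\,p!}$ — your stated bound $\le 1/k!$ alone would leave a factor $k+1$ after summing over $p$). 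Carried out carefully, your method yields $\norm{\rho^{(2k)}_\text{R}}\le C\,l^{k}/k!$ with a universal constant $C$ of order $e^{2}$, independent of $J$. That is the same qualitative statement (and stronger than the paper's for large $J$), but it does not literally establish the claimed constant $(J+1)$ when $J$ is small (e.g.\ $J=1$), so as written the proposal does not prove the inequality \eqref{eq:even-bound} as stated; you should either track the constants to show they fall below $J+1$, or state and prove the bound with your own explicit universal constant.
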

This shows that the corresponding terms in expansion \eqref{eq:kappa_expansion} have the following bounds,
\begin{equation}
\cM_k \le  (J+1) \frac{\ell^{  k}}{k!}, \quad k\ge 2.
\end{equation}

Building on the derivative bounds above, we obtain a concentration bound on the sampling complexity, analogous to the previous case. The result is formalized in the following theorem.

\begin{theorem}\label{main-thm-dH}
Given $\varepsilon>0$. Assume that the simulation time satisfies \cref{T>T*}. 
     The Richardson extrapolation $p_n(\tau)$ of \(f(\tau)\) using $n+1$  perturbed Chebyshev nodes, with each expectation value $f(\tau_j) $, generated from the dilated Hamiltonian approximation \eqref{dilatedH} and  sampled with $N_S$ shots, is guaranteed to produce an estimator 
     \begin{equation}
         \mathbb P\left( \abs{p_n(0) - f(0) } < \varepsilon \right) > 1 - \delta,
     \end{equation}
     provided that $n = \Omega(\log \frac{1}{\varepsilon})$,  and $N_S= \Omega \left( \frac{1}{\varepsilon^2}\log \frac{1}{\delta} \right). $
     In addition, to prepare $\rho(T)$ (i.e., $ {\rho}(1) $ from \eqref{rhorhohat}), the maximum circuit depth of is $d_{\max} = \frac{1}{\tau_1}= \frac{T^2}{\tau_{\max}} n^2 = (\ell T)^2 (\log (\ell T)) \log^2 \frac{1}{\varepsilon}. $
\end{theorem}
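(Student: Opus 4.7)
The plan is to mirror the proof of \Cref{main-thm-kraus}, replacing the Kraus-form bound \eqref{M2-bound} on $\norm{\mathcal{M}_2}$ with the dilation bounds $\norm{\mathcal{M}_k}\le (J+1)\,l^k/k!$ for $k\ge 2$ that follow from the preceding local error expansion theorem. The three pillars of the previous argument---Gevrey-class smoothness of $f(\tau)$ (\Cref{thm:Gevrey-f}), the Lebesgue-constant bound for perturbed Chebyshev nodes (\Cref{lem:integer-reciprocal}), and the Hoeffding concentration \eqref{eq: hoeffding}---all carry through once the first pillar is re-established in this setting.

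First I would re-derive the analogues of \Cref{lemma:gamma_bound} and \Cref{cor:gamma_bound_1} by feeding the new $\norm{\mathcal{M}_k}$ bounds into the backward-error ODE \eqref{eq:gamma_ode}. The generating-sequence recursion in \Cref{def:c_s_kraus} must be modified so that the finite $\mathcal{M}_2$-only contribution is replaced by an infinite sum over all $p\ge 2$, with each term weighted by $\norm{\mathcal{M}_p}\le (J+1)\,l^p/p!$. Because of the factorial decay, this sum is dominated by $(J+1)\,C_1^{\,k}C_2^{\,k}\sum_{p\ge 2}(l/(C_1C_2))^p/p!$, which converges to a benign constant whenever $C_1C_2 \ge l$, and can be absorbed into rescaled growth constants depending on $J$ but not on $k$. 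The inductive bound \eqref{ineq:c_bound} therefore survives, yielding $\abs{f^{(k)}(\tau)}\le \sigma\nu^k k!$ with the same qualitative scaling $\nu=\Theta(l^2T^2\log T)$ as in the Kraus case.

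With Gevrey smoothness in hand, the Chebyshev bias estimate \Cref{lemma:bias_bound_richardson_cheby} applies verbatim, so $n=\Omega(\log(1/\varepsilon))$ nodes suffice to make the deterministic extrapolation error at most $\varepsilon/2$. Under the scaling \eqref{T>T*} on $T$, the perturbed-node lemma delivers $\norm{\gamma}_1=\co(\log n)=\co(\log\log(1/\varepsilon))$, which inserted into \eqref{eq: hoeffding} together with a union bound over the $n+1$ node evaluations yields the sampling complexity $N_S=\Omega(\varepsilon^{-2}\log(1/\delta))$. The maximum depth is then set by the smallest step $\tau_1=\Theta(\tau_{\max}/(n^2T^2\log T))$, giving $d_{\max}=(lT)^2\log l\,\log^2(1/\varepsilon)$, identical in form to the Kraus case.

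The main obstacle is verifying that the infinite sum over $\mathcal{M}_p$ in the modified generating recursion does not inflate the Gevrey constants beyond a constant factor. Concretely, one needs to run the induction on $i+k$ (rather than on $k$ alone, as in the proof of \Cref{lemma:c_bound_kraus}) and show that the tail contribution $\sum_{p\ge 2}(l^p/p!)\,c_{i+p-1,j,k-p+1}$ remains dominated by $C_1^{\,i+k}C_2^{\,k}/j!$ after absorbing a factor of $(J+1)\,e^{l/(C_1C_2)}$ into $C_1$. Once this accounting is in place, the remainder of the proof is a direct transcription of \Cref{main-thm-kraus}, and no genuinely new ideas are required beyond the new local-error expansion already supplied.
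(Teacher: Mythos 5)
Your proposal is correct and follows essentially the same route as the paper: its appendix on the dilated Hamiltonian defines a modified generating sequence (Definition~\ref{def:c_s_dilated}) in which the single $\mathcal{M}_2$ contribution is replaced by the sums weighted by $(J+1)l^{p+1}/(p+1)!$, proves the same growth bound $c_{i,j,k}\le C_1^{\,i+k}C_2^{\,k}/j!$ with $C_1=\max\{l(J+1),\,l(e+1),\,1\}$ (absorbing the $J+1$ factor exactly as you suggest, the factorial decay taming the extra terms), and then reuses Theorem~\ref{thm:Gevrey-f}, Lemma~\ref{lem:integer-reciprocal}, and the Hoeffding bound \eqref{eq: hoeffding} verbatim to get the stated $n$, $N_S$, and $d_{\max}$. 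One small correction to your stated ``main obstacle'': an induction organized purely on $i+k$ would not close, since the recursion contains terms $c_{i+p,j,k-p}$ whose index sum is again $i+k$; the paper instead inducts on $k$ (with the hypothesis assumed for all $i$ at every $k'<k$) and treats the same-$k$ terms $l\cdot c_{i-1,j,k}$ by an inner induction on $i$, which is the bookkeeping you would need as well.
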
 

\section{Numerical Experiments}
\label{sec:numerics}

\begin{figure*}[htbp]
  \centering
  \includegraphics[scale=0.22]{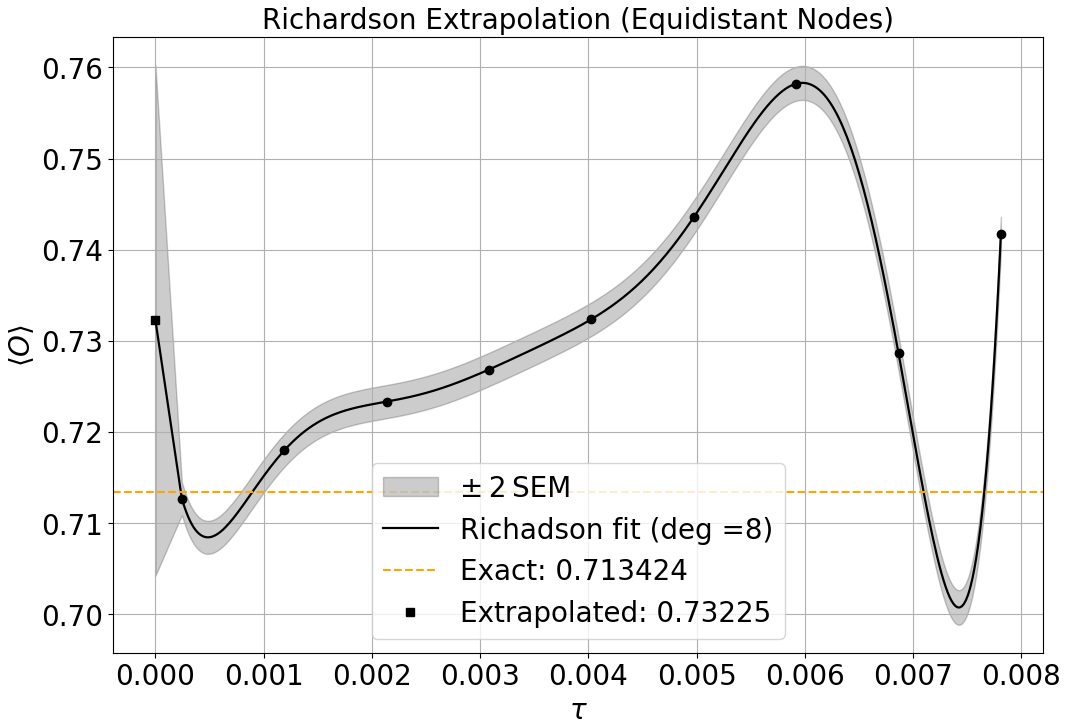}
  \includegraphics[scale=0.22]{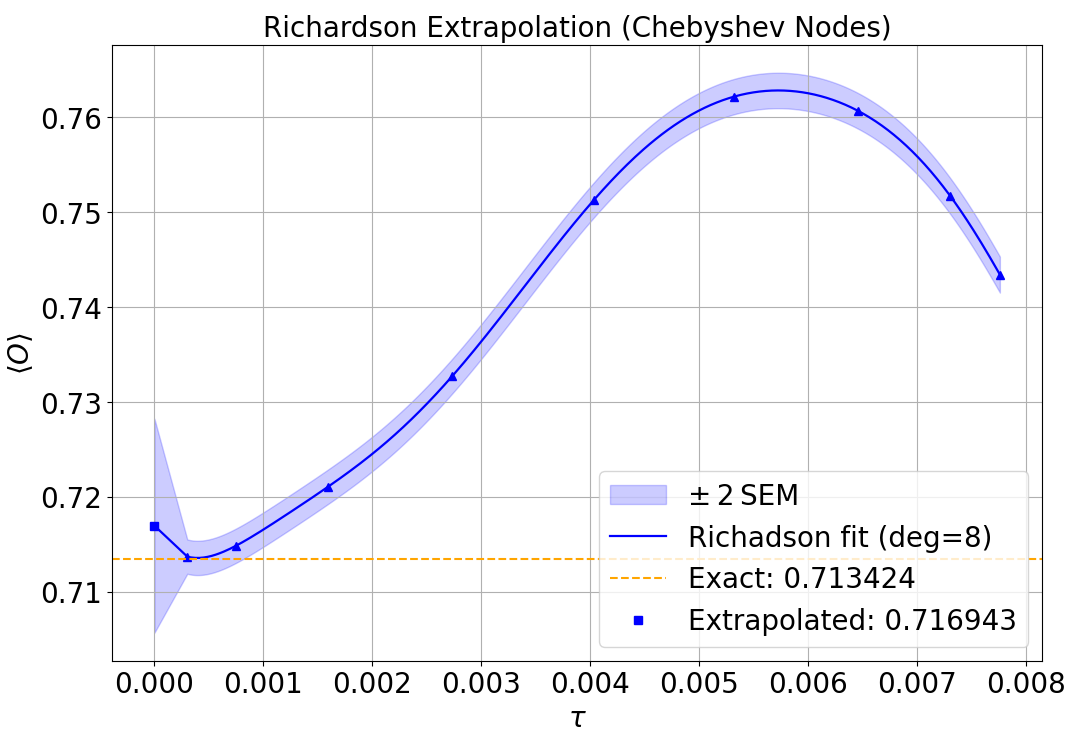}
  \caption{
  {Richardson extrapolation of $\langle O \rangle(\tau)$ using first-order Kraus evolution. The curves represent a degree-8 polynomial interpolating noisy data at $n=9$ points and then extrapolating to $\tau = 0$. $N_\mathrm{shots}=2 \times 10^7$.
  \textbf{Left}: Equidistant time steps. \textbf{Right}: perturbed Chebyshev time steps.  
}
}
  \label{fig:kraus-richardson}
\end{figure*}

\begin{figure*}[htbp]
  \centering
  \includegraphics[scale=0.22]{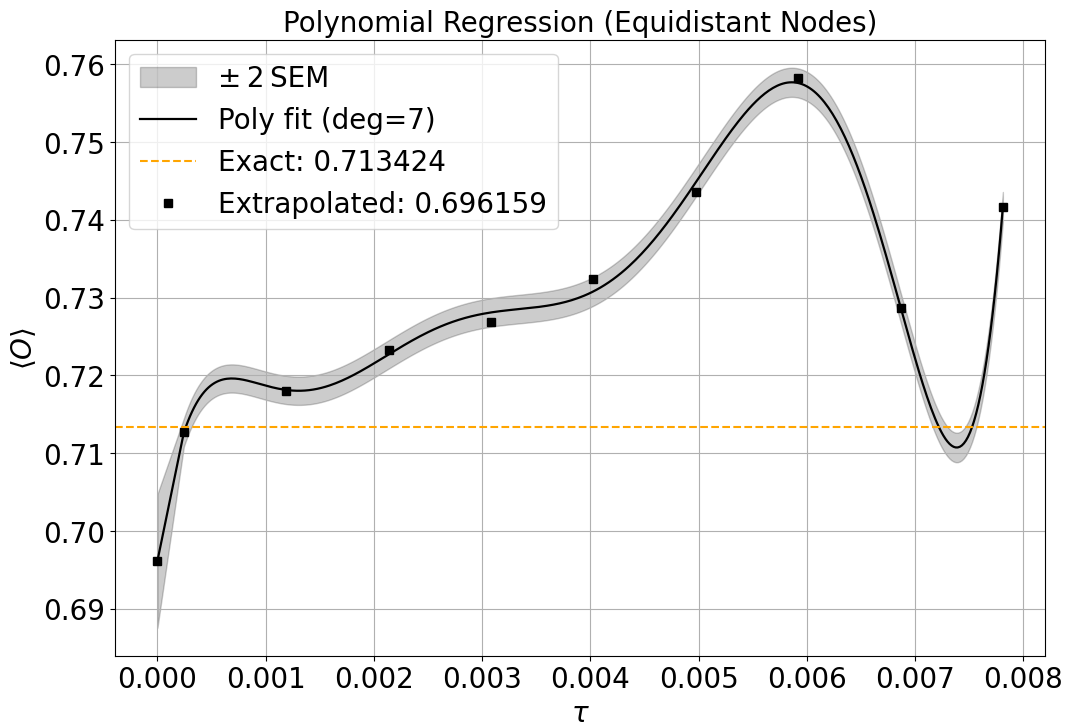}
  \includegraphics[scale=0.22]{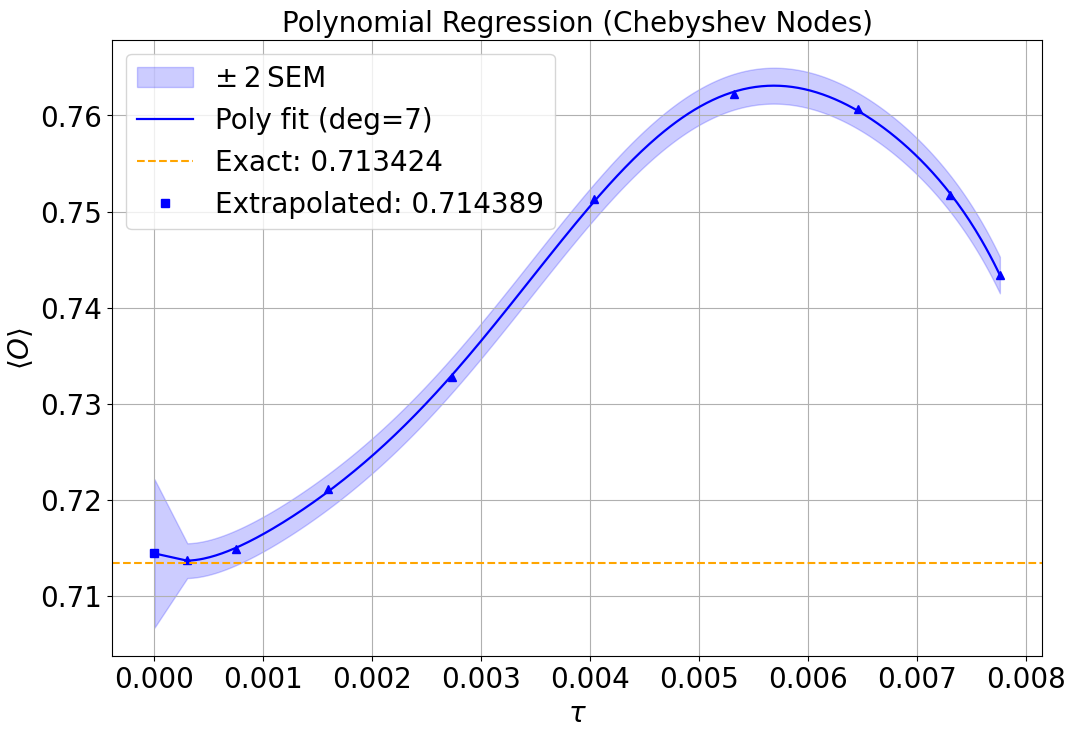}
  \caption{{Least-squares based extrapolation of $\langle O \rangle(\tau)$ with data generated by first-order Kraus-form approximation. $N_\mathrm{shots}=2 \times 10^7$.
  \textbf{Left}: Equidistant time steps. \textbf{Right}: Chebyshev time steps. The continuous curves represent degree-7 polynomials fit to $9$ noisy data points, then extrapolated to $\tau = 0$. }
}
  \label{fig:kraus-poly}
\end{figure*}

We now evaluate the extrapolation strategies discussed in previous sections. We consider two specific extrapolation methods,
\begin{enumerate}
    \item \emph{Richardson extrapolation}, which corresponds to a polynomial interpolation on $n+1$ data points. The interpolating polynomial $p_n(\tau)$ has degree at most $n$ and is evaluated at $\tau = 0$.
    \item \emph{Polynomial regression}, which seeks a least-squares fit to the $n+1$ data points using a polynomial $p_m(\tau)$ of degree $m < n$.
\end{enumerate}

In addition to the extrapolation methods, we also choose the nodes $\{\tau_j\}$ from either equidistant nodes $[0, \tau_{\max}]$, with the first node being $\tau_{\min} = \tau_{\max}/(n+1)$. Perturbed Chebyshev nodes in the same interval are also implemented with perturbations ensuring that $T/\tau_j$ are moved to the closest integers.

\subsection{Kraus Form approximation}
To test the first-order approximation in the Kraus form \Cref{eqn:Kraus_form}, we consider Lindblad dynamics with a single jump operator~$L$. The operators $L$,  $H$, and the observable $O$ are generated randomly in $\mathbb{C}^{16\times 16}$. 
For comparison purposes, the exact extrapolated value is computed directly from the trace with the density operator $\rho(T)$ computed using a standard ODE solver with a very small step size.  

\Cref{fig:kraus-richardson} shows the extrapolation results using stepsizes at $9$ points  with equidistant and Chebyshev nodes, and $T=10.$  Even with a small number of nodes, the interpolating polynomial already begins to exhibit oscillations towards the end point of the interval $[0, \tau_{\max}]$, thus leading to large extrapolation error.

To reduce overfitting, we perform a least-squares approach using polynomials of degree~$m=7$, and show the results in \Cref{fig:kraus-poly}.
These results also show that extrapolation on the Chebyshev grid consistently outperforms equidistant schemes, both in terms of accuracy and robustness to sampling noise.

\begin{figure*}[htbp]  
  \centering
  \includegraphics[scale=0.21]{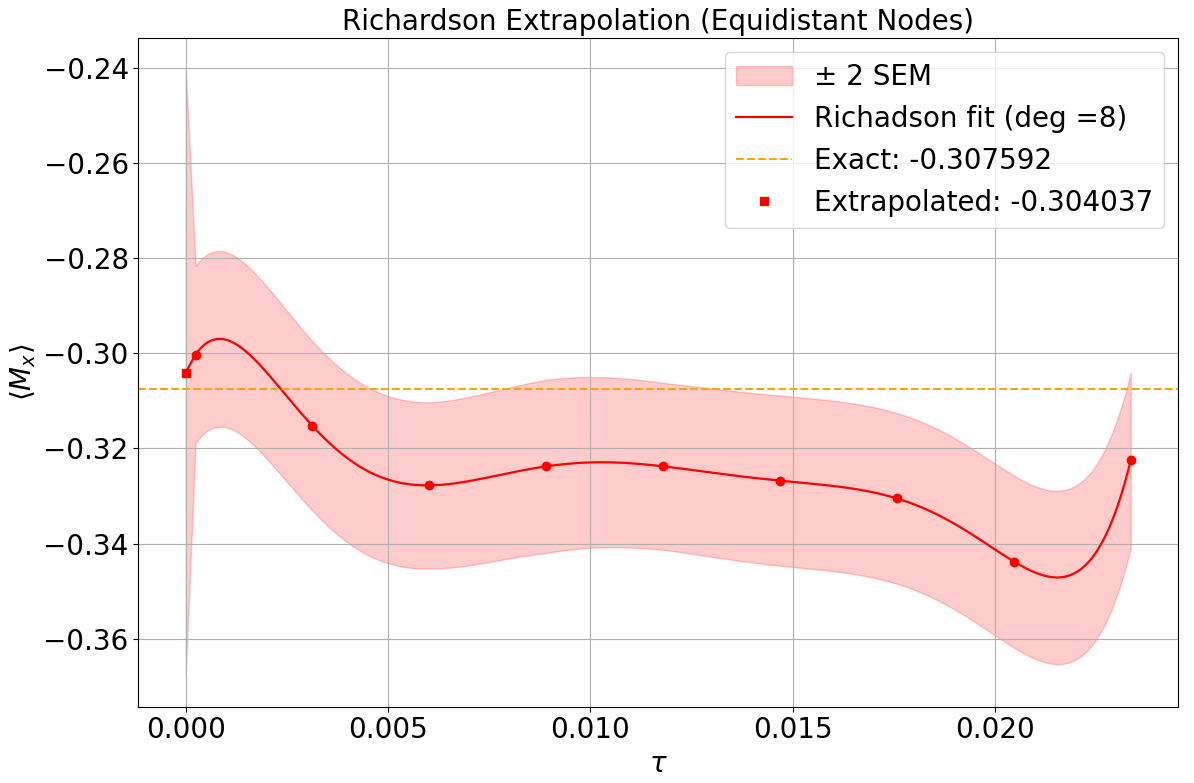}
  \includegraphics[scale=0.21]{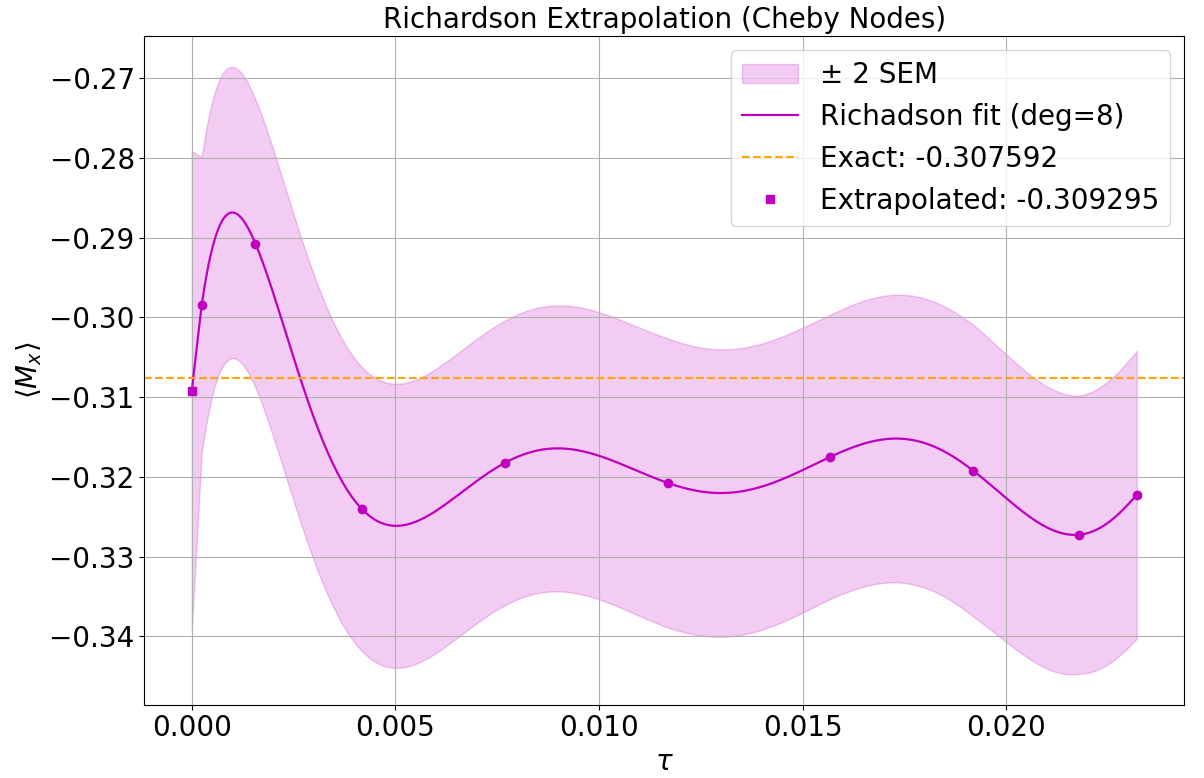}
  \caption{{Richardson extrapolation of $\langle M_x\rangle$ via Kraus-form approximation. $N_\mathrm{shots}=2 \times 10^3$. \textbf{Left}: Equidistant time steps. \textbf{Right}: Chebyshev time steps. Both are computed using a degree‑8 polynomial interpolating 9 noisy data points. Chebyshev nodes reduce bias and variance, enhancing agreement with the true expectation value.}
}
  \label{fig:kraus_tfim_richardson}
\end{figure*}

\begin{figure*}[htbp]  
  \centering
  \includegraphics[scale=0.21]{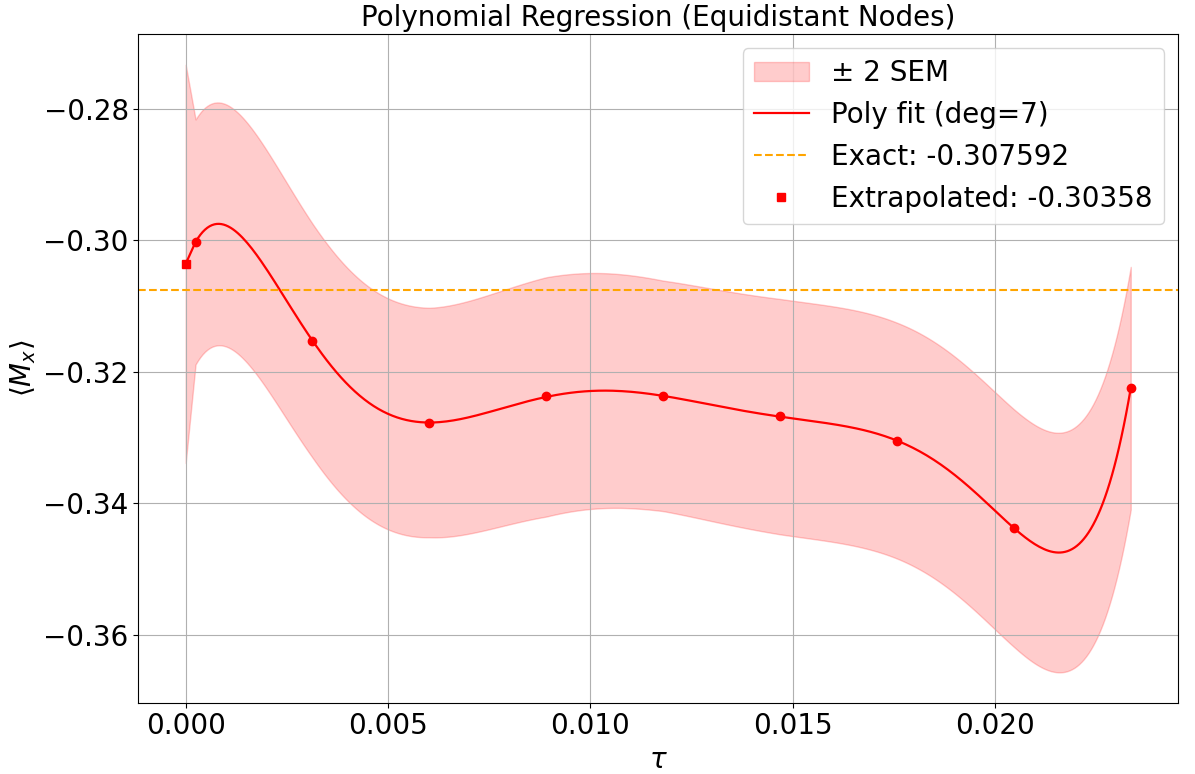}
  \includegraphics[scale=0.21]{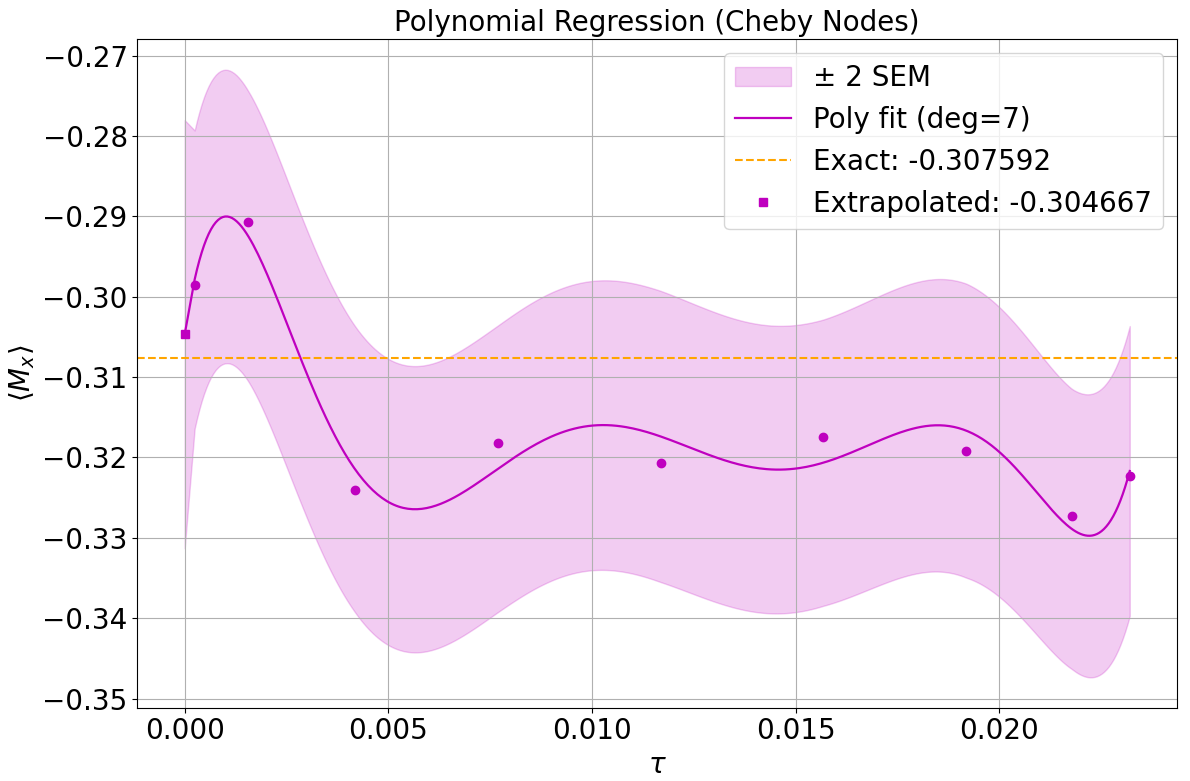}
  \caption{{Least-square based extrapolation of $\langle M_x\rangle$ via Kraus-form approximation. $N_\mathrm{shots}=2 \times 10^3$. \textbf{Left}: Equidistant time steps. \textbf{Right}: Chebyshev time steps. A degree‑7 polynomial fit is used on 9 noisy data points and is shown as a curve in each plot.}
}
  \label{fig:kraus_tfim_poly}
\end{figure*}

\noindent
To further assess the behavior of the first-order Kraus approximation under physically structured dynamics, we extend the same extrapolation framework to a four-qubit transverse-field Ising model (TFIM).  
We consider a \textit{transverse-field Ising model (TFIM)}, whose system Hamiltonian is defined as  
\begin{equation} \label{eq:tfim}
    H_S
    = \sum_{q=0}^{n_q-1} \left( \frac{\omega}{2}\,\sigma_z^{(q)} + \frac{\Omega_R}{2}\,\sigma_x^{(q)} \right)
      + J \sum_{q=0}^{n_q-2} \sigma_x^{(q)} \sigma_x^{(q+1)}.
\end{equation}
Here, $\omega$ denotes the local spin frequency, $\Omega_R$ the transverse driving strength, and $J$ the nearest-neighbor coupling constant.  
In this setting, both the system Hamiltonian and the jump operators in \Cref{eqn:Kraus_form} inherit the spin-chain structure of the TFIM. 
The open-system dynamics is modeled using \textit{local amplitude-damping channels} acting independently on each qubit, capturing relaxation processes at the level of individual spins.
Each jump operator corresponds to a specific decay channel.
$$V_q = \sqrt{\gamma} \sigma_-^{(q)},$$
where $\gamma$ is the decay rate and $\sigma_-^{(q)}$ is the lowering operator.  
We choose $n_q=4$, and parameters $\omega=1.0,  \Omega_R=0.8,  J = 0.3, \gamma=0.4$  in the numerical experiements.
 
The observable is chosen as the normalized total magnetization in the $x$-direction,
\begin{equation} \label{eq:magnetization}
M_x = \frac{\sum_{q=0}^{n_q-1} \sigma_x^{(q)}}{\left\| \sum_{q=0}^{n_q-1} \sigma_x^{(q)} \right\|}.
\end{equation}

\begin{figure*}[htbp]
  \centering
  \includegraphics[scale=0.23]{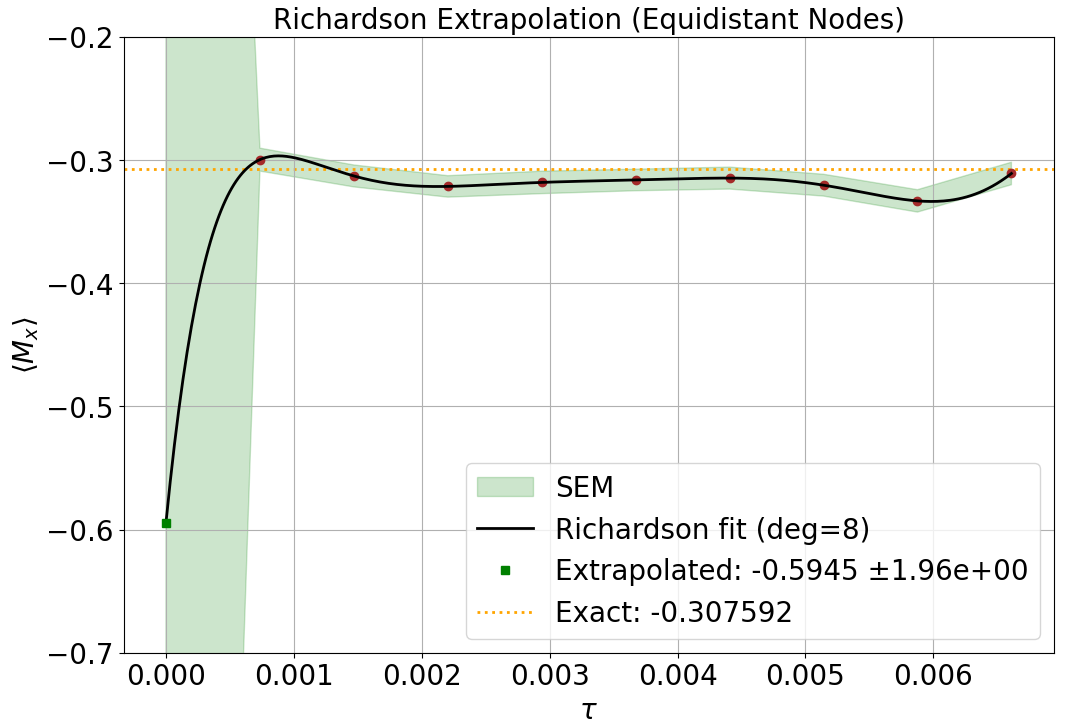}
  \includegraphics[scale=0.23]{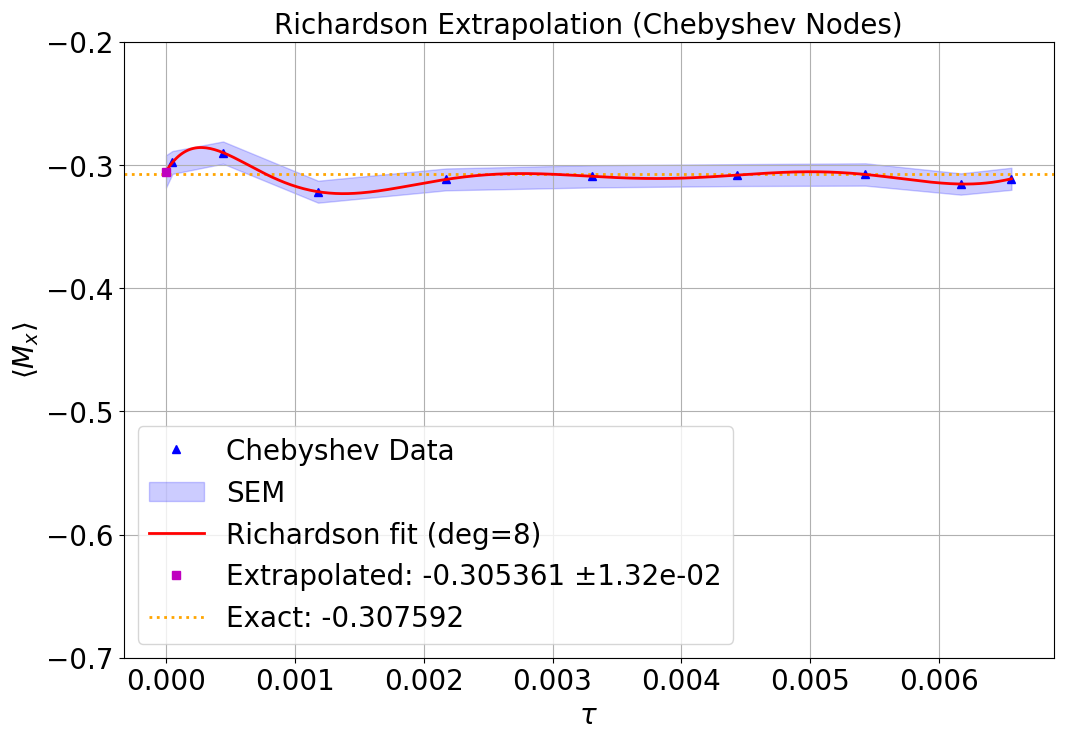}
  \caption{{Richardson extrapolation of $\langle M_x\rangle$ via dilated-Hamiltonian Lindblad evolution.  $N_\mathrm{shots}=2 \times 10^3$. 
\textbf{Left}: Equidistant time steps. \textbf{Right}: Chebyshev time steps. Both are computed using a degree‑8 polynomial interpolating 9 noisy data points. Chebyshev grids reduce bias and variance, enhancing agreement with the true expectation value.}
}
  \label{fig:dilated-richardson}
\end{figure*}

\begin{figure*}[htbp]  
  \centering
  \includegraphics[scale=0.23]{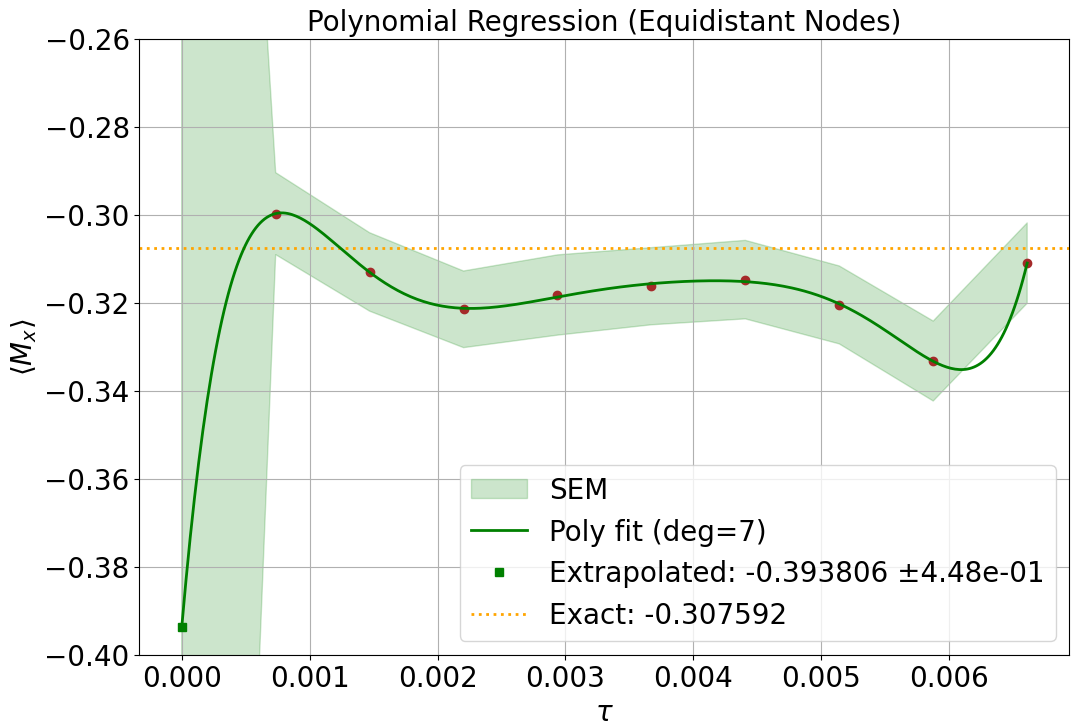}
  \includegraphics[scale=0.23]{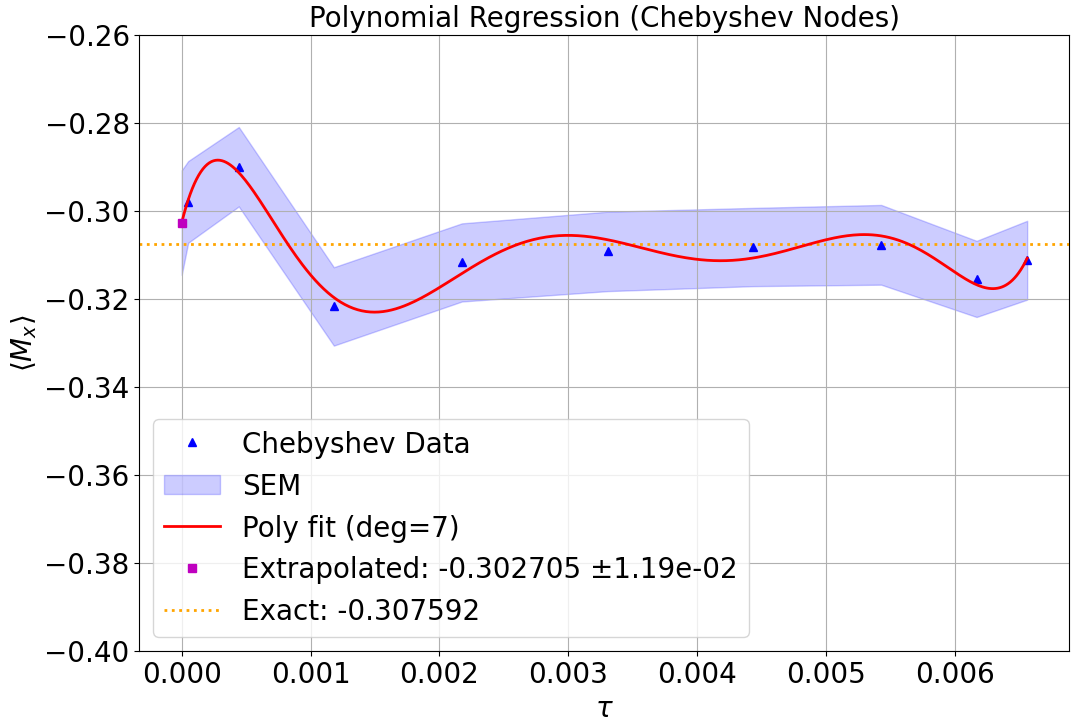}
  \caption{{Least-square based extrapolation of $\langle M_x\rangle$ via dilated-Hamiltonian Lindblad evolution. $N_\mathrm{shots}=2 \times 10^3$. \textbf{Left}: Equidistant time steps. \textbf{Right}: Chebyshev time steps. A degree‑7 polynomial fit is used on 9 noisy data points and is shown as a curve in each plot.}
}
  \label{fig:dilated-polynomial}
\end{figure*}

In this setting, the Kraus operators in the approximation in \Cref{eqn:Kraus_form} will depend on the system Hamiltonian and jump operators.  
Specifically, each time step evolves the density matrix through a first-order Kraus map \(\rho \;\mapsto\; F_0 \rho F_0^\dagger + \sum_j F_j \rho F_j^\dagger\),
\[
F_0 = I + \Delta t(-iH_S + G_0), \quad
F_j = \sqrt{\Delta t}\,V_j,
\]
with $G_0 = -\tfrac{1}{2}\sum_j V_j^\dagger V_j$.
As shown in \cref{fig:kraus_tfim_richardson}, and \cref{fig:kraus_tfim_poly}, the extrapolated values obtained from this TFIM Kraus evolution display the same qualitative trends as in the random-matrix experiment: both Richardson interpolation and polynomial regression on equidistant nodes are prone to oscillatory behavior, while the perturbed Chebyshev nodes yield smaller bias in the extrapolated expectation value $\langle M_x\rangle(\tau\!\to\!0)$.  
This provides additional evidence that node selection, rather than model-specific dynamics, primarily governs the numerical stability of polynomial extrapolation in the presence of stochastic sampling noise.

\subsection{Dilated Hamiltonian Approximations}

We now repeat the numerical experiments with approximations generated by the dilated Hamiltonian method, followed by extrapolations to $\tau=0.$ We consider the same TIFM model in \eqref{eq:tfim}.  Similarly, for the observable, we compute the normalized total magnetization in the x-direction in \cref{eq:magnetization}.
As shown in \Cref{fig:dilated-richardson,fig:dilated-polynomial}, a direct extrapolation using equidistant nodes typically leads to large statistical error, while the extrapolation based on the Chebyshev nodes provides a more robust estimate for the extrapolated values.

\section{Summary and further discussions}
\label{sec:conclusion}

We have carried out a comprehensive bias–variance analysis of Richardson-style
\emph{algorithmic error mitigation} for first-order quantum algorithms that simulate
Lindblad dynamics.  Without invoking any \emph{a-priori} smoothness assumptions, we proved that an
$n=\Omega \bigl(\log(1/\varepsilon)\bigr)$-point extrapolator
reduces the  maximum circuit depth required to reach precision~$\varepsilon$
from $\mathcal{O}((\ell T)^{2}/\varepsilon)$ to
$\mathcal{O} \bigl((\ell T)^{2}\log^2(1/\varepsilon)\bigr)$, while preserving the
standard $1/\varepsilon^{2}$ sampling complexity.
Extensive numerical experiments confirmed that the predicted depth savings
translate into measurable fidelity gains on noisy hardware. This framework opens the door to provably robust quantum simulations of dissipative dynamics on NISQ devices,  an essential ingredient for many problems in quantum chemistry and physics.

\paragraph{Extensions.}
\begin{itemize}
  \item \textbf{Higher-order integrators.}  The backward-error machinery developed
        here applies verbatim to second-order product formulas
        \cite{Childs2021Trotter} and quantum channel or dilation schemes
        \cite{li2022simulating,Ding2024}.  Updating the algebraic expansion
        yields Richardson weights of the same closed form; the asymptotic depth
        further improves by a constant factor.
    \item \textbf{Integration with Trotter Extrapolation.}  The dilated Hamiltonian approach is implemented by Hamiltonian simulation, for which a Trotter algorithm provides a simple implementation. More importantly, we can integrate the current approach with the extrapolation methods for 
    Trotter algorithms \cite{Low2019,watson2024exponentially}. Together, they can provide a powerful near-term approach for simulating open quantum systems. 
  \item \textbf{Joint mitigation of algorithmic and physical errors.}
        Recent work \cite{Endo2019,mohammadipour2025direct} shows that step-size
        extrapolation can be combined with zero-noise extrapolation.
        Integrating our bias bounds with a hardware-noise model is a promising
        route to end-to-end error budgets on NISQ devices.
\end{itemize}

\paragraph{Outlook.}
The analysis confirms that algorithmic extrapolation is not merely a heuristic
but a provably effective depth-reduction strategy for open-system simulation.
Because the technique requires no additional quantum resources beyond repeated
execution of shallow circuits, it can be implemented on current hardware
without modification.  Future work includes extending the framework to
time-dependent Lindblad generators, Lindblad control problems, 
 as well as exploring adaptive grid selection to
optimize the bias–variance trade-off in real time.
\begin{acknowledgments}
This research is supported by the NSF Grants No. DMS-2111221 and No. CCF-2312456.
\end{acknowledgments}

\bibliographystyle{apsrev4-2}
\bibliography{references_aps}  

\onecolumngrid
\section{Supplemental Material}
\begin{appendix}

\section{Detailed derivation of the error expansion}

\textbf{Lemma \ref{lemma:discretization_expansion}}. 
\textit{Let \(\rho_\tau (t)\) denote the approximate solution at time \(t\) with stepsize \(\tau\); there exists a sequence of smooth functions \(\Gamma_k(t)\) such that
\begin{equation} \label{eq:rho_expansion_proof}
    \rho_\tau(t) = \rho(t) + \tau \Gamma_1 (t) + \tau^2 \Gamma_2(t) + \dots ,
\end{equation}
where \(\rho(t)\) is the solution to the exact evolution \eqref{eq:lindblad}. For a discrete solution  \eqref{eq:kappa_expansion}, these coefficient matrices satisfy the initial value problems \(\Gamma_k(0) = 0\) for all \(k \geq 1\), and evolution equations,
\begin{equation} \label{eq:gamma_ode_proof}
    \Gamma'_{k-1}(t) = \mathcal{L} \Gamma_{k-1}(t)  - \frac{\mathcal{L}^k \rho (t)}{k!} + \sum_{i=2}^k \left( \mathcal{M}_i \Gamma_{k-i}(t) -  \frac{\Gamma^{(i)}_{k-i} (t)}{i!} \right).
\end{equation}
}

\begin{proof}
\label{proof:coefficient_first_order}
    The first part of the lemma follows Theorem 4.37 from \cite{deuflhard2012scientific}. The part that remains to prove is \eqref{eq:gamma_ode_proof}. Note that we are using a numerical scheme that updates the approximate solution in each step according to:
    \begin{equation}\label{eq:numerical_step}
        \rho_{n+1} = \mathcal{K}(\tau) \rho_n.
    \end{equation}
     On the left-hand side of \eqref{eq:numerical_step}, we can expand \(\rho_{\tau}(t+\tau)\). Since \( \rho(t+\tau) = e^{\tau \mathcal{L}} \rho(t)\),
we have
\begin{equation}
    \rho_{\tau}(t+\tau) = e^{\tau\mathcal{L}}\rho(t) + \tau\Gamma_1(t+\tau) + \tau^2 \Gamma_2(t+\tau) + \cdots.
\end{equation}
By expanding \(e^{\tau \mathcal{L}}\) into its Taylor series, \(\displaystyle e^{\tau\mathcal{L}} = \sum_{k=0}^{\infty} \frac{\tau^k}{k!} \mathcal{L}^k\), we find that
\begin{equation}\label{eq:left_hand_side}
    \rho_{\tau}(t+\tau) = \rho(t) + \tau \left( \mathcal{L} \rho(t) + \Gamma_1(t+\tau) \right) + \tau^2 \left( \frac{\mathcal{L}^2 \rho(t)}{2!} + \Gamma_2(t+\tau) \right) + \tau^3 \left( \frac{\mathcal{L}^3 \rho(t)}{3!} + \Gamma_3(t+\tau) \right) + \cdots
\end{equation}
which can be compactly written as
\begin{equation}
    \rho_{\tau}(t+\tau) = \rho(t) + \sum_{k=1}^{\infty} \tau^k \left( \frac{\mathcal{L}^k \rho(t)}{k!} + \Gamma_k(t+\tau) \right).
\end{equation}
Using the Taylor series, we can also expand \(\Gamma_k(t+\tau)\), obtaining
\begin{equation} \label{eq:Gamma_expanded}
    \Gamma_k(t+\tau) = \sum_{m=0}^{\infty} \frac{\tau^m}{m!}\,\Gamma^{(m)}_k(t).
\end{equation}
Substituting this expansion into \eqref{eq:left_hand_side}, we find
\begin{equation} \label{eq:modified_left_hand_side}
    \rho_{\tau}(t+\tau) = \rho(t) + \sum_{k=1}^{\infty} \tau^k \left( \frac{\mathcal{L}^k \rho(t)}{k!} + \sum_{j=0}^{k} \frac{\Gamma^{(k-j)}_j(t)}{(k-j)!} \right)
    = \rho(t) + \sum_{j=1}^{\infty} \tau^j \left( \frac{\mathcal{L}^j \rho(t)}{j!} + \sum_{i=0}^{j} \frac{\Gamma^{(i)}_{j-i}(t)}{i!}\right).
\end{equation}

On the right-hand side of \eqref{eq:numerical_step}, applying \(\mathcal{K}(\tau)\) from \eqref{eq:kappa_expansion} to the expansion
\begin{equation}
\rho_{\tau}(t) = \sum_{p=0}^{\infty} \tau^p \Gamma_p(t) ,
\end{equation}
yields
\begin{equation} \label{eq:modified_right_hand_side}
\rho_{\tau}(t+\tau)
= \sum_{\ell=0}^{\infty} \tau^\ell \mathcal{M}_\ell \left( \sum_{p=0}^{\infty} \tau^p \Gamma_p(t) \right)
= \rho(t) + \sum_{j=1}^{\infty} \tau^j \left( \sum_{p=0}^{j} \mathcal{M}_{j-p} \Gamma_p(t) \right),
\end{equation}
The equality of \eqref{eq:modified_left_hand_side} and \eqref{eq:modified_right_hand_side} implies that for every \(j \geq 1\):
\begin{equation}
\frac{\mathcal L^j\rho(t)}{j!}
+\sum_{i=0}^j \frac{\Gamma_{\,j-i}^{(i)}(t)}{i!}
  =  
\sum_{p=0}^j \mathcal M_{\,j-p}\,\Gamma_p(t).    
\end{equation}
Rewriting the left sum by isolating the \(i=0\) term \(\Gamma_j(t)\)
yields the discrete‐time relation
\begin{equation}
    \Gamma_j(t)
=\sum_{p=0}^j \mathcal M_{\,j-p}\,\Gamma_p(t)
-\sum_{i=1}^j\frac{\Gamma_{\,j-i}^{(i)}(t)}{i!}
-\frac{\mathcal L^j\rho(t)}{j!} 
\end{equation}
Rearranging the terms gives,
\begin{equation} 
    \Gamma'_{k-1}(t) = \mathcal{L} \Gamma_{k-1}(t)  - \frac{\mathcal{L}^k \rho (t)}{k!} + \sum_{i=2}^k \left( \mathcal{M}_i \Gamma_{k-i}(t) -  \frac{\Gamma^{(i)}_{k-i} (t)}{i!} \right). 
\end{equation}
$\qed$
\end{proof}

\section{Bounding the coefficients \texorpdfstring{$c_{i,j,k}$}{c(i,j,k)}}\label{app:c_bound}

\textbf{Lemma \ref{lemma:c_bound_kraus}} .
    Let \(c_{i,j,k} \geq 0\) be as in \textit{Definition} \ref{def:c_s_kraus}. Then for all \(i,j,k\ge0\),
    \begin{equation} \label{ineq:c_bound_proof}
        c_{i,j,k}   \le   \frac{C^{i+k}_1 \, C^{k}_2}{j!}.
    \end{equation}
    where 
      \begin{equation}\label{def:C1_proof}
      C_1 := \max \{ B, \ell (e+1), 1 \}, \text{ and  } \, C_2 \geq (e+1) \log (C_1)
  \end{equation}

\begin{proof} \label{proof:c_bound_kraus}
    In \textit{Definition} \ref{def:c_s_kraus}, \(j \leq k\). Thus, we need to do induction on \(k\) for different values of \(i\):
    \\
    \textit{Base Case:} For \(k = 0\), and any \(i, j \geq 0\),
    \begin{equation}
        c_{i,j,0} = \delta_{j,0} \ell^i \leq \frac{C^{i+0}_1 \, C^0_2}{j!} = \frac{C^i_1}{j!} .
    \end{equation}
    \textit{Inductive Step:} Suppose \eqref{ineq:c_bound_proof} holds for all \(k' < k\). Then we have the following cases.
    \begin{itemize}
        \item \textit{Case \(i = 0\)}: For \(j=1, 2, \cdots, k\), using \textit{Definition} \ref{def:c_s_kraus}, and the induction hypothesis,
        \begin{align}
            c_{0,j,k} & =  \, \frac{B}{j}  \cdot c_{0,j-1,k-1} + \delta_{j,1} \cdot \frac{\ell^{k+1}}{(k+1)!} + \sum_{p=1}^{k -j} \frac{c_{p+1,j-1,k-p}}{j(p+1)!} \ \\
            & \leq \frac{B}{j} \cdot \frac{C^{k-1}_1 \, C^{k-1}_2}{(j-1)!} +  \frac{\ell^{k+1}}{(k+1)!} + \sum_{p=1}^{k-j} \frac{C_1^{k+1} \, C_2^{k-p}}{j! (p+1)!} \\
            & \leq \frac{C_1^{k} \, C_2^{k}}{j!} \left( \frac{B}{C_1 \, C_2} + \frac{\ell^{k+1}}{C^k_1 \, C^k_2} + \sum_{p=1}^{k-j} \frac{C_1}{C^p_2 \, (p+1)!}\right) \\
            & \leq \frac{C^k_1 \, C^k_2}{j!} \left( \frac{1}{e+1} + \frac{1}{e+1} + \frac{1}{e+1} \right) \leq  \frac{C^k_1 \, C^k_2}{j!} .
        \end{align}

         We note that because of the assumption \(\ell \ge 1\), we have \(C_1 \geq e+1\).To derive the last line notice that \(\frac{B}{C_1} \leq 1\), and \(\frac{1}{C_2} \leq \frac{1}{e+1}\). Similarly \(C^k_1 \, C^k_2 \geq \ell^{2k} (e+1)^{3k}\), then \(\frac{\ell^{k+1}}{C^k_1 \, C^k_2} \leq \frac{1}{e+1}\). Also,  \(\displaystyle \sum_{p=1}^{k-j} \frac{C_1}{C_2^p(p+1!} \leq \sum_{p=1}^{\infty} \frac{C_1}{C_2^p (p+1)!} =  C_1 (e^{-C_2} - 1) \le \frac{C_1}{e^{C_2}} \le \frac{1}{e+1}\). 

        \item \textit{Case \(i \geq 1\)}: For \(j=0,1,\cdots, k-1\), using \textit{Definition} \ref{def:c_s_kraus}, and the induction hypothesis,
    \begin{align}
        & c_{i,j,k} =   \, \ell \cdot c_{i-1,j,k} +  B \cdot c_{i-1,j,k-1} + \delta_{j,0} \cdot \frac{\ell^{i+k}}{(k+1)!} + \sum_{p=1}^{k-j} \frac{c_{i+p,j,k-p}}{(p+1)!}  \\
        & \leq \ell \cdot \frac{C_1^{i+k-1} \, C_2^k}{j!} + B \cdot \frac{C_1^{i+k-2} \, C_2^{k-1}}{j!} +  \frac{\ell^{i+k}}{(k+1)!} + \sum_{p=1}^{k-j} \frac{C_1^{i+k} \, C_2^{k-p}}{j! \, (p+1)!} \\
        & \leq \frac{C_1^{i+k} \, C_2^k}{j!} \left(\frac{\ell}{C_1} + \frac{B}{C_1^2 \, C_2} + \frac{\ell^{i+k}}{C_1^{i+k} \, C_2^{k} \, (k-j+1)!} + \sum_{p=1}^{k-j} \frac{1}{C^p_2 \, (p+1)!} \right) \\
        & \leq \frac{C_1^{i+k} \, C_2^{k}}{j!} \left(\frac{1}{e+1} + \frac{1}{e+1} + \frac{1}{e+1} + \frac{e-2}{e+1} \right) \leq \frac{C_1^{i+k} \, C_2^k}{j!}.
    \end{align}
    Where we derived the last line similar to the previous case.
    \\
    For \( j = k \), similarly:
    \begin{align}
        c_{i,k,k} & = \ell \cdot c_{i-1,k,k}  \\
        & \leq \ell \cdot \frac{C_1^{i+k-1} \, C_2^k}{k!} \\
        & = \frac{C^{i+k} \, C_2^k}{k!} \cdot \frac{\ell}{C_1} \leq \frac{C^{i+k} \, C_2^k}{k!}.
    \end{align}
    \end{itemize}
    This completes the proof.
\end{proof}

\section{Bounding the derivatives of error expansions} \label{app:gamma_bound}

\textbf{Lemma \ref{lemma:gamma_bound}}. 
\textit{The coefficients in the expansion  of the density operator \(\rho_{\tau}\) in \Cref{lemma:discretization_expansion}, approximated by the Kraus operator \eqref{eqn:Kraus_form}, satisfy the following bound: 
\begin{equation}
    \bigl\|\Gamma_{k}^{(i)}(t)\bigr\|
    \le  
  P_{i,k}(t)
    =   
  \sum_{j=0}^k
  c_{i,j,k}\,t^{j} \, \text{ for } i\geq 0, k \geq 1, \quad  c_{0,0,k} = 0 , \quad c_{i,j,0}= \delta_{j,0} \ell^i,
\end{equation}
where $\Gamma_k^{(0)} = \Gamma_k$, $\Gamma_k^{(1)} = \Gamma'_k$, and the coefficients \(c_{i,j,k}\) are defined by the generating  sequence in \textit{Definition} \ref{def:c_s_kraus}.
}

\begin{proof} 
Given the Kraus operators
\(
F_0 = I + \left(-iH - \frac{1}{2}\sum_j L_j^\dagger L_j\right)\tau, \quad F_j = L_j\sqrt{\tau},
\) and 
the discrete evolution operator 
\(
\mathcal{K}(\tau)[\rho] = F_0 \rho F_0^\dagger + F_1 \rho F_1^\dagger + \cdots.
\)
By expanding \(\mathcal{K}\) in \(\tau\), we obtain:
\begin{equation}
\mathcal{K}(\tau)[\rho] = \rho + \tau \mathcal{L}[\rho] + \tau^2 A \rho A^\dagger + \mathcal{O}(\tau^2),
\end{equation}
where  \(
\mathcal{L}
\) is the Lindblad operator, and
\(A = -iH - \frac{1}{2}\sum_j L_j^\dagger L_j\).
Thus,  in the expansion \eqref{eq:kappa_expansion}, \(\mathcal{M}_0 = I\), \(\mathcal{M}_1 = \mathcal{L}\), \(\mathcal{M}_2 \rho = A \rho A^\dagger\), and \(\mathcal{M}_i = 0\) for all \(i \geq 3\).
Then, by Lemma \ref{lemma:discretization_expansion}, each $\Gamma_k(t)$ satisfies a linear ODE:
    \begin{equation}  \label{eq:err-eq}
    \Gamma'_{k}(t) = \mathcal{L} \Gamma_{k}(t)  + \mathcal{M}_2 \Gamma_{k-1}(t) - \frac{\mathcal{L}^{k+1} \rho (t)}{(k+1)!} - \sum_{p=1}^{k} \frac{\Gamma^{(p+1)}_{k-p} (t)}{(p+1)!}, \quad \Gamma_k(0) = 0. 
    \end{equation}

Using the variation of constants formula on \eqref{eq:err-eq}, we solve explicitly:
\begin{equation}
    \Gamma_k(t) = \int_0^t e^{(t - s)\mathcal{L}} (\mathcal{M}_2 \Gamma_{k-1}(s) - \frac{\mathcal{L}^{k+1} \rho (s)}{(k+1)!} - \sum_{p=1}^{k} \frac{\Gamma^{(p+1)}_{k-p} (s)}{(p+1)!}) \,ds.
\end{equation}

By definition, \(\|\mathcal{M}_2\| = \|\mathcal{M}_2 \rho \| \leq \|A\|^2 \leq B \), where \(B := (\|H\| + \frac{1}{2} \|V\|^2)^2 \). Taking norms, and assuming the exponential $e^{(t-s)\mathcal{L}}$ is uniformly bounded by 1, results in
\begin{equation}\label{ineq:Gamma_k_bound}
    \|\Gamma_k(t)\| \leq  \int_0^t B \ \|\Gamma_{k-1} (s) \| + \frac{\ell^{k+1}}{(k+1)!} + \sum_{p=1}^{k} \frac{\|\Gamma^{(p+1)}_{k-p} (s)\|}{(p+1)!} \, ds.
\end{equation}

Moreover, from Lemma \ref{lemma:discretization_expansion}, we also obtain the \(i\)-th derivative of \(\Gamma_k\):
\begin{equation}\label{higher-deriv}
    \Gamma^{(i)}_k(t) = \mathcal{L} \Gamma^{(i-1)}_{k}(t)  + \mathcal{M}_2 \Gamma^{(i-1)}_{k-1}(t) - \frac{\mathcal{L}^{k+i} \rho (t)}{(k+1)!} - \sum_{p=1}^{k} \frac{\Gamma^{(i+p)}_{k-p} (t)}{(p+1)!}.
\end{equation}

Meanwhile for \(i > 0\), we no longer have $\Gamma_k^{(i)}(0)=0$. Therefore we will bound it directly using \cref{higher-deriv}
\begin{equation}\label{ineq:Gamma_k_i_bound}
 \|\Gamma^{(i)}_k(t)\| \leq \ell  \|\Gamma^{(i-1)}_k(t)\| + B   \|\Gamma^{(i-1)}_{k-1}(t)\| + \frac{\ell^{k+i}}{(k+1)!} + 
 \sum_{p=1}^k \frac{\|\Gamma^{(i+p)}_{k-p}(t) \|}{(p+1)!}.
\end{equation}

We claim the following polynomial bound for \(\Gamma_k\):
\begin{equation}\label{ineq:Pki}
    \bigl\|\Gamma_{k}^{(i)}(t)\bigr\|
    \le  
  P_{i,k}(t)
    =   
  \sum_{j=0}^k
  c_{i,j,k}\,t^{j} \, , \quad  c_{0,0,k} = 0 \text{ for } k \geq 1, \quad c_{i,j,0}= \delta_{j,0} \ell^i.
\end{equation}
where the constants \(c_{i, j,k} \geq 0\) are based on \textit{Definition} \ref{def:c_s_kraus}. 

\smallskip

\textit{Base case:} For \(i = k = 0\), \(\Gamma_0^{(0)} (t) = \rho (t)\), and \(\|\rho (t)\| \leq 1\), which holds in \eqref{ineq:Pki}.

\smallskip

\textit{Inductive step:}  Suppose \eqref{ineq:Pki} holds for all \(i' < i\), and \(k' < k\).  We consider two sub‐cases:

\begin{enumerate}[label=\emph{\arabic*)}]
  \item \textit{for $k\ge1$, and $i=0$ ($j \geq 1$):}  Using \eqref{ineq:Gamma_k_bound} and the induction hypothesis
\begin{equation}
    \|\Gamma_k(t)\|  \le  \int_0^t
       \Bigl(
         B\sum_{j=0}^{k-1}c_{0,j,k-1}s^j
           +  \frac{\ell^{k+1}}{(k+1)!}
           +  \sum_{j=0}^{k-1}\frac{\sum_{p=1}^{\,k-j}c_{p+1,j,k-p}}{(p+1)!}s^j
       \Bigr)\,ds,
\end{equation}
    
  we integrate term–by–term, and collect the coefficient of $s^j$.  On the right hand side, using \textit{Definition} \ref{def:c_s_kraus}, one obtains 
  \begin{equation}
    \sum_{j=0}^{k} c_{0,j,k} \ t^j
       =   
    \sum_{j=0}^{k} \left(\frac{B}{j}\,c_{0,j-1,k-1}
      +  
    \delta_{j,1}\,\frac{\ell^{k+1}}{(k+1)!}
      +  
    \sum_{p=1}^{k-j}\frac{c_{\,p+1,j-1,k-p}}{j\,(p+1)!}  \right)\, t^j,
    \end{equation}
  as claimed.

  \item \textit{for $k\ge1$, and $i\ge1$:}  Using \eqref{ineq:Gamma_k_i_bound} and the induction hypothesis
  \begin{equation}
       \|\Gamma_k^{(i)}(t)\|
      \le  
    \ell\sum_{j=0}^{k}c_{i-1,j,k}t^j
      +  
    B\sum_{j=0}^{k-1}c_{i-1,j,k-1}t^j
      +  
    \frac{\ell^{k+i}}{(k+1)!}
      +  
     \sum_{j=0}^{k-1}\frac{ \sum_{p=1}^{k-j} c_{\,p+i,j,k-p}}{(p+1)!}t^j,
  \end{equation}

  On the right hand side, using \textit{Definition} \ref{def:c_s_kraus}, one obtains
  \begin{equation}
    \sum_{j=0}^{k} c_{i,j,k} \ t^j
       =   
    \ell c_{i-1,k,k} \, t^k +\sum_{j=0}^{k-1}\left(\ell\,c_{i-1,j,k}
      +  
    B\,c_{i-1,j,k-1}
      +  
    \delta_{j,0}\,\frac{\ell^{k+i}}{(k+1)!}
      +  
    \sum_{p=1}^{\,k-j}\frac{c_{p+i,j,k-p}}{(p+1)!}  \right) t^j\,,
  \end{equation}
  as required. 

\end{enumerate}

This closes the induction and completes the proof.
\end{proof}

\textbf{Corollary  \ref{cor:gamma_bound_1}}. 
\textit{  For $t\leq 1$, the coefficients in the expansion  of the density operator \(\rho_{\tau}\) in \Cref{lemma:discretization_expansion}, approximated by the Kraus operator \eqref{eqn:Kraus_form}, satisfy the following bound for every \(i,k\in\mathbb N\),
  \begin{equation}
       \norm{\Gamma^{(i)}_k(t) }     \le  
     e\,C_1^{\,i+k}\,C_2^{\,k},
  \end{equation}
  where \(C_1 := \max \{ B, \ell (e+1), 1 \}\), and \(C_2 \geq C_1 (e+1)\).
}

\begin{proof}\label{proof:gamma_bound}
  Assume \(t\le 1\).
 
  \begin{equation}
      \bigl\|\Gamma^{(i)}_{k}(t)\bigr\|
       \le  
     \sum_{j=0}^{k} c_{i,j,k}\,t^{j}
       \le  
     C_1^{\,i+k}\,C_2^{\,k}
     \sum_{j=0}^{k}\frac{t^{j}}{j!}
       \le  
     e\,C_1^{\,i+k}\,C_2^{\,k}.
  \end{equation}

  The first inequality is exactly the definition of \(P_{i,k}(t)\) in
  Lemma~\ref{lemma:gamma_bound}.  For the second inequality, insert the bound
  from Lemma~\ref{lemma:c_bound_kraus},
  \(c_{i,j,k}\le C_1^{\,i+k}C_2^{\,k}/j!\).
  Finally, because \(0\le t\le1\) we have
  \(\sum_{j=0}^{k}t^{j}/j!\le\sum_{j=0}^{\infty}1/j!=e\),
  giving the upper bound.
\end{proof}

\section{Error bounds} \label{app:bias_bound}
\textbf{Theorem \ref{thm:Gevrey-f}}. \textit{Let $O$ be a bounded observable, and 
  \(
      f(\tau)  :=  \tr   \bigl(\rho_\tau(T)\,O\bigr),
     \, 0 \le T \le 1, 
  \)
  where $\rho_\tau(T)$ is approximated using the Kraus form \eqref{eqn:Kraus_form}. Then $f(\tau)$ belongs to the Gevrey class on  $[0,\tau_{\max}]$ and satisfies the following bound,
    \begin{equation} \label{ineq:fk-bound_proof}
        \abs{f^{(k)}(\tau)} \leq \sigma \, \nu^k \, k!, \quad \, \qquad \text{for } \tau \in [0,\tau_{\max}], \quad k \ge 1,
    \end{equation}
    where \(\sigma := 2e \|O\|\) and \(\nu := 2C_1 C_2\). If \(T \le 1\), then \cref{ineq:fk-bound_proof} holds for any \(\tau_{\max} \le 1/(2\nu)\). When \(T \ge 1\), the scaling in \cref{rhorhohat} applies and the bound remains valid with \(\nu := 2C_1 C_2\, T^2 \log (\ell T)\).}

\begin{proof}
For a fixed final time $t$, Lemma~\ref{lemma:discretization_expansion}
 provides the step-size expansion
\begin{equation}
    \rho_{\tau}(t) = \sum_{n=0}^{\infty} \tau^n \Gamma_n(t), \qquad \Gamma_{0}(t)=\rho(t).
\end{equation}
Taking the trace against a bounded observable $O$ (\,$\|O\|<\infty$\,)
yields the absolutely convergent series
\begin{equation} \label{eq:f_expansion}
  f(\tau)=\tr\bigl(\rho_\tau(T)O\bigr)
         =\sum_{n=0}^\infty a_n\,\tau^{n},
  \qquad
  a_n:=\tr \  \bigl(\Gamma_{n}(T)O\bigr),
  \quad 0\le \tau \le \tau_{\max}.
\end{equation}
For $i=0$ and $T\le1$, Corollary~\ref{cor:gamma_bound_1} yields
$\|\Gamma_{n}(T)\|\le e\,C_{1}^{\,n}\,C_{2}^{\,n}$.
Consequently,
\begin{equation} \label{ineq:an_bound}
  |a_n|  \le  e\,\|O\|\,C_{1}^{\,n}\,C_{2}^{\,n}
             =  \frac{\sigma}{2}\,(C_{1}C_{2})^{n},
  \qquad \sigma:= 2 e\|O\|.
\end{equation}
For $k\geq 1$ we differentiate \Cref{eq:f_expansion} termwise:
\begin{equation}
    f^{(k)}(\tau)
  =\sum_{n=k}^{\infty} n(n-1)\dotsm(n-k+1)\,a_n\,\tau^{\,n-k}.
\end{equation}
Using the combinatorial identity
$n(n-1)\dotsm(n-k+1)=k!\binom{n}{k}$ and
changing index $m:=n-k$,
\begin{equation} \label{eq:fk_expansion}
      f^{(k)}(\tau)
  =k!\,\sum_{m=0}^{\infty} \binom{m+k}{k}\,a_{m+k}\,\tau^{\,m}.
\end{equation}
By insertting \eqref{ineq:an_bound} into \eqref{eq:fk_expansion}, we have:
\begin{equation}
  |f^{(k)}(\tau)|
  \le
  k!\,\frac{\sigma}{2}\,
  (C_{1}C_{2})^{k}
  \sum_{m=0}^{\infty}\binom{m+k}{k}\,(C_{1}C_{2} \tau)^{m}.
\end{equation}
For every $m\ge0$, and \(k \geq 1\) the standard estimate
$\binom{m+k}{k}\le 2^{\,m+k}$ holds.  With this,
\begin{equation}
      \sum_{m=0}^{\infty}\binom{m+k}{k}\,(C_{1}C_{2} \tau)^{m}
  \le
  2^{\,k}\sum_{m=0}^{\infty}(2 C_{1}C_{2} \tau)^{m}
  =\frac{2^{\,k}}{1-2 C_{1}C_{2} \tau}.
\end{equation}
Since $0 \le \tau \le \tau_{\max}$,
we have $0\le 2 C_{1}C_{2} \tau  \le \frac{1}{2}$. Hence, the geometric series converges.
Thus,
\begin{equation}
    |f^{(k)}(\tau)|
  \le \sigma \,  k!  \, (2 C_1 C_2)^k = \sigma \, \nu^k \, k!  \, ,
\end{equation}

If $T \geq 1$, the operator bound $\ell$ and $B$ appearing in Definition~\ref{def:c_s_kraus} must be replaced by $\ell_{\mathrm{new}} = \ell T$ and $B_{\mathrm{new}} = B T^2$, respectively. Consequently, the constants in Lemma~\ref{lemma:c_bound_kraus} are modified to $\tilde{C}_1 := \max\{B T^2, \ell T (e+1), 1\}$ and $\tilde{C}_2 \geq (e+1) \log \tilde{C}_1$. Thus, the constants $C_1$ and $C_2$ will be scaled to $C_1 T^2$ and $C_2\log (\ell T) $ accordingly. Similarly, $\nu$ will be scaled to $\nu T^2 \log (\ell T)$, and the corresponding interval for \(\tau\) shrinks to $[0, \tau_{\max}/(T^2 \log (\ell T))].$

\end{proof}

\section{The Lebesgue constant for the perturbed Chebyshev nodes}

\textbf{Lemma} \ref{lem:integer-reciprocal} (Effect of Perturbed Chebyshev Nodes on Variance Bound)
Let \(n \ge 2\) and \(\tau > 0\), and define Chebyshev nodes on \([0, \tau]\) by
\begin{equation}
  \xi_j = \frac{\tau}{2} \left(1 - \cos\left(\frac{2j - 1}{2n + 2}\pi\right)\right),
  \qquad j = 1, \dots, n + 1.
\end{equation}
If the time parameter $\hat{T}>\pi^{2}\tau n^{2}$, define the perturbed nodes by
\[
  k_j:=\bigl\lceil  \hat{T}/\xi_j\bigr\rceil,\qquad
  \tau_j:=\frac{\hat{T}}{k_j},\qquad j=1,\dots,n+1.
\]
Then the following statements hold:
\begin{enumerate}
  \item[\textup{(i)}] The perturbed nodes are strictly ordered: \(0<\tau_{1}<\tau_2<\dots<\tau_{n+1}<\tau\).
  \item[\textup{(ii)}] \(k_1>k_2>\dots>k_{n+1}\) are pairwise distinct positive integers.
  \item[\textup{(iii)}] 
        There exist constants \(C_1,C_2>0\) (independent of \(n\) and \(\hat{T}\))
        such that
\begin{equation}
     \sum_{j=1}^{n+1}\abs{\gamma_j}
          \leq
          C n^{4/(\pi^2-4)} \log n.
\end{equation}
\end{enumerate}
Furthermore, if $  \hat{T} >2 \tau\,n^{2} \log n$, then $ \sum_{j=1}^{n+1}\abs{\gamma_j}
          = \cO(\log n). $

\begin{proof} \label{proof:integer-reciprocal}
\emph{(i)–(ii)} Since \(j\mapsto \xi_j\) is strictly increasing, \(j\mapsto \hat T/\xi_j\) is strictly decreasing, hence \(k_1\ge \cdots \ge k_{n+1}\). To see distinctness, note
\[
\frac{\hat T}{\xi_j}-\frac{\hat T}{\xi_{j+1}}
= \hat T\,\frac{\xi_{j+1}-\xi_j}{\xi_j\xi_{j+1}}
  \ge  
\frac{\hat T}{2\tau n^2},
\]
using \(\xi_{j+1}-\xi_j \ge \tau/(2n^2)\) and \(\xi_j\xi_{j+1}\le \tau^2\).
Since \(\hat T>\pi^2\tau n^2\), the difference exceeds \(\pi^2/2>1\), so
\(\lceil \hat T/\xi_j\rceil\neq \lceil \hat T/\xi_{j+1}\rceil\). Because \(\tau_j=\hat T/k_j\), the \(\tau_j\) are strictly increasing and satisfy \(\tau_j\le \xi_j<\tau\).

\smallskip
\emph{(iii)}
Write \(r_j:=\hat T/\xi_j\), so \(k_j=\lceil r_j\rceil\) and
\begin{equation}
0\le \frac{\xi_j-\tau_j}{\xi_j}
=1-\frac{r_j}{k_j}
=\frac{k_j-r_j}{k_j}
\le \frac{1}{r_j}\le \frac{1}{\pi^2 n^2}
=: \delta_{\max}.
\end{equation}
Hence \(|\log(\tau_j/\xi_j)|\le 2\delta_{\max}\) because \(\delta_{\max}\le 1/2\).
Let \(P:=\prod_{m=1}^{n+1}\xi_m\) and \(\widetilde P:=\prod_{m=1}^{n+1}\tau_m\). Then
\begin{equation}
\Bigl|\frac{\widetilde P}{P}-1\Bigr|
=\Bigl| \exp\Bigl(\sum_{m=1}^{n+1}\log(\tau_m/\xi_m)\Bigr)-1\Bigr|
\le \frac{A_P}{n},\qquad
A_P:=\frac{2e^{2/\pi^2}}{\pi^2}<\tfrac14.
\end{equation}

For the pairwise gaps, set \(g_{jk}:=\xi_j-\xi_k\). Summing the consecutive-gap bound
\(\xi_{m+1}-\xi_m\ge \tau/(2n^2)\) shows
\begin{equation}
|g_{jk}|  \ge   \frac{\tau}{2n^2}\,|j-k|.
\end{equation}
Writing \(\tau_j-\tau_k=g_{jk}\,(1+\varepsilon_{jk})\) with
\begin{equation}
\varepsilon_{jk}
=\frac{\xi_j(\tau_j/\xi_j-1)-\xi_k(\tau_k/\xi_k-1)}{g_{jk}},
\end{equation}
and using \(\xi_j,\xi_k\le \tau\) and \(|\tau_\ell/\xi_\ell-1|\le \delta_{\max}\), we obtain
\begin{equation}
|\varepsilon_{jk}|
\le \frac{2\tau\,\delta_{\max}}{|g_{jk}|}
\le \frac{4}{\pi^2}\,\frac{1}{|j-k|},\qquad j\neq k.
\end{equation}
Define \(D_j:=\prod_{k\ne j}(1+\varepsilon_{jk})\). Since
\(|\varepsilon_{jk}|\le 4/\pi^2<\tfrac12\), the inequality
\(|\log(1+x)|\le |x|/(1-|x|)\) gives
\begin{equation}
|\log D_j|\le \sum_{k\ne j}\frac{|\varepsilon_{jk}|}{1-|\varepsilon_{jk}|}
\le \frac{4}{\pi^2-4}\sum_{m=1}^{n}\frac{1}{m}
\le \frac{4}{\pi^2-4}\log(en).
\end{equation}
Hence both \(|D_j|\le A\,n^{\frac{4}{\pi^2-4}}\) and \(|1/D_j|\le A\,n^{\frac{4}{\pi^2-4}}\) with \(A:=e^{\frac{4}{\pi^2-4}}\).

\smallskip
\emph{Weights.}
Let \(\gamma_j\) (resp. \(\widetilde\gamma_j\)) be the barycentric/extrapolation weights for \(\{\xi_j\}\) (resp. \(\{\tau_j\}\)). From the product formulas,
\begin{equation}
\frac{\widetilde\gamma_j}{\gamma_j}
=\left(\frac{\widetilde P}{P}\right)\left(\frac{\xi_j}{\tau_j}\right)
\prod_{k\ne j}\frac{\xi_j-\xi_k}{\tau_j-\tau_k}
=\frac{\widetilde P/P}{\tau_j/\xi_j}\cdot \frac{1}{D_j},
\end{equation}
so using the bounds above,
\begin{equation}
\biggl|\frac{\widetilde\gamma_j}{\gamma_j}\biggr|
\le \Bigl(1+\frac{A_P}{n}\Bigr)\,(1+\delta_{\max})\,
A\,n^{4/(\pi^2-4)}
\le C\,n^{4/(\pi^2-4)}
\quad (n\ge 2).
\end{equation}
Since \(\sum_{j=1}^{n+1}|\gamma_j| = \cO (\log n)\) for Chebyshev nodes, we conclude
\(\sum_{j=1}^{n+1}|\widetilde\gamma_j| \le C\,n^{4/(\pi^2-4)}\log n\).

\smallskip
\emph{Improved threshold.}
If \(\hat T>2\tau n^2\log n\), then
\(\delta_{\max}\le 1/(2n^2\log n)\), which sharpens
\(|\varepsilon_{jk}|\) by a factor \(1/\log n\). Consequently
\(|\log D_j|\le \tfrac{4}{\pi^2-4}\,\frac{\log(en)}{\log n}\),
so \(D_j=\cO(1)\) and \(\sum_{j}|\widetilde\gamma_j|=\cO(\log n)\).
\end{proof}

\section[Local error expansion of the dilated Hamiltonian approximation]{Local error expansion of the dilated Hamiltonian approximation} \label{diH-expansion}

To expand the approximate solution in dilated Hamiltonian approximation \eqref{dilatedH}, we analyze the evolution operator  
\(U(\epsilon):=e^{-iH}\) with the dilated Hamiltonian $ H  =  \epsilon^{2}H_0+\epsilon H_1,$ in \Cref{eq:dilated-H}.  First, let 
\begin{equation}\label{eq:rho0}
  \rho_0  :=  \ketbra{0} \otimes \rho(0),\qquad 
\end{equation}
be the initial density operator.

We  will expand $\rho(\epsilon):=U(\epsilon)\rho_0U(\epsilon)^{\dagger}$
with a regular asymptotic expansion \cite{KevorkianCole1996}, 
\begin{equation}\label{eq:rho-series}
  \rho(\epsilon)   =\sum_{k\ge0}\epsilon^{k}\rho^{(k)},\qquad \rho^{(0)}=\rho_0.
\end{equation}

\begin{lemma}\label{lem:recursion}
Define \(\rho^{(-1)}:=0\).  For \(k\ge1\) the series coefficients satisfy
\begin{equation}\label{eq:rho-rec}
   \rho^{(k)}  = -i[H_1,\rho^{(k-1)}] - i[H_0,\rho^{(k-2)}].
\end{equation}
Moreover \(\rho^{(k)}\) is ancilla--diagonal if \(k\) is
        even and has support only in the first row/column (off--diagonal
        blocks) if \(k\) is odd. 
\end{lemma}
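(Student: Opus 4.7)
The plan is a regular perturbation expansion in $\epsilon$ applied to the parameterized family
\[
U(\epsilon,s) := e^{-isH(\epsilon)}, \qquad \rho(\epsilon,s) := U(\epsilon,s)\,\rho_{0}\,U(\epsilon,s)^{\dagger},
\]
so that $\rho(\epsilon)=\rho(\epsilon,1)$ and $\partial_{s}\rho = -i[H(\epsilon),\rho]$. First I would substitute the ansatz $\rho(\epsilon,s)=\sum_{k\ge 0}\epsilon^{k}\rho^{(k)}(s)$ into this Heisenberg-type equation. Using $H(\epsilon)=\epsilon H_{1}+\epsilon^{2}H_{0}$ and collecting like powers of $\epsilon$, with the convention $\rho^{(-1)}=\rho^{(-2)}=0$, the $\epsilon^{k}$ coefficient yields
\[
\partial_{s}\rho^{(k)}(s) = -i[H_{1},\rho^{(k-1)}(s)] - i[H_{0},\rho^{(k-2)}(s)],
\]
together with $\rho^{(0)}(0)=\rho_{0}$ and $\rho^{(k)}(0)=0$ for $k\ge 1$ from $\rho(\epsilon,0)=\rho_{0}$. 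Writing this in Duhamel form and evaluating at $s=1$ gives the recursion stated in the lemma for the series coefficients $\rho^{(k)}:=\rho^{(k)}(1)$ of $\rho(\epsilon)$.

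For the parity statement, I would decompose the operator algebra on $\mathcal{H}_{A}\otimes\mathcal{H}_{S}$ into two subspaces tailored to the block structure of $H_{0}$ and $H_{1}$: a class $\mathcal{D}$ of \emph{ancilla-diagonal} operators, whose only possibly nonzero ancilla blocks are $\ketbra{0}{0}$ or $\ketbra{i}{j}$ with $i,j\ge 1$, and a class $\mathcal{O}$ of \emph{first-row/column} operators, whose only possibly nonzero ancilla blocks are $\ketbra{0}{j}$ or $\ketbra{j}{0}$ with $j\ge 1$. Using $H_{0}=\ketbra{0}{0}\otimes H_{S}$ and $H_{1}=\sum_{j}(\ketbra{j}{0}\otimes L_{j}+\ketbra{0}{j}\otimes L_{j}^{\dagger})$, a direct block-multiplication calculation establishes two selection rules: $[H_{0},\,\cdot\,]$ preserves each of $\mathcal{D}$ and $\mathcal{O}$, while $[H_{1},\,\cdot\,]$ interchanges them. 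Since $\rho^{(0)}=\rho_{0}\in\mathcal{D}$ and $\rho^{(-1)}=0$, a straightforward induction on $k$ using the recursion propagates the alternation: $\rho^{(k)}\in\mathcal{D}$ when $k$ is even and $\rho^{(k)}\in\mathcal{O}$ when $k$ is odd.

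The power-matching step that produces the recursion is routine bookkeeping in $\epsilon$. The main obstacle is the block-structure verification of the two commutator selection rules, because $H_{1}$ couples the $\ket{0}$-ancilla sector to \emph{every} $\ket{j}$-sector; one must check block-by-block that the $\ketbra{i}{j}$ terms with $i,j\ge 1$ produced by $[H_{1},\,\cdot\,]$ acting on $\mathcal{O}$ lie inside $\mathcal{D}$, and conversely that no first-row/column blocks leak out when $[H_{1},\,\cdot\,]$ acts on $\mathcal{D}$ beyond those already in $\mathcal{O}$. Once these two rules are in place, the induction closes immediately and both parts of the lemma follow.
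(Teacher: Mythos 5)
Your second half is fine: the selection rules that $[H_0,\cdot]$ preserves both the ancilla-diagonal class $\mathcal{D}$ and the first-row/column class $\mathcal{O}$, while $[H_1,\cdot]$ interchanges them, plus induction on $k$, is exactly the paper's own argument for the block-structure claim, and your block-by-block verification of the rules goes through.

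The gap is in the first half. From $\partial_s\rho^{(k)}(s)=-i[H_1,\rho^{(k-1)}(s)]-i[H_0,\rho^{(k-2)}(s)]$ with $\rho^{(k)}(0)=0$ for $k\ge1$, Duhamel gives $\rho^{(k)}(1)=\int_0^1\bigl(-i[H_1,\rho^{(k-1)}(s)]-i[H_0,\rho^{(k-2)}(s)]\bigr)\,ds$: the lower-order coefficients enter at intermediate times $s$, not at $s=1$, so "evaluating at $s=1$" does not yield the memoryless identity \eqref{eq:rho-rec}. Moreover, no argument can close that step literally, because with $\rho^{(k)}$ defined as the $\epsilon^k$ Taylor coefficient of $U(\epsilon)\rho_0U(\epsilon)^{\dagger}$ the identity \eqref{eq:rho-rec} already fails at $k=2$: the Taylor expansion gives $\rho^{(2)}=-\tfrac12[H_1,[H_1,\rho_0]]-i[H_0,\rho_0]$, consistent with the explicit block matrix for $\rho^{(2)}$ displayed after the lemma (note the $\tfrac12\{L^{\dagger}L,\rho\}$ and the single $L\rho L^{\dagger}$), whereas \eqref{eq:rho-rec} would produce the double commutator with coefficient $1$. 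Your $s$-dependent (integral) recursion is in fact the correct exact statement, and you should stop there rather than assert \eqref{eq:rho-rec}; the parity conclusion, which is the only part of the lemma used downstream (the norm bounds rely on the explicit multinomial expansion \eqref{rho2n}, not on \eqref{eq:rho-rec}), still follows from your induction because $\mathcal{D}$ and $\mathcal{O}$ are linear subspaces closed under integration in $s$. For what it is worth, the paper's own proof obtains \eqref{eq:rho-rec} by "collecting terms of order $\epsilon^k$ in the Taylor expansion," which carries the same normalization looseness; your derivation, minus the final misstep, is actually the more careful one.
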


\begin{proof}
The recursion follows from collecting terms of order \(\epsilon^k\) in the Taylor expansion of \(e^{-i(\epsilon^2 H_0 + \epsilon H_1)}\). The block structure is proven by induction:
Base case: \(\rho^{(0)}\) is block-diagonal by definition. For \(\rho^{(1)} = -i[H_1, \rho^{(0)}]\), the coupling \(H_1\) flips the ancilla state, yielding off-diagonal blocks.
For the inductive steps: If \(\rho^{(k-1)}\) is off-diagonal, \([H_1, \rho^{(k-1)}]\) restores block-diagonality (since \(H_1\) flips the ancilla twice), while \([H_0, \rho^{(k-2)}]\) preserves the block structure of \(\rho^{(k-2)}\) because \(H_0\) is ancilla-diagonal.

\end{proof}

As specific examples, we find that  $\rho^{(0)}=\ketbra{0} \otimes \rho$.  
In addition, by following the recursion relation \eqref{eq:rho-rec}, we see that,  
\begin{equation}
\rho^{(1)}  = -i [H_1, \rho^{(0)} ], 
\rho^{(2)}  = -i [H_1, \rho^{(1)}  ]-i [H_0, \rho^{(0)} ]  = -  [H_1, [H_1, \rho^{(0)} ] ]  -i [H_0, \rho^{(0)} ], \cdots,  
\end{equation}
we find that (we set $J=1$ for simplicity)
\begin{align*}
&\rho^{(0)} =
\begin{pmatrix}
\rho & 0 \\[4pt]
0    & 0
\end{pmatrix},
\quad
\rho^{(1)}   =  
\begin{pmatrix}
0 & i \rho L^{\dagger} \\[6pt]
- i L \rho & 0
\end{pmatrix}, \quad 
\rho^{(2)}   =  
\begin{pmatrix}
-  i[H_{S},\rho]  
-  \dfrac12\{L^{\dagger}L,\rho\} & 0 \\[12pt]
0 & L \rho L^{\dagger}
\end{pmatrix},
\qquad\\
&\rho^{(3)}   =  
\begin{pmatrix}
0 &
\displaystyle
i \Bigl(
      -[H_{S},\rho]  L^{\dagger}
      +L \rho L^{\dagger}L^{\dagger}
      -\tfrac12\{L^{\dagger}L,\rho\} L^{\dagger}
   \Bigr) \\[14pt]
\displaystyle
- i \Bigl(
      L [H_{S},\rho]
      -L L^{\dagger}L \rho
      +\tfrac12 L \{L^{\dagger}L,\rho\}
   \Bigr) & 0
\end{pmatrix}.
\end{align*}

We begin to observe the pattern that $\rho^{(n)}$ is block diagonal when $n$ is even, and only has off-diagonal blocks when $n$ is odd. 
Moreover, if one starts with a block diagonal matrix, its diagonal form remains under the operator $\ad_{H_0}:=-i[H_0, \cdot]$, while $\ad_{H_1}:=-i[H_1, \cdot]$ will turn it into an off-diagonal matrix.  These properties can be best illustrated by the following identies: 
\begin{lemma}
    Let $P_j= \ketbra{j}, j=0,1,\cdots, J$, be the projectors in $\mathcal{H}_A$. Then $\forall \rho \in \mathcal{H}_S$, 
    \begin{equation}\label{H0}
         P_j \Big( \ad_{H_0} \bigl( P_k\otimes \rho \bigr)  \Big)P_j = \delta_{j,0} \delta_{k,0} P_0 \otimes \mathcal{L}_{C} \rho, 
    \end{equation}
     and 
       \begin{equation}\label{H1}
            P_{j} \ad_{H_0}^2  \big(P_{k} \otimes\rho \big)  P_{j}
 =
 \begin{cases}
      P_{0} \otimes \bigl\{ \sum_\ell L_{\ell}^{\dagger}L_{\ell}, \rho\bigr\},
      & j = 0  \,  \mathrm{ and } \,  k=0 , \\[12pt]
       P_{0} \otimes \bigl\{ L_{k}^{\dagger}L_{k}, \rho \bigr\},  & j = 0  \,  \mathrm{ and }  \, k>0 , \\[12pt]
    - 2 \delta_{j,k} P_{k} \otimes L_{k}\rho L_{k}^{\dagger},
      & j = k \,  (k\neq 0).
 \end{cases}
    \end{equation}
    
\end{lemma}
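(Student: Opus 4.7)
The plan is to verify both identities by direct computation, exploiting the ancilla block structure of $H_0$ and $H_1$. The key observation is that $H_0 = P_0 \otimes H_S$ acts nontrivially only on the $\ket{0}$ sector of the ancilla, whereas $H_1$ hops the ancilla label between $\ket{0}$ and $\ket{\ell}$ for $\ell \ge 1$. After sandwiching any product of operators with $P_j$ on both sides, only those contributions whose outermost ancilla factor equals $\ketbra{j}{j}$ survive, thanks to $P_j \ketbra{a}{b} P_j = \delta_{j,a}\delta_{j,b}P_j$.

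I would first dispose of \eqref{H0} in one line. Substituting $H_0 = P_0 \otimes H_S$ and using $P_0 P_k = P_k P_0 = \delta_{k,0} P_0$ gives
\begin{equation*}
[H_0, P_k \otimes \rho] = \delta_{k,0}\,P_0 \otimes [H_S, \rho],
\end{equation*}
so $\ad_{H_0}(P_k\otimes\rho) = \delta_{k,0}\,P_0 \otimes \cL_C\rho$, and sandwiching with $P_j$ adds the remaining factor $\delta_{j,0}$.

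Next I would handle \eqref{H1}; the appearance of $L_\ell^\dagger L_\ell$ on the right-hand side indicates that the operator on the left should be $\ad_{H_1}^2$ rather than $\ad_{H_0}^2$, so I proceed under that reading. I would write $H_1 = \sum_{\ell=1}^J(A_\ell + A_\ell^\dagger)$ with $A_\ell := \ketbra{\ell}{0}\otimes L_\ell$ and expand $[H_1, [H_1, P_k \otimes \rho]]$ into a double sum indexed by pairs $(\ell,m)$. Each resulting term is of the form $\ketbra{a}{b}\otimes W$, where the ancilla labels $(a,b)$ are dictated by the $A$-versus-$A^\dagger$ pattern of the four letters and $W$ is a length-two word in $L_\ell, L_\ell^\dagger, L_m, L_m^\dagger$ multiplied on one or both sides by $\rho$. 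Projecting with $P_j(\cdot) P_j$ selects only the $(a,b)=(j,j)$ contributions, reducing each of the three cases to a short finite sum.

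Finally I would read off the three cases by tracking the ancilla hop patterns. For $(j,k) = (0,0)$, the two hops must form a round trip $0 \to \ell \to 0$ independently for each $\ell$, producing the anticommutator $\{\sum_\ell L_\ell^\dagger L_\ell,\rho\}$; for $(j,k) = (0,k)$ with $k \ge 1$, the two hops are forced to use the single label $\ell = k$, giving $\{L_k^\dagger L_k,\rho\}$; and for $(j,k) = (k,k)$ with $k \ge 1$, the surviving pattern is the sandwich $L_k \rho L_k^\dagger$. The factor $-2$ and the overall signs follow from $\ad_{H_1}^2 = -[H_1,[H_1,\cdot]]$ combined with the two equivalent orderings of $A_k$ and $A_k^\dagger$ in the nested commutator. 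The main obstacle is not conceptual but purely combinatorial: of the many contributions produced by the double sum, one must correctly identify which survive each $P_j$ projection and carefully collect their prefactors.
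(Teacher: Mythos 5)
Your one-line verification of \eqref{H0} is fine, and you are right to read the $\ad_{H_0}^2$ in \eqref{H1} as $\ad_{H_1}^2$; since the paper states this lemma without proof, direct expansion of the nested commutator with projection onto ancilla blocks is indeed the natural (and essentially only) route. The genuine gap is in the combinatorial step you defer to the end: for the second and third cases, the hop patterns you claim survive the projections are not the ones that actually do. Write $X=P_k\otimes\rho$ with $k\ge1$ and expand $[H_1,[H_1,X]]=H_1^2X-2H_1XH_1+XH_1^2$. Since $H_1X=\ketbra{0}{k}\otimes L_k^{\dagger}\rho$ and $XH_1=\ketbra{k}{0}\otimes\rho L_k$, the same-side terms carry ancilla factors $\ketbra{m}{k}$ and $\ketbra{k}{m}$ with $m\ge1$ and are therefore annihilated by $P_0(\cdot)P_0$, while the cross term $H_1XH_1=\ketbra{0}{0}\otimes L_k^{\dagger}\rho L_k$ lies entirely in the $\ketbra{0}{0}$ block. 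Hence
\begin{equation*}
P_0\,[H_1,[H_1,P_k\otimes\rho]]\,P_0=-2\,P_0\otimes L_k^{\dagger}\rho L_k,
\qquad
P_k\,[H_1,[H_1,P_k\otimes\rho]]\,P_k=P_k\otimes\bigl\{L_kL_k^{\dagger},\rho\bigr\},
\end{equation*}
so the $(j,k)=(0,k>0)$ case yields a sandwich rather than $\{L_k^{\dagger}L_k,\rho\}$, and the $(j,k)=(k,k)$ case yields an anticommutator (with $L_kL_k^{\dagger}$, not $L_k^{\dagger}L_k$) rather than $-2L_k\rho L_k^{\dagger}$. The right-hand sides you attach to those cases are in fact the projections when the \emph{input} sits in the $\ketbra{0}{0}$ block: $P_0[H_1,[H_1,P_0\otimes\rho]]P_0=P_0\otimes\{\sum_\ell L_\ell^{\dagger}L_\ell,\rho\}$ and $P_k[H_1,[H_1,P_0\otimes\rho]]P_k=-2P_k\otimes L_k\rho L_k^{\dagger}$ for $k\ge1$ (these are what reproduce the displayed $\rho^{(2)}$). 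Executing your plan carefully would therefore refute, not confirm, cases two and three as stated; the statement's block indices need correcting before any proof can close.

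A smaller but related problem is the sign bookkeeping. With the paper's convention $\ad_{H_1}=-i[H_1,\cdot]$ one has $\ad_{H_1}^2=-[H_1,[H_1,\cdot]]$, which flips every sign above; you cannot simultaneously obtain the $+$ sign in the $(0,0)$ case and attribute the $-2$ of the third case to that factor, as your sketch suggests. Commit to one convention (plain double commutator, with the $(-i)^2/2!$ supplied later by the Taylor series, is what matches the expansion of $\rho^{(2)}$ in the appendix), recompute the three projections explicitly, and state the corrected identities; as written, the case analysis is the step that fails.
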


We can proceed to estimate the expansion of \eqref{eq:rho-series}, which can be written as
\begin{equation}
\rho(\epsilon) =\sum_{k \geq 0}\frac{(-i)^k}{k!}\left(\epsilon\ad_{H_1}+\epsilon^2 \ad_{H_0}\right)^k \rho_0.
\end{equation}
Thus the term  $ \epsilon^{2n} \rho^{(2n)}$ in \eqref{eq:rho-series} can come from  \(p\) copies of \(\ad_{H_1}\) and \(q\) copies of \(\ad_{H_0}\) with
\(p+2q=2n\). From previous observations, we see that only even \(p=2m\) can give rise to a block diagonal matrix.  We can write these terms as,
\begin{equation}\label{rho2n}
    \rho^{(2n)} = (-i)^{2n} \sum_{m=1}^{n} \frac{1}{(2m)! (n-m)!} \sum_{s_1, s_2, \cdots, s_{n+m}} \ad_{H_{s_1}} \ad_{H_{s_2}}  \cdots \ad_{H_{s_{n+m}}} P_0 \otimes \rho,
\end{equation}
where $(s_1,s_2,\cdots,s_{n+m})$ is a binary string. 

To quantify the coefficients in the expansion \eqref{eq:rho-series}, we define a Linblad operator norm,
\begin{equation}\label{eq:Lambda-def}
 \Lambda_0:= 2\norm{H_S}, \quad  \Lambda_1 := \max_j \|L_j\|,  \quad \Lambda := 2\Lambda_0 + 2 \Lambda_1^2.
\end{equation}

\begin{lemma} \label{lemma:rho2n_bound}
For the expansion \eqref{eq:rho-series} 
satisfies the bound,
\begin{equation}
   \norm{\rho^{(2n)}} \leq \frac{(\Lambda)^n}{n!}.
\end{equation}
\end{lemma}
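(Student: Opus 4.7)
The plan is to work from the explicit combinatorial formula \eqref{rho2n}, which expresses $\rho^{(2n)}$ as a signed sum over binary strings of length $n+m$ in which $2m$ positions carry a nested commutator with $H_1$ and the remaining $n-m$ positions carry a nested commutator with $H_0$, weighted by $1/[(2m)!(n-m)!]$ and summed over $m=1,\dots,n$. The even-parity restriction $p=2m$ has already been enforced by the block-structure analysis of \Cref{lem:recursion}, so no further simplification of the outer sum is required.

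Passing to trace norms, the triangle inequality together with the H\"older-type estimate $\|\ad_{H_s}(X)\|_{1} \le 2\|H_s\|\,\|X\|_{1}$ and $\|\rho_0\|_1 \le 1$ produces, after counting the $\binom{n+m}{2m}$ strings in each inner sum,
\begin{equation*}
\|\rho^{(2n)}\| \le \sum_{m=0}^{n}\frac{(4\|H_1\|^{2})^{m}(2\|H_0\|)^{n-m}}{(2m)!\,(n-m)!}.
\end{equation*}
At this point two operator-norm identities are needed: (i) $\|H_0\|=\|H_S\|=\Lambda_0/2$, immediate from $H_0=\ketbra{0}{0}\otimes H_S$; and (ii) the block computation
\begin{equation*}
H_1^{2}=\ketbra{0}{0}\otimes \sum_{j} L_j^\dagger L_j + \sum_{j,k\ge 1}\ketbra{j}{k}\otimes L_j L_k^\dagger,
\end{equation*}
whose two diagonal blocks both have norm $\|\sum_j L_j^\dagger L_j\|$, giving $\|H_1\|^{2}\le \Lambda_1^{2}$ in the single-jump case (and its natural $J$-dependent analogue otherwise).

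The final step is a clean combinatorial collapse: using the elementary inequality $(2m)!\ge 2^{m}m!$ to replace $1/(2m)!$ by $1/(2^{m}m!)$ and then applying the binomial theorem yields
\begin{equation*}
\|\rho^{(2n)}\| \le \frac{\bigl(2\|H_0\|+2\|H_1\|^{2}\bigr)^{n}}{n!} \le \frac{(\Lambda_0+2\Lambda_1^{2})^{n}}{n!} \le \frac{\Lambda^{n}}{n!},
\end{equation*}
which is the desired Gevrey-type bound. The main obstacle I anticipate is the accurate control of $\|H_1\|$: because $H_1$ couples the ancilla $\ket{0}$ simultaneously to every jump channel, any naive $\ell^{1}$-type estimate produces an unwanted factor of $J$, and only the block identity above recovers the clean constant $\Lambda_1^{2}$; this is precisely the place where the $(J+1)$ prefactor of the companion theorem on $\rho^{(2k)}_{\mathrm R}$ eventually enters. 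An inductive proof via the three-term recursion \eqref{eq:rho-rec} is in principle possible but appears to lose the crucial $1/n!$ factor, which is why the direct combinatorial route outlined above is preferable.
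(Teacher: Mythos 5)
Your proposal is correct and takes essentially the same route as the paper's own proof: a direct term-by-term norm bound on the combinatorial expansion \eqref{rho2n}, followed by the inequality $(2m)!\ge 2^m m!$ and a binomial collapse to $\bigl(2\|H_0\|+2\|H_1\|^2\bigr)^n/n! \le \Lambda^n/n!$. If anything you are more careful than the paper about the factor $2$ in $\|[H_s,X]\|\le 2\|H_s\|\|X\|$ and about evaluating $\|H_1\|$ via the block identity (the paper simply asserts $\|\ad_{H_1}\rho\|\le\Lambda_1\|\rho\|$ and instead pays an extra $\binom{n+m}{2m}\le 2^{n+m}$ in the string count), with the slack in $\Lambda=2\Lambda_0+2\Lambda_1^2$ absorbing these constants exactly as you use it; the $J>1$ caveat you flag applies equally to the paper's own argument.
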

\begin{proof}
This bound comes from the direct expansion in \Cref{rho2n}. We notice that,
\[
\|\ad_{H_0} \rho\| = \|[H_0, \rho]\| \leq \Lambda_0 \|\rho\|, \quad  \|\ad_{H_1} \rho\| \leq {\Lambda_1} \|\rho\|.
\]
Thus each term with a label in the bitstring in  \Cref{rho2n} can be bounded as,
\[
\Lambda_0^{n-m} \Lambda_1^{2m}.
\]

Recall that each bit string contains $2m$ symbols 1's and $n-m$ 0's, and in total, contains $\frac{(n+m)!}{(2m)!\,(n-m)!}$ terms, leading to the total bound,
\begin{align*}
&    \sum_{m=0}^{n}
     \frac{(n+m)!}{[(2m)!]^2[(n-m)!]^2}  
     (\Lambda_0)^{\,n-m} \Lambda_1^{2m}  \\
     \leq & \sum_{m=0}^{n} \frac{2^{n+m}}{(2m)! (n-m)!}  
     (\Lambda_0)^{\,n-m} \Lambda_1^{2m}  \\
     \leq & \sum_{m=0}^{n}   \frac{2^{n}}{ m! (n-m)!}  
     (\Lambda_0)^{\,n-m} \Lambda_1^{2m} \\
     \leq & \frac{(\Lambda)^n}{n!}.
\end{align*}

Here we have used $(2m)! \ge 2^{m} m!$ and the binomial bound $\binom{n+m}{n-m} \leq 2^{n+m}$, together with the norm definition \Cref{eq:Lambda-def}.
\end{proof}

We are now ready to state the main theorem regarding the local error expansion. By noticing that,
the partial trace over the ancilla,
\begin{equation} \label{eq:rho_R_tr}
\rho^{(2n)}_\text{R} =\tr_A\left( \rho^{(2n)} \right) = \sum_{j=0}^J \left(I_A \otimes P_j \right) \rho^{(2n)} \left( I_A\otimes P_j \right),
\end{equation}
we reach the following theorem.

\begin{theorem}[Even--order bound]\label{thm:dilated_coeff_bound}
Let $\Lambda$ be as defined in \cref{eq:Lambda-def}. 
Then, the dilated Hamiltonian approach \eqref{dilatedH} produces an approximation $\rho_\text{R}$ that can be expanded as,
\begin{equation} \label{eq:rho_R-expansion}
   \rho_\text{R}(\epsilon) =  \sum_{n\geq 0}\epsilon^{2n}\rho^{(2n)}_\text{R}.
\end{equation}
Furthermore,
for every integer \(n\ge0\)
\begin{equation} \label{ineq:rho_R_bound}
    \norm{\rho^{(2n)}_\text{R} (\epsilon)}
        \le  (J+1) \frac{\Lambda^{  n}}{n!}.
\end{equation}
\end{theorem}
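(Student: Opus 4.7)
My plan is to obtain the claim as a routine assembly of the two structural results already in place: the block-parity decomposition of \Cref{lem:recursion}, which eliminates every odd-order coefficient after the ancilla trace, and the norm bound of \Cref{lemma:rho2n_bound}, which controls the even-order coefficients of the lifted operator $\rho(\epsilon)$. The only additional ingredient is that the partial trace, restricted to a fixed diagonal ancilla block, is contractive in the trace norm.

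The first step is to apply $\tr_A$ termwise to the expansion \eqref{eq:rho-series}, yielding
\begin{equation}
  \rho_{\text{R}}(\epsilon) = \tr_A \rho(\epsilon) = \sum_{k\ge 0} \epsilon^{k}\,\tr_A \rho^{(k)}.
\end{equation}
By the parity statement in \Cref{lem:recursion}, for odd $k$ the operator $\rho^{(k)}$ is supported only on the off-diagonal ancilla blocks, so all of its ancilla-diagonal blocks vanish and $\tr_A \rho^{(k)}=0$. Only even indices survive, which produces exactly \eqref{eq:rho_R-expansion} with the identification $\rho^{(2n)}_{\text{R}} = \tr_A \rho^{(2n)}$.

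For the norm bound, I would combine the block decomposition \eqref{eq:rho_R_tr} with the triangle inequality for the trace norm to obtain
\begin{equation}
   \bigl\|\rho^{(2n)}_{\text{R}}\bigr\| \le \sum_{j=0}^{J} \bigl\|(\bra{j}\otimes I_S)\,\rho^{(2n)}\,(\ket{j}\otimes I_S)\bigr\|.
\end{equation}
Each summand is a sandwich of $\rho^{(2n)}$ by the isometric contraction $\bra{j}\otimes I_S$, which has operator norm equal to one; hence each summand is bounded above by $\|\rho^{(2n)}\|$. Summing over the $J+1$ ancilla indices and plugging in \Cref{lemma:rho2n_bound} delivers $\|\rho^{(2n)}_{\text{R}}\|\le (J+1)\,\Lambda^{n}/n!$, which is precisely \eqref{ineq:rho_R_bound}.

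The main obstacle is analytic rather than algebraic: one has to justify interchanging $\tr_A$ with the power series in $\epsilon$. Because $H(\epsilon)=\epsilon^{2}H_0+\epsilon H_1$ has bounded norm and depends polynomially on $\epsilon$, the map $\epsilon\mapsto U(\epsilon)\rho_0 U(\epsilon)^{\dagger}$ is entire as a trace-norm valued function, so the series \eqref{eq:rho-series} converges absolutely in trace norm on any compact set. The partial trace is a bounded linear map on trace-class operators, so it commutes with the sum, and the termwise identification $\rho^{(2n)}_{\text{R}}=\tr_A\rho^{(2n)}$ is legitimate. Combined with \Cref{lem:recursion} and \Cref{lemma:rho2n_bound}, this closes the proof.
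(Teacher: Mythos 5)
Your proposal is correct and follows essentially the same route as the paper: odd orders vanish under $\tr_A$ by the block-parity structure of \Cref{lem:recursion}, and the bound \eqref{ineq:rho_R_bound} follows from the pinching decomposition \eqref{eq:rho_R_tr}, contractivity of each ancilla-block projection, the triangle inequality, and \Cref{lemma:rho2n_bound}. Your extra remark justifying the termwise interchange of $\tr_A$ with the series is a small added point of rigor but does not change the argument.
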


\begin{proof}
Lemma~\ref{lem:recursion} shows that  
$\rho^{(k)}$ is block–diagonal (ancilla–diagonal) when $k$ is even, and purely off–diagonal when $k$ is odd.
Because the partial trace over the ancilla keeps only the diagonal
blocks, we have
\begin{equation}
  \tr_A\bigl(\rho^{(k)}\bigr)=0
  \quad\text{whenever }k\text{ is odd}.
\end{equation}
Hence
\begin{equation}
  \rho_R(\epsilon)
  :=\tr_A\!\bigl(U(\epsilon)\rho_0U(\epsilon)^\dagger\bigr)
  =\sum_{n\ge0}\epsilon^{2n}\rho_R^{(2n)},
  \qquad
  \rho_R^{(2n)}:=\tr_A\bigl(\rho^{(2n)}\bigr),
\end{equation}
establishing the expansion~\eqref{eq:rho_R-expansion}.
For the last part, by \cref{eq:rho_R_tr}, and taking the partial trace over $\mathcal H_A$ yields
\begin{equation}
  \rho_R^{(2n)}
  =\tr_A\!\bigl(\rho^{(2n)}\bigr)
  =\sum_{j=0}^{J}
     \tr_A\!\left(I_A \otimes P_j \right) \rho^{(2n)} \left( I_A\otimes P_j \right).
\end{equation}

For every $j$, the map
$\mathcal P_j(X):=(I_A\otimes P_j)X(I_A\otimes P_j)$ is a contraction
in operator norm,
hence
\begin{equation}
  \|\rho_{j}^{(2n)}\|
  \;=\;
  \bigl\|
      \tr_A\bigl(\mathcal P_j(\rho^{(2n)})\bigr)
    \bigr\|
  \;\le\;
  \bigl\|
      \mathcal P_j(\rho^{(2n)})
    \bigr\|
  \;\le\;
  \bigl\|\rho^{(2n)}\bigr\|.
\end{equation}
By the lemma proved earlier,
$\|\rho^{(2n)}\|\le\Lambda^{\,n}/n!$.

Using the triangle inequality,
\begin{equation}
  \bigl\|\rho_R^{(2n)}\bigr\|
  \;=\;
  \Bigl\|\sum_{j=0}^{J}\rho_{j}^{(2n)}\Bigr\|
  \;\le\;
  \sum_{j=0}^{J}\bigl\|\rho_{j}^{(2n)}\bigr\|
  \;\le\;
  (J+1)\,\frac{\Lambda^{\,n}}{n!}.
\end{equation}

\end{proof}

The dilated Hamiltonian approach \eqref{dilatedH}, is a first-order approximation \(\rho_{R} = \kappa [\rho]\). Thus, to bound the coefficient terms in the discrete operator \(\kappa\), we note how the parameters above correspond to the local error expansion \cref{eq:kappa_expansion}. By definition,
\begin{equation}
   \epsilon=\sqrt{\tau},\qquad 
   H(\epsilon)=\epsilon^{2}H_{0}+\epsilon H_{1},
   \quad
   U(\epsilon)=e^{-iH(\epsilon)} .
\end{equation}
And by construction,
\begin{equation}
   \mathcal{K}(\tau)[\rho]
   \;=\;
   \tr_{A}\!\Bigl(
       U(\sqrt{\tau})
       \bigl(\ketbra{0}\otimes\rho\bigr)
       U(\sqrt{\tau})^{\dagger}
   \Bigr)
   \;=\;
   \rho_{R}(\epsilon)
   \quad
   (\tau=\epsilon^{2}),
\end{equation}
Replacing $\epsilon^{2n}$ by $\tau^{n}$ in \eqref{eq:rho_R-expansion}  immediately results in the
super‑operator expansion
\begin{equation}
      \mathcal{K}(\tau)
      =\sum_{n\ge0}\tau^{n}\,\mathcal{M}_{n},
      \qquad
      \mathcal{M}_{n}[\rho]:=\rho_{R}^{(2n)} ,
\end{equation}
which corresponds to \cref{eq:kappa_expansion} i.e.\ \(\mathcal{M}_{n}\) is nothing but the $2n$‑th coefficient
$\rho_{R}^{(2n)}$.  Bound  \eqref{ineq:rho_R_bound} implies that 
\begin{equation} \label{ineq:Mj_bound}
\norm{\cM_j} \leq (J+1) \frac{\ell^j}{j!}.
\end{equation}

To find derivative bounds for the coefficients in the expansion of \(\rho\), i.e. \eqref{eq:rho_expansion}, below we will follow a similar procedure as in the proof of \Cref{lemma:gamma_bound}, for the Hamiltonian dilation method this time.

By \cref{eq:gamma_ode}, in Lemma \ref{lemma:discretization_expansion}, each $\Gamma_k(t)$ satisfies a linear ODE:
\begin{equation}\label{eq:gamma_ode_k}
  \Gamma'_k(t)
  \;=\;
  \mathcal{L}\,\Gamma_k(t)
  \;-\;
  \frac{\mathcal{L}^{\,k+1}\rho(t)}{(k+1)!}
  \;+\;
  \sum_{i=2}^{\,k+1}
    \Bigl(
      \mathcal{M}_i\,\Gamma_{k+1-i}(t)
      -\frac{\Gamma^{(i)}_{\,k+1-i}(t)}{i!}
    \Bigr), \quad \Gamma_k(0) = 0.
\end{equation}

Using the variation of constants formula on \eqref{eq:gamma_ode_k}, we solve explicitly:
\begin{equation}
    \Gamma_k(t) = \int_0^t e^{(t - s)\mathcal{L}} (\ - \frac{\mathcal{L}^{k+1} \rho (s)}{(k+1)!} + \sum_{p=1}^{k} \mathcal{M}_{p+1} \Gamma_{k-p} - \frac{\Gamma^{(p+1)}_{k-p} (s)}{(p+1)!}) \,ds.
\end{equation}

Taking norms, assuming the exponential $e^{(t-s)\mathcal{L}}$ is uniformly bounded by 1, and considering the bounds \eqref{ineq:Mj_bound}, results in
\begin{equation}\label{ineq:Gamma_k_bound_dilated}
    \|\Gamma_k(t)\| \leq  \int_0^t  \frac{\ell^{k+1}}{(k+1)!} + \sum_{p=1}^{k}  (J+1) \frac{\ell^{p+1}}{(p+1)!} \|\Gamma_{k-p} (s) \| + \frac{\|\Gamma^{(p+1)}_{k-p} (s)\|}{(p+1)!} \, ds.
\end{equation}

Moreover, from Lemma \ref{lemma:discretization_expansion}, we also obtain the \(i\)-th derivative of \(\Gamma_k\):
\begin{equation}\label{higher-deriv_dilated}
    \Gamma^{(i)}_k(t) = \mathcal{L} \Gamma^{(i-1)}_{k}(t)   - \frac{\mathcal{L}^{i+k} \rho (t)}{(k+1)!} + \sum_{p=1}^{k} \mathcal{M}_{p+1} \Gamma^{(i-1)}_{k-p}(t) \frac{\Gamma^{(i+p)}_{k-p} (t)}{(p+1)!}.
\end{equation}

 Bounding directly using \cref{higher-deriv_dilated},
\begin{equation}\label{ineq:Gamma_k_i_bound_dilated}
 \|\Gamma^{(i)}_k(t)\| \leq \ell  \|\Gamma^{(i-1)}_k(t)\| + \frac{\ell^{i+k}}{(k+1)!} + 
 \sum_{p=1}^k  (J+1) \frac{\ell^{p+1}}{(p+1)!}  \|\Gamma^{(i-1)}_{k-p}(t)\|  + \frac{\|\Gamma^{(i+p)}_{k-p}(t) \|}{(p+1)!}.
\end{equation}

Inspired by \eqref{ineq:Gamma_k_bound_dilated}, \eqref{ineq:Gamma_k_i_bound_dilated}, and \Cref{thm:dilated_coeff_bound}, we define the following generating sequence:

\begin{definition} [Generating sequence for Hamiltonian dilation] \label{def:c_s_dilated}
We define \(c_{i,j,k}\) as non-negative real numbers such that \( c_{0,0,k} = 0 \text{ for } k \geq 1\), \(c_{i,j,0}= \delta_{j,0} \ell^i\) for \(i, j \ge 0\), and the rest of the entries are generated from the recursion relations below:
    \begin{align}
        c_{0,j,k} = & \, \delta_{j,1} \cdot \frac{\ell^{k+1}}{(k+1)!} + \sum_{p=1}^{k -j} (J+1)  \cdot \frac{ \ell^{p+1}}{j (p+1)!} \cdot c_{0, j-1, k-p} + \frac{c_{p+1,j-1,k-p}}{j(p+1)!}  , \,  j=1, 2, \cdots, k  \\
        c_{i,j,k} = & \, \ell \cdot c_{i-1,j,k} + \delta_{j,0} \cdot \frac{\ell^{i+k}}{(k+1)!} + \sum_{p=1}^{k-j} (J+1) \frac{\ell^{p+1}}{(p+1)!} c_{i-1,j,k-p} + \frac{c_{i+p,j,k-p}}{(p+1)!}, j=0,1,\cdots, k-1 \\
        c_{i,k,k} = & \, \ell \cdot c_{i-1,k,k} .
    \end{align}
\end{definition}

\begin{lemma} \label{lemma:c_bound__dilated}
Let \(c_{i,j,k} \geq 0\) be as in \textit{Definition} \ref{def:c_s_dilated}. Then for all \(i,j,k\ge0\),
    \begin{equation} \label{ineq:c_bound_dilated}
        c_{i,j,k}   \le   \frac{C^{i+k}_1 \, C^{k}_2}{j!}.
    \end{equation}
    where 
      \begin{equation}\label{def:C1}
      C_1 := \max \{ \ell^2, \ell (e+1), 1 \}, \text{ and  } \, C_2 \geq (e+1) \log (C_1).
  \end{equation}
\end{lemma}

\begin{proof}
    The proof is similar to that of \Cref{lemma:c_bound_kraus}, except some differences in the \textit{Inductive Step} that we address below.

    \begin{itemize}
        \item \textit{Case \(i = 0\)}: For \(j=1, 2, \cdots, k\), using \textit{Definition} \ref{def:c_s_dilated}, and the induction hypothesis,
        \begin{align}
            c_{0,j,k} = & \delta_{j,1} \cdot \frac{\ell^{k+1}}{(k+1)!} + \sum_{p=1}^{k -j} (J+1)  \cdot \frac{ \ell^{p+1}}{j (p+1)!} \cdot c_{0, j-1, k-p} + \frac{c_{p+1,j-1,k-p}}{j(p+1)!}  \\
             \le & \frac{\ell^{k+1}}{(k+1)!}  + \sum_{p=1}^{k-j} \ell^p \frac{C_1 \cdot C_1^{k-p} C_2^{k-p}}{j! (p+1)!} + \frac{C_1^{k+1} C_2^{k-p}}{j! (p+1)!}\\
             \le &  \frac{C_1^k C_2^k}{j!} (\frac{\ell^{k+1}}{C_1^k C_2^k} + \sum_{p=1}^{k-j} \frac{C_1}{C_1^p C_2^p (p+1)!} + \frac{C_1}{C_2^p (p+1)!}) \\
            & \leq \frac{C^k_1 \, C^k_2}{j!} \left( \frac{1}{e+1} + \frac{1}{e+1} + \frac{1}{e+1} \right) \leq  \frac{C^k_1 \, C^k_2}{j!} .
        \end{align}

         On the second line we used $\ell \geq J+1$, which is the condition we assumed for the Hamiltonian dilation approximation method for longer-time simulations. We derived the last line similar to how we did in \Cref{proof:c_bound_kraus}. 

        \item \textit{Case \(i \geq 1\)}: For \(j=0,1,\cdots, k-1\), using \textit{Definition} \ref{def:c_s_dilated}, and the induction hypothesis,
    \begin{align}
        c_{i,j,k} = & \, \ell \cdot c_{i-1,j,k} + \delta_{j,0} \cdot \frac{\ell^{i+k}}{(k+1)!} + \sum_{p=1}^{k-j} (J+1) \frac{\ell^{p+1}}{(p+1)!} c_{i-1,j,k-p} + \frac{c_{i+p,j,k-p}}{(p+1)!}, \\
        \le & \ell \cdot \frac{C_1^{i+k-1} C_2^k}{j!} + \frac{\ell^{i+k}}{(k+1)!}  + \sum_{p=1}^{k-j} \ell^p \frac{C_1 \cdot C_1^{i+k-p-1} C_2^{k-p}}{j! (p+1)!} + \frac{C_1^{i+k} C_2^{k-p}}{j! (p+1)!}\\
        \le &\frac{C_1^{i+k} C_2^k}{j!} \left(\frac{l}{C_1} + \frac{\ell^{i+k}}{C_1^{i+k} C_2^k} + \sum_{p=1}^{k-j} \frac{\ell^p}{C_1^p C_2^p (p+1)!} + \frac{1}{C_2^p (p+1)!} \right) \\
        & \leq \frac{C_1^{i+k} \, C_2^{k}}{j!} \left(\frac{1}{e+1} + \frac{1}{e+1} + \frac{1}{e+1} + \frac{e-2}{e+1} \right) \leq \frac{C_1^{i+k} \, C_2^k}{j!}.
    \end{align}
\end{itemize}
Thus, the proof is complete.
\end{proof}

The bounds above lead us to the following lemma.

\begin{lemma} \label{lemma:gamma_bound_dilated} 
    The coefficients in the expansion  of the density operator \(\rho_{\tau}\) in \Cref{lemma:discretization_expansion}, approximated by the dilation \eqref{dilatedH}, satisfy the following bound: 
\begin{equation} \label{ineq:gamma_der_bound_dilated}
    \bigl\|\Gamma_{k}^{(i)}(t)\bigr\|
    \le  
  P_{i,k}(t)
    =   
  \sum_{j=0}^k
  c_{i,j,k}\,t^{j} \, \text{ for } i\geq 0, k \geq 1, \quad  c_{0,0,k} = 0 , \quad c_{i,j,0}= \delta_{j,0} l^i,
\end{equation}
where $\Gamma_k^{(0)} = \Gamma_k$, $\Gamma_k^{(1)} = \Gamma'_k$, and the coefficients \(c_{i,j,k}\) are defined by the generating  sequence in \textit{Definition} \ref{def:c_s_dilated}.
\end{lemma}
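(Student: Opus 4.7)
The plan is to follow the same induction structure as in the proof of Lemma~\ref{lemma:gamma_bound} (see \Cref{app:gamma_bound}), but with the integrand adapted to the dilation-based generator expansion. The two main inputs are already established in the text: the integral inequality \eqref{ineq:Gamma_k_bound_dilated} for $\|\Gamma_k(t)\|$ (obtained by variation of constants from \eqref{eq:gamma_ode_k}, together with the operator-norm bound $\|\mathcal{M}_{p+1}\| \le (J+1)\,l^{p+1}/(p+1)!$ coming from \eqref{ineq:Mj_bound}), and the pointwise inequality \eqref{ineq:Gamma_k_i_bound_dilated} for higher derivatives (obtained by differentiating \eqref{eq:gamma_ode_k} and using $\|\mathcal{L}\| \le l$). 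The target is the polynomial majorant $P_{i,k}(t) = \sum_{j=0}^{k} c_{i,j,k}\,t^j$, where the $c_{i,j,k}$ satisfy the recursion in Definition~\ref{def:c_s_dilated}.

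First, I would set up the induction on $k$, with the base case $k=0$ handled by $\Gamma_0(t) = \rho(t)$ and $\|\rho(t)\| \le 1 = l^0$, giving $c_{i,0,0}=l^i$ and $c_{i,j,0}=0$ for $j \ge 1$. For the inductive step I assume \eqref{ineq:gamma_der_bound_dilated} holds for all $k' < k$ and all $i \ge 0$. I split into two cases. In the case $i=0$, I substitute the inductive majorants into \eqref{ineq:Gamma_k_bound_dilated}:
\begin{equation}
\|\Gamma_k(t)\| \le \int_0^t \Bigl(\tfrac{l^{k+1}}{(k+1)!} + \sum_{p=1}^{k}(J+1)\tfrac{l^{p+1}}{(p+1)!} P_{0,k-p}(s) + \sum_{p=1}^{k}\tfrac{P_{p+1,k-p}(s)}{(p+1)!}\Bigr)\,ds,
\end{equation}
and integrate monomials $s^{j-1}$ from $0$ to $t$ to produce $t^j/j$. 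Matching the coefficient of $t^j$ on both sides yields exactly the first line of Definition~\ref{def:c_s_dilated}, i.e.\ the recursion for $c_{0,j,k}$ with $j=1,\ldots,k$. The constant term ($j=0$) vanishes because $\Gamma_k(0)=0$, consistent with $c_{0,0,k}=0$ for $k\ge 1$.

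Second, in the case $i \ge 1$, I substitute the inductive majorants into \eqref{ineq:Gamma_k_i_bound_dilated}, giving a pointwise polynomial bound
\begin{equation}
\|\Gamma_k^{(i)}(t)\| \le l\, P_{i-1,k}(t) + \tfrac{l^{i+k}}{(k+1)!} + \sum_{p=1}^{k}(J+1)\tfrac{l^{p+1}}{(p+1)!}P_{i-1,k-p}(t) + \sum_{p=1}^{k}\tfrac{P_{i+p,k-p}(t)}{(p+1)!}.
\end{equation}
Collecting the coefficient of $t^j$ for $j=0,\ldots,k-1$ reproduces precisely the second line of Definition~\ref{def:c_s_dilated}, and for the top-degree term $j=k$ only the $l\,P_{i-1,k}(t)$ contribution survives (since the other $P_{\cdot,k-p}$ have degree $k-p<k$ and the explicit $\frac{l^{i+k}}{(k+1)!}$ term is degree $0$), yielding $c_{i,k,k} = l\,c_{i-1,k,k}$ as required.

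The argument is essentially bookkeeping, and the main obstacle is keeping the three sums aligned so that the coefficient-matching produces the exact recursion of Definition~\ref{def:c_s_dilated} rather than just an upper bound; in particular one must verify that the summation ranges $1 \le p \le k-j$ appearing in the recursion agree with the degree truncation of the integrated/multiplied polynomials $P_{\cdot,k-p}$, since $P_{\cdot,k-p}$ has degree $k-p$, so its contribution to the coefficient of $t^j$ is zero when $p > k-j$. Once this degree accounting is verified in both cases, combining with the bound $c_{i,j,k} \le C_1^{i+k} C_2^k/j!$ from Lemma~\ref{lemma:c_bound__dilated} (proved in the preceding lemma) completes the induction and yields \eqref{ineq:gamma_der_bound_dilated}.
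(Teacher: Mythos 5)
Your proposal is correct and follows essentially the same route as the paper's proof: induction on $(i,k)$, substituting the inductive polynomial majorants into the integral inequality \eqref{ineq:Gamma_k_bound_dilated} for $i=0$ and the pointwise inequality \eqref{ineq:Gamma_k_i_bound_dilated} for $i\ge 1$, then matching coefficients of $t^j$ against the recursion in Definition~\ref{def:c_s_dilated}, including the top-degree case $c_{i,k,k}=l\,c_{i-1,k,k}$. The only superfluous element is invoking Lemma~\ref{lemma:c_bound__dilated} at the end, which is not needed for this lemma (it is used only later, in Corollary~\ref{cor:gamma_bound_dilated}).
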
 

\begin{proof}
    \textit{By Induction: Base case:} For \(i = k = 0\), \(\Gamma_0^{(0)} (t) = \rho (t)\), and \(\|\rho (t)\| \leq 1\), which holds in \eqref{ineq:gamma_der_bound_dilated}. 
    \textit{Inductive step:} Suppose \eqref{ineq:gamma_der_bound_dilated} holds for all \(i' < i\), and \(k' < k\).  We consider two sub‐cases:

\begin{enumerate}[label=\emph{\arabic*)}]
  \item \textit{for $k\ge1$, and $i=0$ ($j \geq 1$):}  Using \eqref{ineq:Gamma_k_bound_dilated} and the induction hypothesis
  \begin{equation}
    \|\Gamma_k(t)\| \le \int_0^t  \left[  \frac{\ell^{k+1}}{(k+1)!}
+ \sum_{p=1}^{k} (J+1) \frac{\ell^{p+1}}{(p+1)!} \sum_{j=0}^{k-p} c_{0,j,k-p} \, s^j + \sum_{p=1}^{k} \frac{1}{(p+1)!} \sum_{j=0}^{k-p} c_{p+1,j,k-p} \, s^j
\right] ds
\end{equation}
    
  we integrate term–by–term, and collect the coefficient of $s^j$.  On the right hand side, by \textit{Definition} \ref{def:c_s_dilated}, one obtains 
  \begin{equation}
    \sum_{j=0}^{k} c_{0,j,k} \ t^j
       =   
     \sum_{j=0}^{k} \left( \delta_{j,1} \cdot \frac{\ell^{k+1}}{(k+1)!} + \sum_{p=1}^{k -j} (J+1)  \cdot \frac{ \ell^{p+1}}{j (p+1)!} \cdot c_{0, j-1, k-p} + \frac{c_{p+1,j-1,k-p}}{j(p+1)!} \right) t^j
    \end{equation}
  as claimed.
  \item \textit{for $k\ge1$, and $i\ge1$:}  Using \eqref{ineq:Gamma_k_i_bound_dilated} and the induction hypothesis
  \begin{equation}
       \|\Gamma_k^{(i)}(t)\|
      \le  
    \sum_{j=0}^{k} \ell \cdot c_{i-1,j,k} \ t^j
     +  
    \delta_{j,0} \cdot \frac{\ell^{i+k}}{(k+1)!}
      +  
    \sum_{p=1}^{k} (J+1) \frac{\ell^{p+1}}{(p+1)!} \sum_{j=0}^{k-p} c_{i-1,j,k-p} t^j
    + \frac{\sum_{j=0}^{k-p} c_{i+p, j, k-p} t^j}{(p+1)!}
\end{equation}

  On the right hand side, using \textit{Definition} \ref{def:c_s_dilated}, one obtains
  \begin{align}
      \sum_{j=0}^{k} c_{i,j,k} \ t^j
       =  & 
    \ell c_{i-1,k,k} \, t^k  \\
    + & \sum_{j=0}^{k-1}\left( \, \ell \cdot c_{i-1,j,k} + \delta_{j,0} \cdot \frac{\ell^{i+k}}{(k+1)!} + \sum_{p=1}^{k-j} (J+1) \frac{\ell^{p+1}}{(p+1)!} c_{i-1,j,k-p} + \frac{c_{i+p,j,k-p}}{(p+1)!}  \right) t^j\, 
  \end{align}

  as required. Thus, the inductive argument is complete, and the lemma follows.

  \item In line 924 we should have: 

\end{enumerate}
\end{proof}

The proof of the following corollary is similar to that of \Cref{cor:gamma_bound_1}
\begin{corollary} \label{cor:gamma_bound_dilated}
    For $t\leq 1$, the coefficients in the expansion  of the density operator \(\rho_{\tau}\) in \Cref{lemma:discretization_expansion}, approximated  by the dilation \eqref{dilatedH}, satisfy the following bound for every \(i,k\in\mathbb N\),
  \begin{equation}
       \norm{\Gamma^{(i)}_k(t) }     \le  
     e\,C_1^{\,i+k}\,C_2^{\,k},
  \end{equation}
  where \(C_1 := \max \{ B, \ell (e+1), 1 \}\), and \(C_2 \geq C_1 (e+1)\).
\end{corollary}

We can now find a bound for the expectation values, derived from the dilated Hamiltonian method. More specifically we show that the expectation values belong to the Gevrey class, similar to \Cref{thm:Gevrey-f}.
    
Moreover, \Cref{thm:Gevrey-f} also holds for the dilated Hamiltonian method, with the only difference being that the constants \(C_1\) and \(C_2\) are defined differently. Thus, the rest of the analysis for the Kraus form also applies to the Hamiltonian dilation.

\end{appendix}
\end{document}